\newcommand{\R}{\mathbb{R}}
\newcommand{\Prob}{\mathbb{P}}
\newcommand{\Copt}{C^{(\alpha)}}
\newcommand{\Copthat}{\widehat{C}^{(\alpha)}}
\newcommand{\Copttilde}{\widetilde{C}^{(\alpha)}}
\newcommand{\Coptloc}{\widehat{C}^{(\alpha)}_{\text{loc}}}
\newcommand{\Copttrans}{\widehat{C}^{(\alpha)}_{\text{trans}}}
\newcommand{\Coptglm}{\widehat{C}^{(\alpha)}_{n, k}}
\newcommand{\ptrue}{p_{\psi}}
\newcommand{\psihat}{\hat{\psi}}
\newcommand{\ptilnon}{\tilde{p}}
\newcommand{\E}{\mathbb{E}}
\newcommand{\A}{\mathcal{A}}
\newcommand{\X}{\mathcal{X}}
\newcommand{\talpha}{t^{(\alpha)}}
\newcommand{\salpha}{s^{(\alpha)}}
\newcommand{\Aaxe}{A^{(\alpha)}_{x,\varepsilon}}
\newcommand{\thatalpha}{\hat{t}^{(\alpha)}}
\newcommand{\ttildealpha}{\tilde{t}^{(\alpha)}}
\newcommand{\inner}[1]{\langle #1 \rangle}
\newcommand{\indicator}[1]{\mathds{1}\left\{ #1 \right\}}
\DeclarePairedDelimiter\ceil{\lceil}{\rceil}
\DeclarePairedDelimiter\floor{\lfloor}{\rfloor}
\DeclareMathOperator{\dom}{dom}
\newtheorem{lem}{Lemma} 
\newtheorem{thm}{Theorem}
\newtheorem{cor}{Corollary} 
\newtheorem{defn}{Definition} 
\newtheorem{prop}{Proposition} 
\title{Efficient and minimal length parametric conformal prediction regions}
\author{Daniel J. Eck$^1$ and Forrest W. Crawford$^{2,3,4,5}$ \\[1em]
\normalsize 1. Department of Statistics, University of Illinois Urbana-Champaign \\
\normalsize 2. Department of Biostatistics, Yale School of Public Health \\
\normalsize 3. Department of Statistics \& Data Science, Yale University \\
\normalsize 4. Department of Ecology \& Evolutionary Biology, Yale University \\
\normalsize 5. Yale School of Management }
\begin{document}

\maketitle

\begin{abstract}
\noindent 
  Conformal prediction methods construct prediction regions for iid 
  data that are valid in finite samples.
  We provide two parametric conformal prediction regions that are applicable 
  for a wide class of continuous statistical models.  
  This class of statistical models includes generalized linear models (GLMs) 
  with continuous outcomes.
  Our parametric conformal prediction regions possesses finite sample validity, 
  even when the model is misspecified, and are asymptotically of minimal length 
  when the model is correctly specified.
  The first parametric conformal prediction region is constructed through binning 
  of the predictor space, guarantees finite-sample local validity and is 
  asymptotically minimal at the $\sqrt{\log(n)/n}$ rate when the dimension $d$ 
  of the predictor space is one or two, and converges at the 
  $O\{(\log(n)/n)^{1/d}\}$ rate when $d > 2$.  
  The second parametric conformal prediction region is constructed by transforming 
  the outcome variable to a common distribution via the probability integral 
  transform, guarantees finite-sample marginal validity, 
  and is asymptotically minimal at the $\sqrt{\log(n)/n}$ rate.  
  We develop a novel concentration inequality for maximum likelihood 
  estimation that induces these convergence rates.  
  We analyze prediction region coverage properties, large-sample 
  efficiency, and robustness properties of four methods for constructing 
  conformal prediction intervals for GLMs: 
  fully nonparametric kernel-based conformal, 
  residual based conformal, 
  normalized residual based conformal, 
  and parametric conformal which uses the assumed GLM density as a conformity 
  measure.  Extensive simulations compare these approaches to standard 
  asymptotic prediction regions.  The utility of the parametric conformal 
  prediction region is demonstrated in an application to interval prediction 
  of glycosylated hemoglobin levels, a blood measurement used to 
  diagnose diabetes. \\[1em]
  \textbf{Keywords:} conformity measure, 
    maximum likelihood estimation, 
    regression, 
    generalized linear models, 
    exponential families, 
    finite sample validity, 
    concentration inequalities
\end{abstract}

\section{Introduction}

\citet{vovk2005springer} introduced conformal prediction to construct 
finite sample valid prediction regions for predictions from nearest-neighbor 
methods, support-vector machines, and sequential classification and both 
linear and ridge regression problems. 
The goal in conformal prediction is to construct a prediction 
region $C_n$ from an iid random sample $Y_1$, $\ldots$, $Y_n$, such 
that the probability of a future observation $Y_{n+1}$ belonging to the region 
$C_n$ exceeds a desired coverage level \citep{shafer2008tutorial}.
That is, given $Y_1,\ldots,Y_n$, we seek a set $C_n=C_n(Y_1,\ldots,Y_n)$ such 
that for $0 < \alpha < 1$,
\begin{equation} \label{setup}
  \Prob(Y_{n+1} \in C_n) \geq 1-\alpha.
\end{equation}
\citet{lei2013distribution} extended the framework of 
\citet{shafer2008tutorial} to provide a framework for which the 
prediction region $C_n$ is also asymptotically of minimal length. 
These prediction regions make use of a conformity measure $\sigma$ which   
measures the agreement of a point $y$ with a probability measure $P$. 
For example, when $y \in \R$, the probability density function $p$ 
corresponding to $P$ is a conformity measure $\sigma$. 
\citet{lei2013distribution} proposed nonparametric kernel density estimation 
of $p$ to construct conformal prediction regions that are asymptotically 
minimal under smoothness conditions on $p$.  Beyond real-valued outcomes, 
conformal methods have been proposed for 
functional data \citep{lei2015conformal},  
nonparametric regression \citep{lei2014distribution}, 
time series data \citep{chernozhukov2018timeseries}, 
high-dimensional regression problems \citep{lei2018distribution}, 
random effects \citep{dunn2018distribution},
causal inference \citep{chernozhukov2018causal},  
machine learning \citep{vovk2005springer, gammerman2007hedging, 
  papa2011neural, vovk2012conditional, vovk2014ridge, 
  balasubramanian2014conformal, johansson2018interpretable, wang2018fast}, 
and quantile regression \citep{romano2019malice, romano2019conformalized, 
  sesia2019conformal}.
The finite sample validity property of conformal prediction regions has broad 
appeal in many scientific domains, including 
astrophysics \citep{ciollaro2018functional},
medical applications \citep{lambrou2009diagnosis, devetyarov2012cancer, 
  eklund2015drug, bosc2019large},
genetics \citep{norinder2018predictinga, norinder2018predictingb},
and chemistry \citep{cortes2018deep, ji2018emoltox, svensson2018maximizing, 
  svensson2018conformal, toccaceli2017conformal}.

However, usefulness of the finite sample validity property of conformal prediction regions in 
regression may be offset by the size of the prediction region. 
Recently, \citet{cortes2019concepts} stated that 
``a major issue in conformal prediction applied to regression is the low efficiency of 
most conformal prediction models\ldots Such large intervals are not informative and thus 
hamper the practical usefulness of conformal prediction'',
calling for the development of non-conformity functions to 
reduce the size of the predicted confidence regions \citep{cortes2019concepts}. 
The parametric conformal prediction regions that we develop in this paper answer the 
call made by \citet{cortes2019concepts} in the context of regression.  
We show that when one uses the parametric density with maximum likelihood estimators 
(MLEs) of model parameters plugged in for modeling parameters as the conformity 
measure, then one can can construct conformal predictions that are asymptotically 
minimal length and our sharp under our proof technique.  
Our parametric conformal prediction regions possess the same finite-sample 
validity guarantees as existing conformal prediction procedures, even under model 
misspecification. 

In a recent paper, \citet{chernozhukov2019distributional} proposed parametric 
conformal prediction regions based on regression models for conditional 
distributions, establishing the conditional validity of the resulting 
prediction intervals under consistent estimation of the conditional 
distributions.  Their proposed prediction method is based on modeling the 
entire  conditional distribution (rather than the conditional mean) and 
thereby generates conditionally valid prediction sets that fully utilize the 
information in the covariates \citep{chernozhukov2019distributional}.  
In another recent paper, \citet{izbicki2019distribution} proposed 
parametric conformal prediction regions that are asymptotically minimal 
length, without providing explicit convergence rates.

We build upon the frameworks outlined by 
\citet{chernozhukov2019distributional} and  
\citet{izbicki2019distribution}  
by providing two methods for constructing parametric 
conformal prediction regions which are asymptotically of minimal length with 
desirable rates of convergence.  
Both methods are constructed by using the maximum likelihood estimator as a 
plugin estimate of unknown modeling parameters in the underlying density and 
distribution functions.  The first of these is conditionally valid in local 
regions of the predictor space \citep[pg. 72]{lei2014distribution}, and 
is asymptotically conditionally valid for point predictions. 
The rate of convergence for this parametric conformal prediction region is 
$O\{\sqrt{\log(n)/n}\}$ when the dimension of predictor space is $d \leq 2$ 
and is $O\{(\log(n)/n)^{1/d}\}$ when $d > 2$ where the predictor region of 
interest shrinks at a suitable rate.  The construction of this conformal 
prediction region follows the binning technique in \citet{lei2014distribution}. 
The $O\{(\log(n)/n)^{1/d}\}$ rate is a consequence of how quickly the 
predictor region of interest shrinks.  
The second parametric conformal prediction region is obtained by first 
transforming the outcome variable to a ``common'' distribution via the 
probability integral transform, and conducting 
conformal prediction with respect to this common distributions to obtain a 
finite sample marginally valid prediction region, and then backtransforming this 
region to the scale of the original outcome variables.  
We leverage the estimated density to obtain estimates of the quantiles 
corresponding to the minimal length $(1-\alpha)\times 100\%$ prediction region 
before we backtransform to the scale of the response.
We show that this parametric conformal prediction region is asymptotically minimal 
at rate $O\{\sqrt{\log(n)/n}\}$. 
The convergence rates for both parametric conformal regions are appreciably 
faster than that presented by \cite{lei2014distribution} which gave rates of 
$\left\{\log(n)/n\right\}^{1/(d + 3)}$ for nonparametric conformal prediction 
regions in the same regression context.  

\citet{chernozhukov2019distributional} developed a similar transformation  
conformal prediction region that is motivated by the probability integral transform.  
Their conformal prediction technique is based on the $\alpha/2$ and $1-\alpha/2$ 
quantiles of the underlying data generating model.  This technique will return 
asymptotically minimal length prediction regions when the data generating model is 
symmetric.  However, this asymptotic property will not hold for non-symmetric 
distributions.  Moreover, \citet{chernozhukov2019distributional} did not provide 
convergence rates for their transformation based conformal prediction region. 
\citet{izbicki2019distribution} developed a similar binned and transformation 
parametric conformal prediction regions which are asymptotically minimal length.  
However, \citet{izbicki2019distribution} did not provide convergence rates for 
either conformal prediction region. Furthermore, their transformation conformal 
prediction region may obtain larger regions than necessary when the underlying 
distribution is neither symmetric nor unimodal \citep[Section 2]{izbicki2019distribution}.
We show that maximum likelihood estimation within a class of models that possess 
subexponential tail behavior and additional mild regularity conditions is 
necessary for our convergence rates to hold.  Furthermore, the asymptotic 
properties for both of our conformal prediction regions hold for nonsymmetric 
and multimodal distributions. 

We motivate our parametric conformal prediction techniques through the 
application of GLMs with continuous outcomes.  
GLMs are widely used regression models for outcomes that follow an exponential 
family distribution \citep{mccullagh1989generalized}.  GLMs are popular in empirical 
research in the biomedical and social sciences; procedures for fitting GLMs is 
incorporated into every major statistical software package.  Often predictive 
inference from a fitted GLM serves the primary scientific goals of the 
analysis.  Point predictions from these models are usually combined with 
variance estimates from the bootstrap or delta method to construct prediction 
intervals.  For example, interval predictions on the outcome scale may be 
constructed using the \texttt{predict.glm} function in R \citep{R-core} 
to construct Wald type intervals for either the mean response or its distribution. 
We show that our parametric conformal regions provide a useful alternative to such 
approaches.

In an extensive simulation study, we analyze marginal, local, and conditional 
prediction interval coverage, and large-sample efficiency of four methods 
for constructing conformal prediction intervals from fitted GLMs: 
fully nonparametric kernel-based prediction \citep{lei2014distribution}, 
prediction for residuals and locally weighted residuals 
\citep{lei2018distribution}, 
and prediction using the parametric density as conformity measure.   
We find that when sample sizes are moderate -- large enough 
to estimate regression coefficients reasonably precisely, but before 
asymptotic arguments guarantee good conditional coverage -- conformal 
prediction methods outperform traditional methods in terms of 
finite sample marginal, local, and conditional coverage. 
Importantly, we find that parametric conformal prediction regions perform 
extremely well under mild model misspecification, they are calibrated to 
give valid finite sample coverage and they are not too large for meaningful  
inference in this setting.
We demonstrate the utility of conformal prediction methods for GLMs in an 
application to diabetes diagnosis via interval prediction of glycosylated 
hemoglobin levels of subjects participating in a
community-based study \citep{willems1997prevalence}.

\section{Background}

\subsection{Conformal prediction for continuous outcomes}
\label{sec:background}

The basic intuition underlying the conformal prediction method is that, 
given an independent sample $Y_1, \ldots, Y_n$ from distribution $P$ 
defined on $\R^r$, 
the iid hypothesis that $(Y_1$, $\ldots$, $Y_n$, $Y_{n+1}) \sim P$ using 
observation $(Y_1$, $\ldots$, $Y_n$, $y)$ is tested for each $y \in \R^r$, 
and a prediction region $\Copthat$ is created by inverting this test 
\citep{lei2013distribution, vovk2005springer, shafer2008tutorial}.
More formally, suppose that $(Y_1$, $\ldots$, $Y_n$, $Y_{n+1}) \sim P$. 
Let $\widehat{P}_{n+1}$ be the corresponding empirical distribution and note 
that $\widehat{P}_{n+1}$ is symmetric in its $n + 1$ arguments.
The conformity rank is 
$
  \pi_{n+1,i} = (n+1)^{-1}\sum_{j=1}^{n+1} 
    \indicator{ \sigma(\widehat{P}_{n+1},Y_j) 
    \leq \sigma(\widehat{P}_{n+1},Y_i) }
$
where $\sigma(\cdot$, $\cdot)$ is a conformity measure which 
measures the ``agreement'' of a point $y$ with respect to a distribution $P$ 
\citep{lei2013distribution}.  Informally, a conformity measure should be large
when there is agreement between $y$ and $P$.  
One important example of a conformity measure is the density function $p$ of $P$ 
(when it exists); this idea is used in Section~\ref{sec:paraconf}. 
Exchangeability of $(\pi_{n+1,i} : 1 \leq i \leq n + 1)$ follows from 
exchangeability of the random variables 
$\{\sigma(\widehat{P}_{n+1}, Y_i) : 1 \leq i \leq n+1\}$. 
Define $\pi(y) = \pi_{n+1,n+1}|_{Y_{n+1}=y}$ as the random variable 
$\pi_{n+1,n+1}$ evaluated at $Y_{n+1}=y$. The conformal prediction region 
for $Y_{n+1}$ is 
$
  \Copthat(Y_1, \ldots, Y_n) 
    = \left\{y : \pi(y) \geq \widetilde{\alpha}\right\}, 
$
where 
$
  \widetilde{\alpha} = \floor{(n+1)\alpha}/(n+1). 
$
Then by construction, 
$
  \Prob(\pi(Y_{n+1}) \geq \widetilde{\alpha}) \geq 1 - \alpha,
$
which implies that 
$
  \Prob\left\{Y_{n+1} \in \Copthat(Y_1, \ldots, Y_n)\right\} \geq 1 - \alpha. 
$
Any conformity measure $\sigma$ can be used to construct prediction regions 
with finite sample validity. 

\begin{lem} \label{valid} 
\citep{lei2013distribution}.
Suppose $Y_1$, $\ldots$, $Y_n$, $Y_{n+1}$ is an independent random 
sample from $P$. Then $\Prob\left\{Y_{n+1} \in \Copthat\right\} \geq 1 - \alpha$,
for all probability measures $P$, and hence $\Copthat$ is valid.
\end{lem}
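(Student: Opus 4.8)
The plan is to reduce the coverage statement to a probability bound on the conformity rank of the final observation and then exploit exchangeability. First I would note that, by the definition of $\Copthat$, the event $\{Y_{n+1} \in \Copthat\}$ coincides with $\{\pi(Y_{n+1}) \geq \widetilde{\alpha}\}$, where $\pi(Y_{n+1}) = \pi_{n+1,n+1}$ is the normalized conformity rank of $Y_{n+1}$ among all $n+1$ observations. Thus it suffices to show $\Prob(\pi_{n+1,n+1} \geq \widetilde{\alpha}) \geq 1 - \alpha$, equivalently $\Prob(\pi_{n+1,n+1} < \widetilde{\alpha}) \leq \alpha$.

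Next, I would invoke exchangeability. Since $Y_1, \ldots, Y_{n+1}$ are iid they are exchangeable, and because $\widehat{P}_{n+1}$ is a symmetric function of its $n+1$ arguments, the conformity scores $\sigma(\widehat{P}_{n+1}, Y_i)$ are exchangeable; consequently the ranks $(\pi_{n+1,i} : 1 \leq i \leq n+1)$ are exchangeable, as already recorded above. In particular $\pi_{n+1,n+1}$ has the same marginal law as each $\pi_{n+1,i}$, so I can write $\Prob(\pi_{n+1,n+1} < \widetilde{\alpha}) = \frac{1}{n+1}\E\bigl[\#\{i : \pi_{n+1,i} < \widetilde{\alpha}\}\bigr]$, converting the tail probability into an expected count that can be bounded deterministically.

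The key combinatorial step is to control the count $N = \#\{i : \pi_{n+1,i} < \widetilde{\alpha}\}$ on every sample point. Writing $\sigma_i = \sigma(\widehat{P}_{n+1}, Y_i)$ and $R_i = (n+1)\pi_{n+1,i} = \sum_{j=1}^{n+1} \indicator{\sigma_j \leq \sigma_i}$ for the integer rank, and setting $k = \floor{(n+1)\alpha}$ so that $\widetilde{\alpha} = k/(n+1)$, the condition $\pi_{n+1,i} < \widetilde{\alpha}$ becomes $R_i \leq k-1$. Letting $S = \{i : R_i \leq k-1\}$ with $|S| = N \geq 1$, I would pick $i_0 \in S$ attaining the largest score $\sigma_{i_0}$; then $\sigma_i \leq \sigma_{i_0}$ for every $i \in S$, so $R_{i_0} = \sum_j \indicator{\sigma_j \leq \sigma_{i_0}} \geq N$, while $i_0 \in S$ forces $R_{i_0} \leq k-1$, giving $N \leq k-1$. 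This deterministic bound is exactly what lets the argument tolerate ties in the scores, which is the one place where care is genuinely needed: without the $\leq$ convention in the rank and this counting bound, one would be limited to the clean ``uniform on $\{1,\ldots,n+1\}$'' heuristic that breaks when conformity scores coincide with positive probability.

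Finally I would combine the pieces. Using $N \leq k-1$ and $k \leq (n+1)\alpha$, I get $\Prob(\pi_{n+1,n+1} < \widetilde{\alpha}) = \E[N]/(n+1) \leq (k-1)/(n+1) \leq \alpha - 1/(n+1) < \alpha$ (the degenerate case $k = 0$ gives $\widetilde{\alpha} = 0$, whence $\pi_{n+1,n+1} \geq 1/(n+1) > 0$ almost surely and coverage equals one). Therefore $\Prob(Y_{n+1} \in \Copthat) = \Prob(\pi_{n+1,n+1} \geq \widetilde{\alpha}) \geq 1 - \alpha$. Since no property of $P$ beyond the exchangeability induced by the iid assumption was used, the bound holds for every probability measure $P$, establishing validity of $\Copthat$.
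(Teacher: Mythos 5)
Your proposal is correct and follows the same route the paper (which defers the proof to \citet{lei2013distribution} and the ``by construction'' remark in Section 2.1) relies on: exchangeability of the conformity ranks $(\pi_{n+1,i})$, reduction of the coverage probability to the marginal law of $\pi_{n+1,n+1}$, and a deterministic count of how many indices can have rank below $\widetilde{\alpha}$. The only difference is that you spell out the tie-handling combinatorial bound $N \leq \floor{(n+1)\alpha} - 1$ and the degenerate case $\floor{(n+1)\alpha}=0$, which the paper leaves implicit.
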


Figure~\ref{Fig:schematic} shows schematically how this conformal 
prediction region is constructed.

\begin{figure}[t!]
\begin{center}
\includegraphics[width=0.49\textwidth]{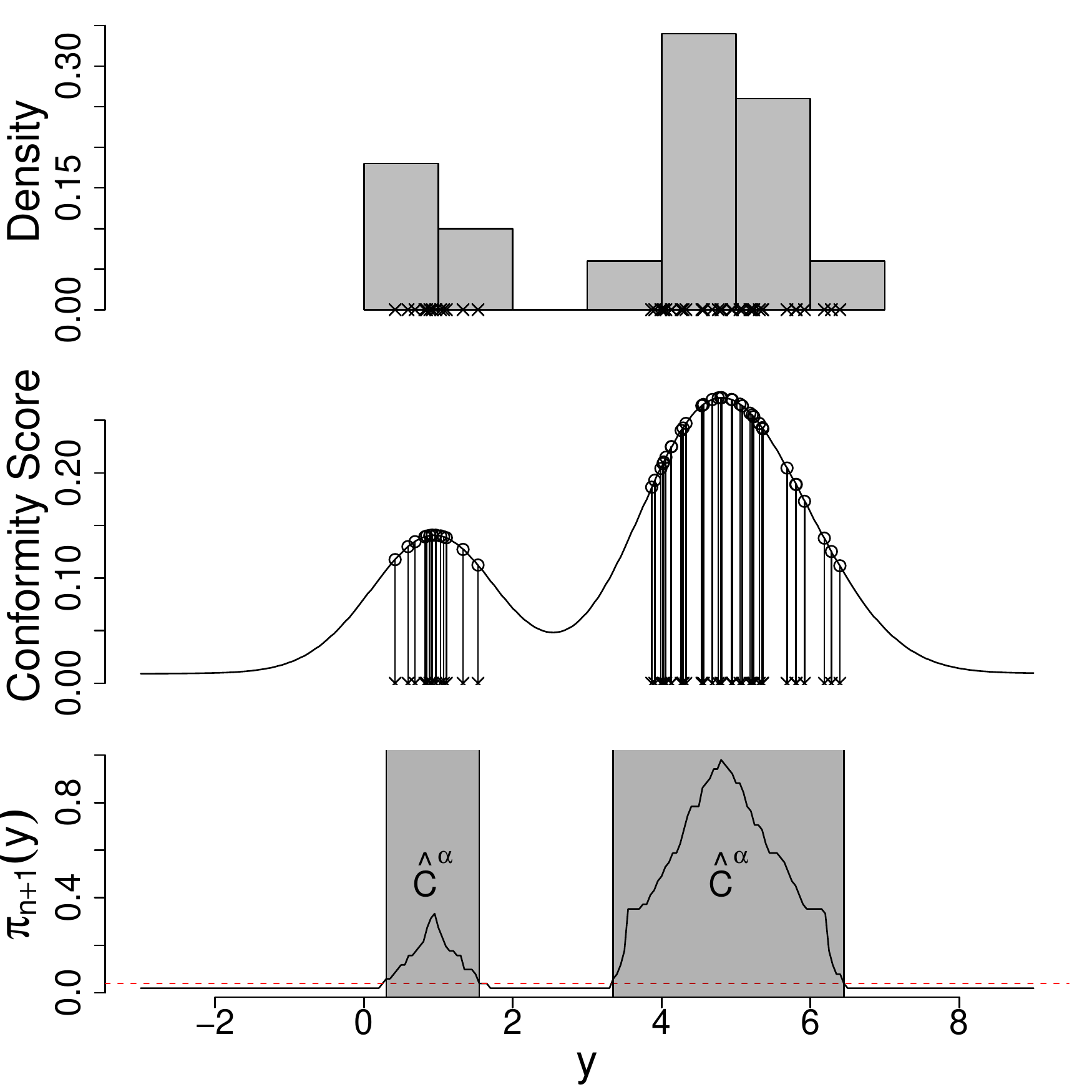}
\includegraphics[width=0.49\textwidth]{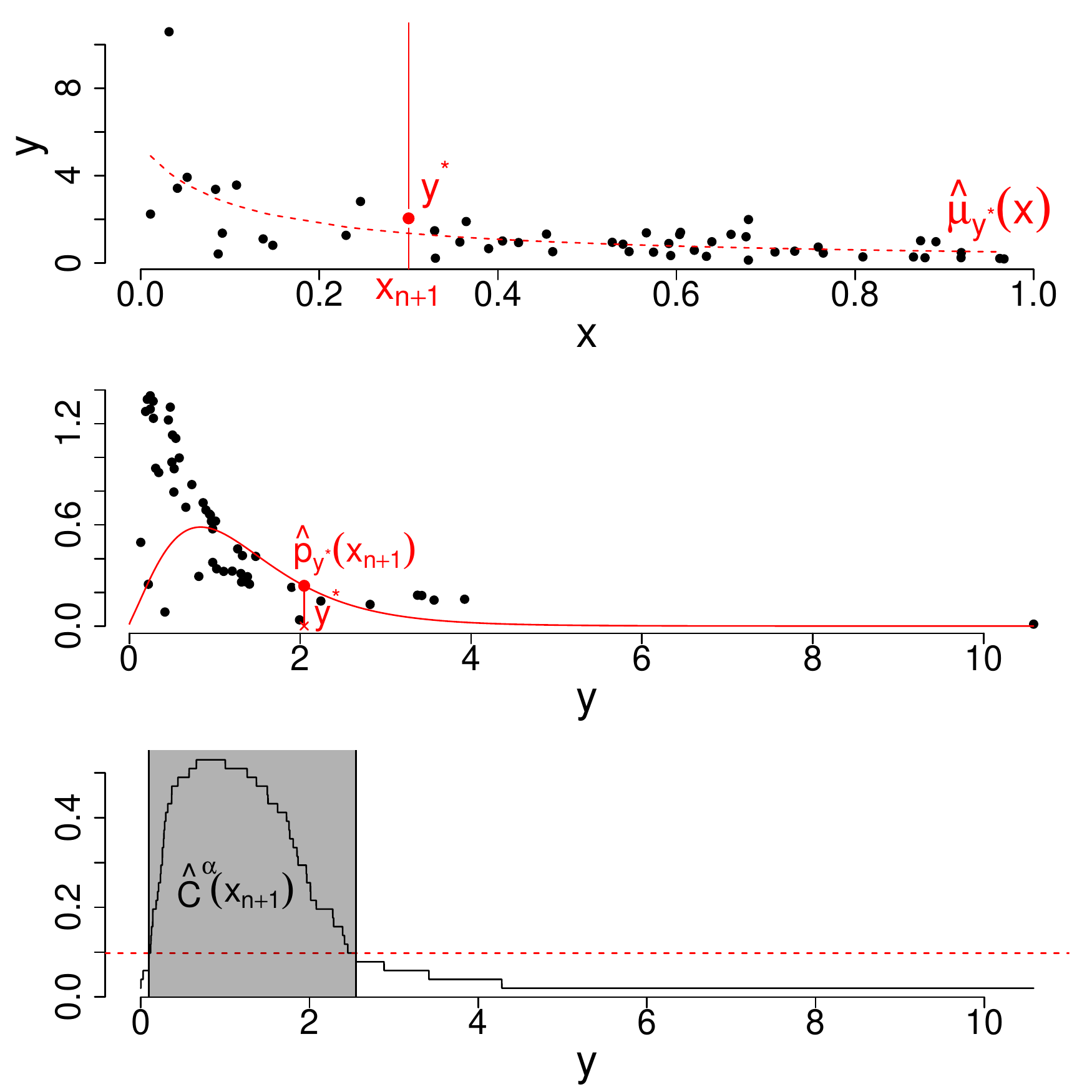}
\end{center}
\caption{ How conformal prediction regions are constructed. 
  The left panel shows conformal prediction for a univariate outcome $y$: 
  the top row shows the data points and corresponding histogram; 
  middle row shows conformity scores for a nonparametric kernel smoother; 
  bottom panel shows $\pi_{n+1,i}$ for each $i$, and the conformal prediction 
  set $\widehat{C}^{(\alpha)}$. 
  The right panel show conformal prediction for a univariate regression with 
  predictor $x$ and outcome $y$: the top row shows the $(x_i,y_i)$ pairs, 
  and a regression estimate $\hat\mu_y(x)$ computed using the additional point 
  $(x_{n+1}, y^*)$; middle panel shows the conformity scores obtained by using 
  the estimated density as the conformity measure; bottom panel shows 
  $\pi(y)$ at $x_{n+1}$, the proportion of density scores lower than the density 
  score at candidate $y$, and the conformal prediction set 
  $\widehat{C}^{(\alpha)}(x_{n+1})$. }
\label{Fig:schematic}
\end{figure}

\subsection{Notions of finite sample validity in regression}

Several notions of finite sample validity exist in the context of 
regression.  Suppose that we have an iid 
sample $(X_i,Y_i)$, $i = 1$, $\ldots$, $n$ where 
the predictor is $X \in \R^m$ and $Y\in\R$ is the outcome.  
We will suppose that $d$ of the $X$'s are main effects, 
where $d \leq m$, and the other $m - d$ terms in $X$ are functions of the 
$d$ main effects.  We will suppose that $X$ has support $\X = [0, 1]^d$.

\begin{defn}[marginal validity]
Let $0 < \alpha < 1$ be a desired error tolerance.  
Let $(X_i,Y_i)$, $i = 1$, $\ldots$, $n$ be an iid 
sample from a continuous distribution $P$.   
The prediction region $\Copthat$ has finite sample 
marginal validity if 
$
  \Prob\left\{Y_{n+1} \in \Copthat(X_{n+1})\right\} \geq 1 - \alpha,
$
where $(X_{n+1},Y_{n+1}) \sim P$.
\end{defn}

Finite sample marginal validity is guaranteed for conformal prediction 
methods by construction.  
Finite sample marginal validity alone may not be desirable when 
variability of the outcome is not constant across the support of the 
predictor.   
A second notion of finite sample validity arises by considering coverage of 
the prediction region conditional on a particular value of the predictor.

\begin{defn}[conditional validity]
Let $0 < \alpha < 1$ be a desired error tolerance.  
Let $(X_i,Y_i)$, $i = 1$, $\ldots$, $n$ be an iid 
sample from a continuous distribution $P$.   
The prediction region $\Copthat$ has finite sample 
conditional validity at $x$ when 
$
  \Prob\left\{Y_{n+1} \in \Copthat(x)|X_{n+1} = x\right\} \geq 1 - \alpha, 
$
where $(X_{n+1},Y_{n+1}) \sim P$.
\end{defn}

However, conditional validity at every $x$ is unattainable if we insist that 
$\Copthat$ is asymptotically of minimal length.  Let $p(y|x)$ be the 
conditional density corresponding to regression of $y$ on $x$.
Define $\salpha_x$ to be the value such 
that $P(Y_{n+1} \in \{y:p(y|x) \geq \salpha_x\}|X = x) = 1 - \alpha$  and define  
$
  C(x) = \left\{y: p(y|x) \geq \salpha_x\right\}
$
to be the optimal conditional prediction region (i.e., the minimal length 
prediction region).  
Let $\nu$ be Lebesgue measure and let $\triangle$ denote the symmetric set 
difference operator.  Then there does not exist a finite sample conditional 
valid prediction region $\Copthat(x)$ such that 
$
  \sup_{x \in \X}\nu\left\{\Copthat(x) \triangle C(x)\right\} 
    \overset{P}{\to} 0
$
holds, a result due to Lemma 1 of \citet{lei2014distribution}.  
A third notion of finite sample validity relaxes the stringent requirement of 
conditional validity at every possible $x$. 

\begin{defn}[local validity]
Let $0 < \alpha < 1$ be a desired error tolerance.  
Let $(X_i,Y_i)$, $i = 1$, $\ldots$, $n$ be an iid 
sample from a continuous distribution $P$.   
Let $\A = \{A_k: k \geq 1\}$ be a partition of $\X$.  
We say that the prediction region $\Copthat$ has finite sample 
local validity when 
$
  \Prob\left\{Y_{n+1} \in \Copthat(X_{n+1})|X_{n+1} \in A_k\right\} 
    \geq 1 - \alpha, 
$
for all $A_k \in \A$ where $(X_{n+1},Y_{n+1}) \sim P$.
\end{defn}

Local validity offers a bridge between marginal validity -- which is 
inappropriate in the presence of heterogeneity -- and conditional validity 
-- which is unattainable when we require that the prediction interval is 
asymptotically minimal \citep{lei2014distribution}.  The nonparametric 
conformal prediction region in Section~\ref{sec:nonparaconf} satisfies 
finite sample local validity and is asymptotically conditional valid 
\citep{lei2014distribution}.  In Section~\ref{sec:paraconf} we show 
that the parametric conformal prediction region also satisfies finite sample 
local validity and is asymptotically conditional valid at rate faster 
than its nonparametric counterpart.

\section{Conformal prediction for parametric models}
\label{sec:paraconf}

We now introduce parametric conformal prediction regions for a wide class 
of parametric families with continuous outcomes. 
We show that this class of parametric families includes continuous regular 
full exponential families, and hence GLMs by extension.
Two parametric conformal prediction regions are provided.  
The first of these guarantees finite-sample local validity using a similar 
binning technique as \citet{lei2014distribution}.  The second 
guarantees finite-sample marginal validity via transformation to a 
``common'' distribution.  

Suppose, for simplicity, that the 
support of the predictors is $\X = [0, 1]^d$.  Let $(X, Y)\sim P$ 
and let $(X_i, Y_i)\sim P$, $i = 1$, $\ldots$, $n$ be an iid 
sample where $P$ is a continuous distribution, 
$Y \in \R$, $X \in \R^m$ with $d \leq m$ main effects. 
We will denote the true conditional density of the regression 
model as $p_\psi(\cdot|x)$ and define $\psi \in \R^r$.  For each $x$, we 
define $\talpha_x$ to be the point such that 
$\Prob\left(\{y:p_\psi(y|x) \geq \talpha_x\}|X=x\right) = 1 - \alpha$ 
for $0 < \alpha < 1$.  The optimal or minimal length prediction region 
at $x$ is $\Copt_P(x) = \{y : p_\psi(y|x) \geq \talpha_x\}$.
Define the moment generating function (MGF) $M_{Y,x,\psi}(t)$ corresponding 
to the conditional density $p_\psi(\cdot|x)$.  The parameter 
space for this parametric family is 
$$
  \aleph_\X = \{\psi: \nabla_{\psi} \log p_\psi(y|x) = O(y),\; 
    M_{Y,x,\psi}(t)\; \text{exists for all}\; t\; 
    \text{in a neighborhood of}\; 0,\; \text{all}\; x \in \X\; \}.
$$

\subsection{Local validity via binning}

Let $\A = \{A_k : k \geq 1\}$ be a partition $\mathcal{X}$ and define the 
rates $r_n = O(\sqrt{\log(n)/n})$ and 
$z_n = O\{(\log(n)/n)^{1/d}\}$. 
The number of elements in the partition $\A$ increases at rate 
$1/r_n$ when $d \leq 2$ and 
$1/z_n$ when $d > 2$.
When we restrict the partition to be formed of equilateral cubes, we let 
the widths of these cubes decrease at rate $r_n$ or $z_n$.  
Let $p_\psi(y|x)$ be the true conditional density function and let 
$\hat{p}_\psi(\cdot|\cdot)$ be the estimated density fit 
using the original data and define $\hat{p}_\psi^{(x,y)}(\cdot|\cdot)$ as the 
estimated density fit to the augmented data 
$\{(X_1,Y_1),\ldots,(X_n,Y_n),(x,y)\}$.  We will use this estimated density 
as our conformity measure.  The local conformal prediction region is then
\begin{equation} \label{confregion}
  \Coptglm(x) = \left\{y : \frac{1}{n_k + 1}\sum_{i=1}^{n + 1}
    \indicator{X_i \in A_k}\indicator{\hat{p}_\psi^{(x,y)}(Y_i|X_i) 
      \leq \hat{p}_\psi^{(x,y)}(y|x)} 
    \geq \widetilde{\alpha}_k\right\},
\end{equation}
where $\widetilde{\alpha}_k = \floor{(n_k + 1)\alpha}/(n_k+1)$ for 
$0 <\alpha <1$.  

A schematic of how this conformal 
prediction region is constructed is displayed in the right-hand side of 
Figure~\ref{Fig:schematic}.  We now establish local validity for 
$\Coptglm$.

\begin{lem} \label{lem:valid}
Let $\A$ be a partition of $\X$ and let $\Coptglm(x)$ be as defined in 
\eqref{confregion}. 
Then $\Coptglm(x)$ is finite sample locally valid for $x \in A_k \in \A$. 
\end{lem}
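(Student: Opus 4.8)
The plan is to reduce the claim to the finite-sample validity argument of Lemma~\ref{valid}, carried out conditionally on membership in the bin $A_k$. Fix $A_k \in \A$ and condition throughout on the event $\{X_{n+1} \in A_k\}$, writing the test point as $(x,y) = (X_{n+1},Y_{n+1})$. The structural fact I would isolate first is that $\hat{p}_\psi^{(x,y)}(\cdot|\cdot)$, evaluated at the test point, is fit to the \emph{unordered} collection $\{(X_1,Y_1),\dots,(X_{n+1},Y_{n+1})\}$ and therefore coincides with a single symmetric estimate based on all $n+1$ pairs, in which the index $n+1$ plays no distinguished role. Defining the conformity scores $\sigma_i = \hat{p}_\psi^{(x,y)}(Y_i|X_i)$ for $1 \le i \le n+1$, this symmetry makes the map from the data tuple to the score tuple permutation-equivariant, so the sequence of pairs $\{(\sigma_i,\indicator{X_i \in A_k}) : 1 \le i \le n+1\}$ is exchangeable whenever the underlying sample is iid, exactly as the empirical measure $\widehat{P}_{n+1}$ is symmetric in its arguments in Section~\ref{sec:background}.

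Next I would localize. Let $S = \{i \le n+1 : X_i \in A_k\}$, so that on $\{X_{n+1}\in A_k\}$ we have $n+1 \in S$ and $\abs{S} = n_k + 1$. The central step is to show that, conditional on the random index set $\{i : X_i \in A_k\} = S$, the scores $(\sigma_i)_{i \in S}$ are (fully) exchangeable. This follows from a permutation-stabilizer argument: the iid law of the sample is invariant under any permutation $\tau$ of $\{1,\dots,n+1\}$ that maps $S$ onto itself, the conditioning event $\{i : X_i \in A_k\} = S$ is $\tau$-invariant, and the symmetric estimate $\hat{p}_\psi^{(x,y)}$ is invariant under \emph{all} permutations; hence the conditional law of $(\sigma_i)_{i\in S}$ is invariant under relabelings within $S$, which is exactly exchangeability of this sub-collection.

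Given this within-bin exchangeability, the remainder is the standard rank computation. The quantity in \eqref{confregion}, evaluated at $y = Y_{n+1}$, is the localized conformity rank $\pi_k(Y_{n+1}) = (n_k+1)^{-1}\sum_{i\in S}\indicator{\sigma_i \le \sigma_{n+1}}$, so $(n_k+1)\pi_k(Y_{n+1})$ is the rank of $\sigma_{n+1}$ among $\{\sigma_i : i \in S\}$. By exchangeability this rank is super-uniform on $\{1,\dots,n_k+1\}$, giving $\Prob\{\pi_k(Y_{n+1}) \ge \widetilde\alpha_k \mid \{i:X_i\in A_k\}=S\} \ge 1-\alpha$ for the choice $\widetilde\alpha_k = \floor{(n_k+1)\alpha}/(n_k+1)$, precisely as in Lemma~\ref{valid}. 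Averaging this conditional bound over all admissible index sets $S$ containing $n+1$ then yields $\Prob\{Y_{n+1} \in \Coptglm(X_{n+1}) \mid X_{n+1}\in A_k\} \ge 1-\alpha$, which is the asserted local validity.

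I expect the main obstacle to be the localization step, namely arguing cleanly that conditioning on the random bin $A_k$ does not destroy the exchangeability the rank argument requires; the stabilizer argument above is the crux, and care is needed because each $\sigma_i$ depends on the entire augmented sample rather than only on the points inside $A_k$, so the invariance must be read off from the symmetry of $\hat{p}_\psi^{(x,y)}$ rather than from any bin-local recomputation. Possible ties among the conformity scores are a minor secondary issue, handled by the non-strict inequality $\le$ together with the floor in $\widetilde\alpha_k$, which only renders the coverage bound conservative.
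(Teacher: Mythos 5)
Your proposal is correct and follows essentially the same route as the paper's proof: the paper likewise conditions on $X_{n+1}\in A_k$ and on the identity of the indices falling in $A_k$, and then invokes exchangeability of the within-bin conformity scores $(\sigma_{i_1},\ldots,\sigma_{i_{n_k}},\sigma_{n+1})$ coming from the symmetry of the augmented-data MLE fit. You simply make explicit the permutation-stabilizer justification of that exchangeability and the final rank computation, which the paper leaves implicit by citing Proposition 2 of \citet{lei2014distribution}.
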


\begin{proof}
The proof follows from the proof of \cite[Proposition 2]{lei2014distribution}. 
Fix $k$ and let 
$\{i_1$, $\ldots$, $i_{n_k}\} = \{i:1\leq i \leq n$, $X_i\in A_k\}$. 
Let $(X_{n+1}$, $Y_{n+1})\sim P$ be another independent sample. Let 
$i_{n_{k+1}} = n+1$ and $\sigma_{i_j} = \hat{p}_\psi^{(x,y)}(Y_{i_j}|X_j)$ 
for all $1 \leq j \leq n_k + 1$.  Then conditioning on $X_{n+1} \in A_k$ and 
$(i_1$, $\ldots$, $i_{n_k})$, the sequence 
$(\sigma_{i_1}$, $\ldots$, $\sigma_{i_{n_k}}$, $\sigma_{i_{n_k + 1}})$ is 
exchangeable. 
\end{proof}

We now establish that the parametric conformal prediction region $\Coptglm$ 
is asymptotically of minimum length over $\X$.  

\noindent{\bf Assumption 1} (Assumption 1 (a) of \cite{lei2014distribution}).    
Let $\X$ be the support of $X$ and assume that $\X = [0,1]^d$. 
The marginal density of $X$ satisfies $0 < b_1 \leq p_X(x) \leq b_2 < \infty$ 
for all $x \in \X$. 

\noindent{\bf Assumption 2}.    
The conditional density $p_\psi(\cdot|x)$ is Lipschitz in both $x$ and $\psi$, 
there exists some $L_1,L_2 \in \R$ 
such that,
$
  \|\ptrue(\cdot|x_1) - \ptrue(\cdot|x_2)\|_{\infty} 
    \leq L_1\|x_1 - x_2\|,
$
and
$
  \|p_{\psi_1}(\cdot|x) - p_{\psi_2}(\cdot|x)\|_{\infty} 
    \leq L_2\|\psi_1 - \psi_2\|. 
$

\noindent{\bf Assumption 3} (smoothness condition of \citet{lei2014distribution}).
There exists positive constants $\varepsilon, \gamma, c_1$, and $c_2$ such that, 
for all $x \in \X$, and all $\psi \in \aleph_\X$, 
$$
  c_1\varepsilon^\gamma 
    \leq \Prob[\{y:|p_\psi(y|x) - t_x^{(\alpha)}|X=x\}]
    \leq c_2\varepsilon^\gamma,
$$ 
for all $\varepsilon \leq \varepsilon_0$. Moreover, 
$\inf_x t_x^{(\alpha)} \geq t_0 > 0$.

\begin{thm}
Let $Y|X=x$ be a random variable with conditional density 
$p_\psi(\cdot|x)$ and parameter space $\aleph_\X$.  Assume that 
$p_\psi(\cdot|x)$ is twice differentiable in $\psi$ and satisfies 
Assumptions 2 and 3.  Let $(X_1,Y_1)$, $\ldots$, $(X_n,Y_n)$ be independent 
and identically distributed copies of $(X,Y)$.  Let $\psi \in \R^r$.  
Let $\psihat$ be the MLE of $\psi$.  Augment the sample data with a new 
point $(x,y)$, and let $\psihat^{(x,y)}$ be the MLE of $\psi$ with respect 
to the augmented data.  
Let $\hat{p}^{(x,y)}_\psi(\cdot|x)$ be the conditional density with 
$\psihat^{(x,y)}$ plugged in for $\psi$.  
Suppose that Assumption 1 holds.  Let $0 < \alpha < 1$ and $\Coptglm$ be the 
prediction band given by \eqref{confregion}.  Then, for a given $\lambda > 0$, 
there exists a numerical constant $\zeta_{\lambda}$ such that 
\begin{equation} 
  \Prob\left[\sup_{x \in \X} 
    \nu\left\{\Coptglm(x) \triangle C_P^{(\alpha)}(x)\right\} 
      \geq  \zeta_{\lambda}(z_n \vee r_n) \right] 
  = O\left(n^{-\lambda}\right),
\label{mainresult}
\end{equation}
where $r_n = O(\sqrt{\log(n)/n})$.
\label{thm:correctmodel}
\end{thm}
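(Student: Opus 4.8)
The plan is to bound the symmetric difference in \eqref{mainresult} by decomposing it into two errors: the error in estimating $\ptrue(\cdot\mid x)$ by the augmented maximum likelihood plug-in, and the error in the empirical in-bin content level that implicitly fixes the threshold. Throughout, fix the bin $A_k$ containing $x$ and write the conformal region as $\{y:\hat p^{(x,y)}_\psi(y\mid x)\ge \hat t^{(\alpha)}_{x,k}\}$ up to the discretization in $\widetilde\alpha_k$, while $\Copt_P(x)=\{y:\ptrue(y\mid x)\ge \talpha_x\}$. The two-sided margin condition (Assumption~3) is the device that converts these perturbations into a Lebesgue bound on the symmetric difference: the lower bound $c_1\varepsilon^\gamma\le \Prob[\{y:\abs{\ptrue(y\mid x)-\talpha_x}\le\varepsilon\}\mid X=x]$ makes the threshold well-identified, so a content error $\eta_n$ forces a threshold error of order $\eta_n^{1/\gamma}$, while the matching upper bound then gives a boundary band of $P$-measure $O(\eta_n)$ and, since $\ptrue\ge t_0>0$ there, a Lebesgue measure of the same linear order; the two exponents $\gamma$ cancel, which is precisely what yields a rate linear in the errors rather than $\varepsilon^\gamma$. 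Combined with a uniform sup-norm bound $\Delta_n$ on the density error, this gives $\nu\{\Coptglm(x)\triangle\Copt_P(x)\}=O(\Delta_n\vee\eta_n)$ uniformly in $x$, so the task reduces to high-probability, uniform-in-$x$ bounds on $\Delta_n$ and $\eta_n$.

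For the density error I would invoke the concentration inequality for the MLE announced in the introduction. Because $\psi\in\aleph_\X$ forces the conditional moment generating function to exist near $0$ and the score to satisfy $\nabla_\psi\log\ptrue(y\mid x)=O(y)$, the summands of the score and of the observed information are subexponential, and a Bernstein-type argument gives $\Prob(\norm{\psihat-\psi}\ge C\sqrt{\log(n)/n})=O(n^{-\lambda})$ for a suitable $C$. Assumption~2 then transfers this to $\norm{\hat p_\psi(\cdot\mid x)-\ptrue(\cdot\mid x)}_\infty\le L_2\norm{\psihat-\psi}$, uniformly in $x$. The one genuinely new ingredient is passing from $\psihat$ to the augmented estimator $\psihat^{(x,y)}$ \emph{uniformly over the candidate point} $(x,y)$: I would show by a stability/influence-function argument that adding a single observation among $n+1$ perturbs the estimating equation by a term governed by the score at the new point, so that $\norm{\psihat^{(x,y)}-\psihat}=O(y/n)$; since only candidate points near the prediction region matter (the margin condition keeps $\talpha_x$ and hence the relevant $y$ away from the extreme tails), a further subexponential tail bound confines this to a high-probability band where it is dominated by $r_n$.

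For the content error I would work bin by bin. Within $A_k$ the rank in \eqref{confregion} is the empirical distribution function of the scores $\hat p^{(x,y)}_\psi(Y_i\mid X_i)$ evaluated at $\hat p^{(x,y)}_\psi(y\mid x)$. I would first replace these by $\ptrue(Y_i\mid X_i)$ at the cost $\Delta_n$, then replace the conditioning event $X_i\in A_k$ by $X_i=x$ at the cost of the binning bias, controlled by Assumption~2 via $\norm{\ptrue(\cdot\mid X_i)-\ptrue(\cdot\mid x)}_\infty\le L_1\norm{X_i-x}$, which is $O(r_n)$ or $O(z_n)$ according to the cube width. A Dvoretzky--Kiefer--Wolfowitz bound for the empirical distribution function, taken uniformly over the $O(1/r_n)$ (respectively $O(1/z_n)$) bins and combined with Assumption~1 to guarantee $n_k\asymp n\,\nu(A_k)$, controls the stochastic fluctuation. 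Balancing the bias against this fluctuation produces the dichotomy: for $d\le 2$ the width may be taken as small as $r_n$ while retaining enough in-bin observations, giving $\eta_n=O(r_n)$; for $d>2$ the requirement $n_k\to\infty$ forces a wider cube of width $z_n=(\log(n)/n)^{1/d}$, and the bias $O(z_n)$ dominates.

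The main obstacle is the uniform control of the augmented MLE $\psihat^{(x,y)}$ over all candidate points simultaneously: unlike the fixed-sample estimator, it is indexed by the continuum of augmentation points, and for unbounded $y$ the influence of the extra observation is not bounded a priori. This is exactly where the defining properties of $\aleph_\X$ --- the linear-in-$y$ score growth and the existence of the moment generating function near the origin --- are essential, since they let me truncate to a high-probability band of $y$ and bound the augmentation effect there while showing the complementary tail contributes at most $O(n^{-\lambda})$. Once $\Delta_n$ and $\eta_n$ are each shown to be $O(z_n\vee r_n)$ off an event of probability $O(n^{-\lambda})$, the margin-condition reduction of the first paragraph delivers \eqref{mainresult}, with $\zeta_\lambda$ taken as the sum of the constants accumulated in the three bounds after a union bound over the bins.
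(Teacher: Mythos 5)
Your proposal is correct and follows essentially the same route as the paper: a concentration inequality for the (augmented) MLE built from the subexponential score behavior encoded in $\aleph_\X$, transferred to a uniform sup-norm density bound via Assumption~2, combined with the binning bias and in-bin empirical fluctuation control (the paper packages your DKW-plus-margin step as sandwiching level sets and defers to Lemmas~6 and~8 of \citet{lei2014distribution}, which is the same argument), and the same $n_k \asymp n w^d$ balancing that produces the $d\leq 2$ versus $d>2$ dichotomy. Your flagged obstacle --- uniformity of $\psihat^{(x,y)}$ over the continuum of candidate augmentation points --- is handled in the paper by Corollary~\ref{concentration-cor}, though only for fixed $(x,y)$, so your proposed truncation of $y$ to a high-probability band is if anything more careful than the published treatment.
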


{\bf Remarks}:

\begin{enumerate}

\item The proof of Theorem~\ref{thm:correctmodel} is given in the Appendix. 
The rates $r_n$ and $z_n$ are appreciably faster than that of the 
nonparametric conformal prediction band which has a convergence rate of 
$w_n = O\{\log(n)/n\}^{1/(d+3)}$ when the underlying $p_\psi(y|x)$ has 
parameter space $\aleph_\X$.  The difference between the convergence speed 
of the parametric and nonparametric conformal prediction regions originates 
from the differences in the rates of MLEs and nonparametric techniques and 
the speed at which the bin widths shrink.  

\item The following Lemma governs how 
fast the bins can shrink while still maintaining that $n_k \to \infty$.

\begin{lem}[Lemma 9 in \citet{lei2014distribution}]
Under Assumption 1, there exists constants $C_1$ and $C_2$ such that 
$$
  \Prob\left(\forall k : b_1nw_n^d/2 \leq n_k \leq 3b_2nw_n^d/2\right) 
    \geq 1 - C_1 w_n^{-d}\exp(-C_2nw_n^{d}), 
$$
with $b_1$ and $b_2$ defined in Assumption 1. 
\label{lem:Bern}
\end{lem}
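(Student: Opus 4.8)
The plan is to treat each bin count $n_k$ as a binomial random variable, apply a two-sided multiplicative Chernoff bound bin-by-bin, and then control all bins simultaneously by a union bound. First I would fix a cell $A_k$ of the partition $\A$ and write $n_k = \sum_{i=1}^n \indicator{X_i \in A_k}$, so that $n_k \sim \text{Binomial}(n, p_k)$ with $p_k = \Prob(X \in A_k) = \int_{A_k} p_X(x)\,dx$. Since the cells are equilateral cubes of side $w_n$, each has Lebesgue volume $w_n^d$, so Assumption 1 furnishes the two-sided bound $b_1 w_n^d \leq p_k \leq b_2 w_n^d$, and hence $n b_1 w_n^d \leq \E[n_k] = n p_k \leq n b_2 w_n^d$.

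The key observation is that the two deterministic thresholds in the statement straddle this range of possible means with constant multiplicative slack. Indeed, because $p_k \geq b_1 w_n^d$, the lower threshold satisfies $b_1 n w_n^d/2 \leq \E[n_k]/2$, so $\{n_k < b_1 n w_n^d/2\} \subseteq \{n_k < \E[n_k]/2\}$; and because $p_k \leq b_2 w_n^d$, the upper threshold satisfies $3 b_2 n w_n^d/2 \geq \tfrac{3}{2}\E[n_k]$, so $\{n_k > 3 b_2 n w_n^d/2\} \subseteq \{n_k > \tfrac{3}{2}\E[n_k]\}$. Both events are therefore genuine constant-factor multiplicative deviations of a binomial from its own mean, and the density bounds make this alignment uniform over $k$. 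The lower-tail Chernoff bound gives $\Prob(n_k < \E[n_k]/2) \leq \exp(-\E[n_k]/8)$, while the upper-tail bound at deviation factor $\tfrac{3}{2}$ gives $\Prob(n_k > \tfrac{3}{2}\E[n_k]) \leq \exp(-\E[n_k]/10)$; any larger deviation (when $p_k < b_2 w_n^d$) only sharpens the bound. Using $\E[n_k] \geq n b_1 w_n^d$ in both exponents, each bad event for bin $k$ has probability at most $2\exp(-C_2 n w_n^d)$, where $C_2$ depends only on $b_1$.

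Finally I would remove the per-bin conditioning by a union bound. The partition of $[0,1]^d$ into cubes of side $w_n$ contains at most $\lceil 1/w_n \rceil^d = O(w_n^{-d})$ cells, so summing the per-bin failure probability over all $k$ yields $\Prob(\exists k : n_k \notin [b_1 n w_n^d/2,\, 3 b_2 n w_n^d/2]) \leq C_1 w_n^{-d}\exp(-C_2 n w_n^d)$, which is precisely the complement of the asserted event. I expect the main obstacle to be bookkeeping rather than conceptual: one must check that the fixed constants $b_1/2$ and $3b_2/2$ bracket the random mean $n p_k$ across all bins at once, which is exactly where the \emph{two-sided} density bound of Assumption 1 is indispensable, since a one-sided bound would control only one of the two tails. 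A minor technical wrinkle is that boundary cells may have volume below $w_n^d$; a smaller $p_k$ only tightens the upper-tail estimate while it could weaken the lower-tail guarantee, so one assumes $1/w_n$ is an integer (or restricts attention to interior cells) without loss of generality.
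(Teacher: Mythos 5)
Your proposal is correct and takes essentially the same approach as the source: the paper itself states this lemma without proof, citing \citet{lei2014distribution}, and the original argument is precisely your strategy --- view each $n_k$ as $\mathrm{Binomial}(n,p_k)$ with $b_1 w_n^d \leq p_k \leq b_2 w_n^d$ from Assumption 1, apply a two-sided exponential concentration inequality at constant-factor deviations from the mean (Bernstein's inequality in the original, for which your multiplicative Chernoff bounds are an interchangeable substitute, as the label \texttt{lem:Bern} hints), and conclude with a union bound over the $O(w_n^{-d})$ cubes. Your observations that the two-sided density bound is exactly what lets the fixed thresholds $b_1 n w_n^d/2$ and $3b_2 n w_n^d/2$ bracket every bin's mean simultaneously, and that boundary cells are handled by taking $1/w_n$ integral, are consistent with the original treatment.
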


\item The key term in Lemma~\ref{lem:Bern} is the $nw_n^{d}$ term appearing in 
the exponent.  The proof of Theorem~\ref{thm:correctmodel} reveals that the 
rate of convergence of the parametric conformal prediction region is limited 
by Lemma~\ref{lem:Bern}.  This is not the case for the nonparametric conformal 
prediction region where the limiting factor is the rate of convergence of the 
kernel density estimator. 

\item The proof of Theorem~\ref{thm:correctmodel} requires a new concentration 
inequality for MLEs which is stated below as Theorem~\ref{concentration}.  
The proof technique that we employ for Theorem~\ref{thm:correctmodel} involves 
sandwiching $\Coptglm(x)$ by two estimated level sets, formalized as 
Lemma~\ref{lem:sandwich-subexpo} in the Appendix.  The convergence rate of these 
sandwiching level sets depends on $r_n$, $z_n$, and the convergence rate of 
$
  \sup_{x \in \X}
    \|\hat{p}_\psi^{(x,y)}(\cdot|x) - p_\psi(\cdot|x)\|_{\infty},
$ 
see Lemma~\ref{lem:dens} in the Appendix.  We verify that  
$$
  \sup_{x \in \X}
    \|\hat{p}_\psi^{(x,y)}(\cdot|x) - p_\psi(\cdot|x)\|_{\infty}
  = O\left(|\psihat - \psi|\right),
$$ 
where $\psi \in \R^r$ are the parameters in density $p_\psi(\cdot|x)$  
and $\psihat$ be the MLE of $\psi$, hence the need for 
Theorem~\ref{concentration}.
\end{enumerate}

Let $|\cdot|$ denote the $L^1$ norm of a vector or the absolute value of a 
scalar.

\begin{thm}
Let $Y|X=x$ be a random variable with conditional density 
$p_\psi(\cdot|x)$ and parameter space $\aleph_\X$.  Assume that 
$p_\psi(\cdot|x)$ is twice differentiable in $\psi$ and satisfies 
Assumption 2.  Let $(X_1,Y_1)$, $\ldots$, $(X_n,Y_n)$ be independent and 
identically distributed copies of $(X,Y)$.  Let $\psi \in \R^r$.  
Let $\psihat$ be the MLE of $\psi$.  
Then for any $\lambda > 0$, there exists a numerical constant 
$A_{\lambda}$, 
such that  
\begin{equation}
  \Prob\left\{ 
    \sqrt{n}\left|\psihat - \psi\right| \geq A_{\lambda}\sqrt{\log(n)} 
  \right\}
  = O\left(n^{-\lambda}\right).
\label{precise}
\end{equation}
\label{concentration}
\end{thm}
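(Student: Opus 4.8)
The plan is to treat $\psihat$ as a $Z$-estimator and linearize the score equation around the true parameter $\psi$, reducing the problem to a concentration statement for an average of i.i.d.\ subexponential random vectors. Write $s(y,x;\psi) = \nabla_\psi \log p_\psi(y|x)$ for the score and $H(y,x;\psi) = \nabla^2_\psi \log p_\psi(y|x)$ for the per-observation Hessian. Since $\psihat$ maximizes the log-likelihood, it solves $\sum_{i=1}^n s(Y_i,X_i;\psihat) = 0$, and a first-order Taylor expansion of each coordinate of the score about $\psi$ gives
$$
0 = \frac{1}{n}\sum_{i=1}^n s(Y_i,X_i;\psi) + \bar H_n(\tilde\psi)\,(\psihat - \psi),
$$
where $\bar H_n(\tilde\psi) = \tfrac1n\sum_{i=1}^n H(Y_i,X_i;\tilde\psi)$ has rows evaluated at (possibly coordinate-dependent) intermediate points $\tilde\psi$ on the segment joining $\psi$ and $\psihat$. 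Hence $\psihat - \psi = -\bar H_n(\tilde\psi)^{-1}\,\bar s_n(\psi)$ whenever $\bar H_n(\tilde\psi)$ is invertible, where $\bar s_n(\psi) = \tfrac1n\sum_i s(Y_i,X_i;\psi)$. It therefore suffices to show (i) $\bar s_n(\psi)$ concentrates at rate $\sqrt{\log(n)/n}$ with polynomial tails, and (ii) $\bar H_n(\tilde\psi)^{-1}$ is bounded in operator norm off an event of probability $O(n^{-\lambda})$.

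For (i), the defining condition $\nabla_\psi \log p_\psi(y|x) = O(y)$ of $\aleph_\X$ bounds each coordinate of $s$ by $c(1+|y|)$, while existence of $M_{Y,x,\psi}(t)$ near $0$ for every $x$ in the compact set $\X$ makes $Y$, and hence $s(Y,X;\psi)$, subexponential with a norm uniform in $x$. Differentiating $\int p_\psi(y|x)\,dy = 1$ under the integral sign, justified by the $O(y)$ bound and the subexponential tails, gives $\E[s(Y,X;\psi)] = 0$. Applying a Bernstein inequality for sums of i.i.d.\ mean-zero subexponential variables coordinatewise and taking a union bound over the $r$ coordinates yields, for a subexponential norm $K$ and an absolute constant $c$,
$$
\Prob\left(|\bar s_n(\psi)| \geq t\right) \leq 2r\exp\!\left(-c\,n\min\{t^2/K^2,\,t/K\}\right).
$$
Choosing $t = A\sqrt{\log(n)/n}$ places us, for large $n$, in the regime $t \le K$ where the exponent is $-cA^2\log(n)/K^2$; taking $A$ large enough that $cA^2/K^2 \geq \lambda$ gives $\Prob(|\bar s_n(\psi)| \geq A\sqrt{\log(n)/n}) = O(n^{-\lambda})$. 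This is precisely where the $\sqrt{\log n}$ factor and the polynomial tail of \eqref{precise} originate: it converts the exponential-in-$n$ Bernstein tail into the target rate.

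For (ii), I would first establish consistency with a polynomial tail: a uniform concentration of $\psi \mapsto \tfrac1n\sum_i \log p_\psi(Y_i|X_i)$, again via subexponential Bernstein together with the Lipschitz-in-$\psi$ control of Assumption 2 over a covering of a compact parameter region, shows $\Prob(|\psihat - \psi| \geq \delta_0) = O(n^{-\lambda})$ for any fixed $\delta_0$. On $\{|\psihat - \psi| < \delta_0\}$ the intermediate points $\tilde\psi$ lie in a $\delta_0$-ball of $\psi$, and a further subexponential concentration of $\bar H_n$, combined with continuity of $\psi \mapsto \E[H(Y,X;\psi)]$ and nonsingularity of the Fisher information $I(\psi) = -\E[H(Y,X;\psi)]$, shows $\|\bar H_n(\tilde\psi)^{-1}\|$ is bounded by $2\|I(\psi)^{-1}\|$ off an $O(n^{-\lambda})$ event. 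Intersecting the events from (i) and (ii) and using $\sqrt n\,|\psihat - \psi| \leq \|\bar H_n(\tilde\psi)^{-1}\|\cdot \sqrt n\,|\bar s_n(\psi)|$ yields $\sqrt n\,|\psihat - \psi| \le A_\lambda\sqrt{\log n}$ with probability $1 - O(n^{-\lambda})$, as claimed.

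The main obstacle is part (ii), specifically the empirical Hessian at the random $\tilde\psi$: the definition of $\aleph_\X$ bounds only the first derivative of the log-density, so I must separately argue that the second-derivative terms are subexponential, which is automatic for canonical exponential families where $H$ is free of $y$, but requires an additional growth bound in $y$ in the general GLM case, and that this bound is uniform over a neighborhood of $\psi$ so that replacing $\tilde\psi$ by $\psi$ costs only lower-order terms. The second delicate point is that localizing $\tilde\psi$ demands polynomial-rate consistency rather than the classical $o_P(1)$ statement, which is what forces the covering-number argument in place of a pointwise law of large numbers.
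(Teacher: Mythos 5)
Your proposal follows essentially the same route as the paper's proof: a mean-value expansion of the score equation writes $\sqrt{n}(\hat{\psi}-\psi)$ as the inverse empirical Hessian applied to the average score, and the average score is then controlled coordinatewise by the subexponential Bernstein inequality (the score being functional subexponential precisely because $\nabla_\psi \log p_\psi(y|x)=O(y)$ and the MGF exists near zero), with $A_\lambda$ chosen to turn the exponential tail into $O(n^{-\lambda})$. The only divergence is your part (ii): the paper dispatches the remainder from evaluating the Hessian at the intermediate point, and the bound on the inverse Hessian's norm, by an informal appeal to the strong law of large numbers, whereas you correctly note that a polynomial-tail conclusion requires quantitative control of these terms (polynomial-rate consistency via a covering argument plus concentration of the empirical Hessian, and a growth bound on the second derivatives that $\aleph_\X$ does not itself supply) --- so your treatment is, if anything, more careful than the published argument at exactly the step you flag as delicate.
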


The concentration inequality given in Theorem~\ref{concentration} is a 
generalization of Theorem 3.3 in \cite{miao2010concentration} to 
subexponential random variables with powers of $\log(n)/n$ replacing $r$, 
provided that the underlying density is parameterized with $\aleph_X$.  
Theorem 3.3 in \cite{miao2010concentration} only holds for subgaussian 
random variables.  Our extension is possible because of subexponential 
concentration theory and the score function is of the same order as 
the outcome variable.  These results circumvent the necessity for the use 
of optimal transport theory 
\citep{bobkov1999transport, ledoux1999logsobolev, ledoux2001concentration, 
  djellout2004transport, villani2008optimal} 
to prove Theorem 3.3 in \cite{miao2010concentration}.

\subsection{Transformation conformal}

Transformation parametric conformal prediction was proposed 
by \citet{chernozhukov2019distributional} to construct a
$(1-\alpha)\times 100\%$ prediction region using the lower 
and upper $\alpha/2$ quantiles of the assumed parametric distribution.  
We build on this framework 
to propose transformation based parametric conformal prediction regions  
that are asymptotically minimal length at rate $r_n$.  
Let $m(x,\psi)$ be the number of modes of $p_\psi(\cdot|x)$ and 
suppose that $\sup_{x \in \X,\, \psi \in \aleph_X} m(x, \psi) = m < \infty$.  
With this specification we can construct $C_P(x)$ as a finite union of intervals.  
Let $F(\cdot|x)$ be the conditional distribution function with density 
$p_\psi(\cdot|x)$ and let $\widehat{F}(\cdot|x)$ be the estimate of 
$F(\cdot|x)$ with the MLE $\psihat$ plugged in.  
Let $\widehat{F}^{(x,y)}(\cdot|\cdot)$ be the estimate of $F(\cdot|\cdot)$ 
with the MLE corresponding to the augmented data $\psihat^{(x,y)}$ plugged 
in.  Let $(X_{n+1}, Y_{n+1}) = (x,y)$ be a candidate data point.   

The logic of transformation conformal prediction is as follows.  
We first augment the original data by adding the point $(x,y)$.  
We then estimate model parameters via maximum likelihood estimation with 
respect to this augmented dataset, and denote this estimator as 
$\hat\psi^{(x,y)}$.  The estimator $\hat\psi^{(x,y)}$ is then used to 
estimate the distribution function $F(\cdot|\cdot)$ by 
$\widehat{F}^{(x,y)}(\cdot|\cdot)$ via plugin.  We then transform outcomes 
by computing $\widehat{U}^{(x,y)}_i = \widehat{F}^{(x,y)}(Y_i|X_i)$ 
for $i=1,\ldots,n+1$.  Define 
$
  \tilde \pi(y) = (n+1)^{-1}\sum_{i=1}^{n+1}
    \indicator{\sigma\left(\widehat{U}^{(x,y)}_i\right) 
      \leq \sigma\left(\widehat{U}^{(x,y)}_{n+1}\right)},
$
and the conformal prediction region 
\begin{equation} \label{Ctilde}
  \Copttilde(x) 
    = \left\{y : \tilde \pi(y) \geq \alpha'\right\},     
\end{equation}
where $0 < \alpha' \leq \alpha < 1$.

\begin{lem}
Let $\Copttilde(x)$ be as defined in \eqref{Ctilde}. 
Then $\Copttilde(x)$ is finite sample marginally valid. 
\end{lem}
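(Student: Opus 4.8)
The plan is to reproduce the exchangeability argument underlying Lemma~\ref{valid}, the only genuinely new ingredient being that the transformed values $\widehat{U}^{(x,y)}_i$ all depend on the data through the single augmented-sample estimator $\psihat^{(x,y)}$, and I must check that this shared dependence does not disturb exchangeability. By the definition of marginal validity it suffices to show $\Prob\{Y_{n+1} \in \Copttilde(X_{n+1})\} \geq 1-\alpha$, and since $Y_{n+1} \in \Copttilde(X_{n+1})$ holds exactly when $\tilde\pi(Y_{n+1}) \geq \alpha'$, the target becomes $\Prob\{\tilde\pi(Y_{n+1}) \geq \alpha'\} \geq 1-\alpha$.

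First I would set $(x,y) = (X_{n+1}, Y_{n+1})$, so that the augmented dataset $\{(X_1,Y_1),\ldots,(X_n,Y_n),(x,y)\}$ coincides with the full iid sample $(X_1,Y_1),\ldots,(X_{n+1},Y_{n+1})$, which is exchangeable. The crucial observation is that the log-likelihood evaluated on the augmented data is a sum over all $n+1$ observations, so its maximizer $\psihat^{(x,y)} = \psihat^{(X_{n+1},Y_{n+1})}$ is a symmetric function of the $n+1$ pairs: any permutation of the pairs leaves $\psihat^{(x,y)}$, and hence the plug-in distribution function $\widehat{F}^{(x,y)}(\cdot|\cdot)$, unchanged. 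This is the transformation-based analogue of the symmetry of $\widehat{P}_{n+1}$ exploited in Lemma~\ref{valid} and of the symmetry of the augmented-data density in the proof of Lemma~\ref{lem:valid}.

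Next I would conclude exchangeability of the transformed scores. Each $\widehat{U}^{(x,y)}_i = \widehat{F}^{(x,y)}(Y_i|X_i)$ is obtained by applying one fixed, permutation-invariant map to the $i$-th pair, so permuting the data permutes $(\widehat{U}^{(x,y)}_1,\ldots,\widehat{U}^{(x,y)}_{n+1})$ in the same way while fixing $\widehat{F}^{(x,y)}$; hence this vector is exchangeable, and applying the conformity measure $\sigma$ coordinatewise preserves exchangeability of the scores $\sigma(\widehat{U}^{(x,y)}_i)$, $1 \leq i \leq n+1$.

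Finally, with $n+1$ exchangeable scores the rank statistic $\tilde\pi(Y_{n+1}) = (n+1)^{-1}\sum_{i=1}^{n+1}\indicator{\sigma(\widehat{U}^{(x,y)}_i) \leq \sigma(\widehat{U}^{(x,y)}_{n+1})}$ is uniform on $\{1/(n+1),\ldots,1\}$ when the scores are distinct, exactly as in Lemma~\ref{valid}. A direct count then gives $\Prob\{\tilde\pi(Y_{n+1}) \geq \alpha'\} = (n+2-\lceil (n+1)\alpha'\rceil)/(n+1)$, and because $\alpha' \leq \alpha$ we have $\lceil(n+1)\alpha'\rceil \leq (n+1)\alpha + 1$, which rearranges to $\Prob\{\tilde\pi(Y_{n+1}) \geq \alpha'\} \geq 1-\alpha$; this is where the hypothesis $\alpha' \leq \alpha$ is used. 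Ties among the scores only enlarge the $\leq$-count and can therefore only increase coverage, so continuity of $P$ is not essential to the bound. The main obstacle is the exchangeability step, namely verifying that sharing the common estimator $\psihat^{(x,y)}$ across all transformed values does not break exchangeability; once the permutation-invariance of the augmented-sample MLE is recognized, the remainder is the standard conformal rank argument.
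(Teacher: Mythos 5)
Your proof is correct and follows essentially the same route as the paper: the key step in both is that the augmented-data MLE $\psihat^{(x,y)}$ (and hence $\widehat{F}^{(x,y)}$) is a symmetric function of the $n+1$ pairs, so the transformed values $\widehat{U}^{(x,y)}_i$ and the scores $\sigma(\widehat{U}^{(x,y)}_i)$ are exchangeable, after which the standard conformal rank count with $\alpha'\leq\alpha$ gives coverage at least $1-\alpha$. The paper states this in three lines; you have merely supplied the details it leaves implicit.
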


\begin{proof}
The proof follows from the proof of Lemma~\ref{lem:valid}.
Let $(X_{n+1}$, $Y_{n+1})\sim P$ be another independent sample. The random 
variables $\widehat{U}^{(x,y)}_i$, $1 \leq i \leq n + 1$, are 
exchangeable.  
Then, conditioning on $X_{n+1} = x$, the sequence 
$
  \left(\sigma(\widehat{U}^{(x,y)}_1), \ldots, 
    \sigma(\widehat{U}^{(x,y)}_{n+1})\right)
$ 
is exchangeable.  Marginal validity follows by construction.  
\end{proof}

A particular choice of the conformity measure $\sigma(\cdot)$ 
and $\alpha'$ yields an asymptotically minimal length conformal prediction 
region at rate $r_n$.  First, let 
$\widehat{U}^{(x,y)}_{n+1} = \widehat{F}^{(x,y)}(y|x)$ be the transformed 
augmented outcome.  
We can find a minimal length $1-\alpha$ high density region based on the 
estimated density $\hat{p}_\psi^{(x,y)}(\cdot|x)$ through solution of the 
optimization problem
\begin{equation} 
  \label{general-minproblem}
  \hat A(x) = \text{argmin}_{A(x)} \nu(A(x)), 
    \qquad \text{subject to} \qquad \int_{A(x)} \hat{p}_\psi^{(x,y)}(y'|x) dy' 
      = 1 - \alpha,
\end{equation}  
where the solution $\hat A(x) = \cup_{k=1}^{m'(x)}(\hat a_k(x), \hat b_k(x))$ is 
a union of $m'(x)$ disjoint intervals.  Because $p_\psi(\cdot|x)$ has $m<\infty$ 
modes for every $x\in\X$, the number of intervals in the union is $m'(x) \leq m$.  
Let 
$$ 
\widehat{U}^{(x,y)}_{\text{lwr}, k} = \widehat{F}^{(x,y)}(\hat a_k(x)|x)
  \quad \text{and} \quad 
    \widehat{U}^{(x,y)}_{\text{upr}, k} = \widehat{F}^{(x,y)}(\hat b_k(x)|x)
$$
be the transformed values corresponding to the lower and upper limits of the 
prediction set for each $x\in\X$.  

Then, for each $k$, find intervals in the transformed observed values that 
contain $(a_k(x), b_k(x))$, 
$$
  \widetilde{U}^{(x,y)}_{\text{lwr}, k} = \begin{cases}
    \widehat{U}^{(x,y)}_{\left[\floor{(n+1)\widehat{U}^{(x,y)}_{\text{lwr},k}}\right]} 
      & \text{if } \floor{(n+1)\widehat{U}^{(x,y)}_{\text{lwr},k}} \ge 1\\
    0 & \text{otherwise}
    \end{cases}
$$
and
$
  \widetilde{U}^{(x,y)}_{\text{upr}, k} = 
    \widehat{U}^{(x,y)}_{\left[\ceil{(n+1)\widehat{U}^{(x,y)}_{\text{upr},k}}\right]}
$ 
where $\widehat{U}^{(x,y)}_{[j]}$ is the $j$th order statistic of the transformed values. 
Define the set consisting of the union of these intervals 
$$
  \widehat{B}^{(x,y)} = \cup_{k=1}^{m'(x)}
  \left(\widetilde{U}^{(x,y)}_{\text{lwr}, k}, 
    \widetilde{U}^{(x,y)}_{\text{upr}, k}\right),
$$ 
and define the conformity measure as the indicator that the value $u$ is in the estimated 
high density region, $\sigma(u) = \mathds{1}\{u \in \widehat{B}^{(x,y)} \}$.  
Let the threshold be 
$$
  \alpha' = 1 - 
    \sum_{k=1}^{m'(x)}\left(\frac{\ceil{(n+1)\widetilde{U}^{(x,y)}_{\text{upr}, k}}}{n+1} 
      - \frac{\floor{(n+1)\widetilde{U}^{(x,y)}_{\text{lwr}, k}}}{n+1}\right)
  \leq \alpha .
$$
With these specifications, the conformal prediction region~\eqref{Ctilde} becomes
\begin{equation} \label{transregion}
  \Copttrans(x) = \left\{y: \widehat{U}^{(x,y)}_{n+1} \in \cup_{k=1}^{m'(x)}
  \left(\widetilde{U}^{(x,y)}_{\text{lwr}, k}, 
    \widetilde{U}^{(x,y)}_{\text{upr}, k}\right)
  \right\},
\end{equation}
where $y \in \Copttrans(x)$ implies that 
$\sigma(\widehat{U}^{(x,y)}_{n+1}) = 1$. 

While this procedure may appear complicated, it is motivated by a simple observation.  
When the parametric model is correctly specified, the $\widehat{U}^{(x,y)}_i$ 
random variables have an approximate Uniform$[0,1]$ distribution in finite samples 
and are asymptotically Uniform$[0,1]$.  By conducting the conformal prediction procedure 
in the transformed space, we can leverage distributional information from the 
estimated density $\hat{p}_\psi^{(x,y)}(\cdot|x)$ to mimic a highest density 
region while ensuring nominal marginal coverage.  Even when the model is misspecified, 
this procedure is designed to guarantee nominal marginal coverage.

We now establish that the transformation parametric conformal prediction 
region $\Copttrans$ is asymptotically of minimum length over $\X$ at rate $r_n$.  

\noindent{\bf Assumption 4}.    
Let $Y|X=x$ be a random variable with conditional density $p_\psi(\cdot|x)$ 
and parameter space $\aleph_\X$.  Let $F(y|x)$ be the conditional distribution 
function corresponding to $p_\psi(\cdot|x)$.  Further assume that 
$|\nabla_\psi F(v|x)|$ is bounded for all $v$ and all $\psi \in \aleph_X$.

\begin{thm}
Let $Y|X=x$ be a random variable with conditional density 
$p_\psi(\cdot|x)$ and parameter space $\aleph_\X$.  Assume that 
$p_\psi(\cdot|x)$ has $m$ modes, is twice differentiable in $\psi$, 
and satisfies Assumption 2.  
Assume that the corresponding distribution function satisfies Assumption 4.  
Let $(X_1,Y_1)$, $\ldots$, $(X_n,Y_n)$ be 
independent and identically distributed copies of $(X,Y)$.  
Let $\psi \in \R^r$ and let $\psihat$ be the MLE of $\psi$.  
Augment the sample data with a new point $(x,y)$, and let 
$\psihat^{(x,y)}$ be the MLE of $\psi$ with respect to the augmented data.  
Let $\hat{p}^{(x,y)}_\psi(\cdot|x)$ be the conditional density with 
$\psihat^{(x,y)}$ plugged in for $\psi$.  
Suppose that Assumption 1 holds.  Let $0 < \alpha < 1$ and $\Copttrans$ be 
the prediction band given by \eqref{transregion}.  Then, for a given 
$\lambda > 0$, there exists a numerical constant $\chi_{\lambda}$ such that 
\begin{equation} 
  \Prob\left[\sup_{x \in \X} 
    \nu\left\{\Copttrans(x) \triangle C_P^{(\alpha)}(x)\right\} 
      \geq  \chi_{\lambda}r_n\right] 
  = O\left(n^{-\lambda}\right).
\label{mainresult2}
\end{equation}
\label{thm:correctmodel2}
\end{thm}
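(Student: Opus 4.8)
The plan is to bound the symmetric difference on the $y$-scale through a two-step chain that passes through the estimated highest-density region $\hat{A}(x)$ of \eqref{general-minproblem},
\[
  \nu\left\{\Copttrans(x) \triangle C_P^{(\alpha)}(x)\right\}
    \le \nu\left\{\Copttrans(x) \triangle \hat{A}(x)\right\}
      + \nu\left\{\hat{A}(x) \triangle C_P^{(\alpha)}(x)\right\},
\]
and to control each term uniformly over $x\in\X$ at rate $r_n$ on a single event of probability $1-O(n^{-\lambda})$. First I would assemble the probabilistic inputs. Theorem~\ref{concentration} gives $|\psihat-\psi|=O(r_n)$ off an event of probability $O(n^{-\lambda})$. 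Since augmenting the sample with one point $(x,y)$ perturbs the score and the observed information by $O(1)$ for $y$ in the subexponentially bounded range of interest --- here the score-order bound $\nabla_\psi\log p_\psi(y|x)=O(y)$ built into $\aleph_\X$ is exactly what is needed --- a standard add-one-in stability argument yields $\sup|\psihat^{(x,y)}-\psihat|=O(1/n)$, hence $|\psihat^{(x,y)}-\psi|=O(r_n)$ uniformly. Feeding this into the density bound $\sup_x\|\hat{p}_\psi^{(x,y)}(\cdot|x)-p_\psi(\cdot|x)\|_\infty=O(|\psihat-\psi|)$ established in the remarks following Theorem~\ref{thm:correctmodel}, and then invoking Assumption~4 (bounded $|\nabla_\psi F(v|x)|$) through a mean-value step in $\psi$, gives both $\sup_x\|\hat{p}_\psi^{(x,y)}(\cdot|x)-p_\psi(\cdot|x)\|_\infty=O(r_n)$ and the uniform distribution-function bound $\sup_x\sup_v|\widehat{F}^{(x,y)}(v|x)-F(v|x)|=O(r_n)$.

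For the term $\nu\{\hat{A}(x)\triangle C_P^{(\alpha)}(x)\}$ I would exploit that both regions carry mass $1-\alpha$: $\hat{A}(x)$ under $\hat{p}_\psi^{(x,y)}(\cdot|x)$ by the constraint in \eqref{general-minproblem}, and $C_P^{(\alpha)}(x)$ under $p_\psi(\cdot|x)$ by definition of $\talpha_x$. Writing $\int_{\hat{A}}\hat{p}_\psi^{(x,y)}-\int_{\hat{A}}p_\psi=\int_{\hat{A}}(\hat{p}_\psi^{(x,y)}-p_\psi)$ and bounding it by the uniform density gap times the bounded length of $\hat{A}$, the two regions agree in true mass up to $O(r_n)$; since each is a superlevel set of a density and the two densities agree to $O(r_n)$, their symmetric difference is a thin shell on which $p_\psi\approx\talpha_x$. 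Converting this $O(r_n)$ of mass into Lebesgue length requires $p_\psi$ to be bounded below on the shell, i.e.\ $\inf_x\talpha_x\ge t_0>0$. I would derive this uniform threshold lower bound from the subexponential tail control encoded in $\aleph_\X$ together with compactness of $\X$ and the Lipschitz-in-$x$ continuity of Assumption~2, \emph{rather than} from the margin condition Assumption~3, which the transformation region (unlike the binning region) does not invoke. Summed over the $m'(x)\le m$ component intervals this gives $\nu\{\hat{A}(x)\triangle C_P^{(\alpha)}(x)\}=O(r_n)$.

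For the term $\nu\{\Copttrans(x)\triangle\hat{A}(x)\}$ I would track endpoints through the transform. By construction $\Copttrans(x)=\{y:\widehat{F}^{(x,y)}(y|x)\in\cup_k(\widetilde{U}^{(x,y)}_{\text{lwr},k},\widetilde{U}^{(x,y)}_{\text{upr},k})\}$, while $\widehat{F}^{(x,y)}(\hat a_k|x)=\widehat{U}^{(x,y)}_{\text{lwr},k}$ and $\widehat{F}^{(x,y)}(\hat b_k|x)=\widehat{U}^{(x,y)}_{\text{upr},k}$, so the only discrepancy between $\Copttrans(x)$ and $\hat{A}(x)$ is the order-statistic rounding that replaces each $\widehat{U}^{(x,y)}_{\cdot,k}$ by a neighboring sorted value $\widetilde{U}^{(x,y)}_{\cdot,k}$. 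Because $\widehat{F}^{(x,y)}$ is a smooth monotone reparametrization close to $F$, the sorted transformed values are a monotone image of an exactly Uniform$[0,1]$ sample, so a Dvoretzky--Kiefer--Wolfowitz bound controls $\sup_j|\widehat{U}^{(x,y)}_{[j]}-j/(n+1)|=O(r_n)$ off an event of probability $O(n^{-\lambda})$ (taking the DKW radius $\asymp\sqrt{\lambda\log n/n}$). Hence $|\widetilde{U}^{(x,y)}_{\cdot,k}-\widehat{U}^{(x,y)}_{\cdot,k}|=O(r_n)$, and pulling this back through the monotone $\widehat{F}^{(x,y)}$, whose inverse is Lipschitz with constant at most $1/t_0$ near the endpoints, moves each $y$-endpoint by $O(r_n)$; summing over $\le m$ intervals gives $\nu\{\Copttrans(x)\triangle\hat{A}(x)\}=O(r_n)$.

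Intersecting the finitely many $1-O(n^{-\lambda})$ events then yields $\sup_{x\in\X}\nu\{\Copttrans(x)\triangle C_P^{(\alpha)}(x)\}=O(r_n)$, which is \eqref{mainresult2} with $\chi_\lambda$ the resulting constant. The main obstacle is the length-from-mass conversion in the middle term: establishing the uniform threshold lower bound $\inf_x\talpha_x\ge t_0>0$ from the subexponential structure of $\aleph_\X$ alone, so as to dispense with Assumption~3, is the step that carries the real analytic weight and that conceptually separates the transformation region from the binning region of Theorem~\ref{thm:correctmodel}; securing the uniform empirical-process control of the transformed values at the $n^{-\lambda}$ probability scale is the secondary technical point, while the remaining endpoint bookkeeping is routine.
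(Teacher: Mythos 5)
Your overall architecture mirrors the paper's: concentrate $\psihat$ and $\psihat^{(x,y)}$ (Theorem~\ref{concentration} and Corollary~\ref{concentration-cor}), convert this to uniform $O(r_n)$ bounds on $\|\hat{p}_\psi^{(x,y)}(\cdot|x)-p_\psi(\cdot|x)\|_\infty$ and $\|\widehat{F}^{(x,y)}(\cdot|x)-F(\cdot|x)\|_\infty$ (Lemmas~\ref{lem:dens} and \ref{lem:dist}), compare the estimated and true highest-density regions (Lemma~\ref{lem:optim}), control the order-statistic rounding on the uniform scale, and pull back through $F_x^{-1}$. Your substitution of a DKW bound for the paper's Cs\"org\H{o}--R\'ev\'esz/KMT Brownian-bridge approximation is legitimate and arguably cleaner, with one caveat: the $\widehat{U}^{(x,y)}_i=\widehat{F}^{(x,y)}(Y_i|X_i)$ are \emph{not} a single monotone image of a Uniform$[0,1]$ sample, since the transformation varies with $X_i$; the correct route (which the paper takes, and which your other estimates support) is to bound $|\widehat{U}^{(x,y)}_i-F(Y_i|X_i)|$ uniformly by $\sup_x\|\widehat{F}^{(x,y)}(\cdot|x)-F(\cdot|x)\|_\infty=O(r_n)$ and then apply the empirical-process bound to the exactly uniform $F(Y_i|X_i)$.

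The genuine gap is in your treatment of $\nu\{\hat{A}(x)\triangle C_P^{(\alpha)}(x)\}$. Knowing that the two regions ``agree in true mass up to $O(r_n)$'' bounds $|P(\hat A(x)|x)-P(C_P^{(\alpha)}(x)|x)|$, not $P(\hat A(x)\triangle C_P^{(\alpha)}(x)|x)$: the two sets are superlevel sets of \emph{different} densities, so neither contains the other and the masses of $\hat A\setminus C_P$ and $C_P\setminus\hat A$ can both be large while nearly cancelling. To localize the symmetric difference in a shell $\{y:|p_\psi(y|x)-\talpha_x|\le Cr_n\}$ you must first show the estimated threshold is within $O(r_n)$ of $\talpha_x$, and to convert the shell into $O(r_n)$ of Lebesgue length you need an upper bound on the measure of that shell --- i.e.\ precisely the two-sided margin condition of Assumption~3 with $\gamma\ge 1$, together with $\inf_x\talpha_x\ge t_0$. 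You explicitly decline to invoke Assumption~3, and the subexponential structure of $\aleph_\X$ by itself gives you neither the margin upper bound nor a threshold-to-mass Lipschitz property (a density can be subexponential and still be nearly flat at level $\talpha_x$ over a long interval). The paper avoids this entirely by restricting to $m=1$ and comparing the interval endpoints of the two optimization problems \eqref{minproblem} and \eqref{minproblemtrue} directly in Lemma~\ref{lem:optim}, using monotonicity of the unimodal density on either side of the mode and a lower bound on the density at the solution endpoints to invert perturbations of the mass constraint into $O(r_n)$ perturbations of $(\hat a(x),\hat b(x))$. Your version, as stated for general $m$, does not close without importing the margin condition you set out to avoid; either add that hypothesis or replace the shell argument with the endpoint-monotonicity argument in the unimodal case.
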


{\bf Remarks}: 

\begin{enumerate}

\item In the simple case when $p_\psi(y|x)$ is unimodal, 
the optimization problem~\eqref{general-minproblem} is replaced with,
\begin{equation} \label{minproblem}
  (\hat a(x),\hat b(x)) = \text{argmin}_{(a,b)} \left(b - a\right), 
    \qquad \text{subject to} \qquad \int_a^b \hat{p}_\psi^{(x,y)}(u|x) du 
      = 1 - \alpha,\; \text{and} \; b - a > 0.
\end{equation}  
In this case, we then construct 
$\widehat{U}^{(x,y)}_{\text{lwr}} = \widehat{F}^{(x,y)}(\hat a(x)|x)$ 
and $\widehat{U}^{(x,y)}_{\text{upr}} = \widehat{F}^{(x,y)}(\hat b(x)|x)$, 
and set 
$
  \widetilde{B}^{(x,y)} = (\widehat{U}^{(x,y)}_{\text{lwr}}, 
    \widehat{U}^{(x,y)}_{\text{upr}}).
$
From these quantities, we construct the conformity measure
$
  \sigma(u) = \mathds{1}_{(\widetilde{B}^{(x,y)})}(u),
$
specify that 
$$
  \alpha' = 1 - 
    \left(\frac{\ceil{(n+1)\widehat{U}^{(x,y)}_{\text{upr}}}}{n+1} 
      - \frac{\floor{(n+1)\widehat{U}^{(x,y)}_{\text{lwr}}}}{n+1}\right)
  \leq \alpha,
$$
and build $\Copttilde(x)$ with respect to these specifications as
\begin{equation} \label{transregion2}
  \Copttilde(x) = \left\{y:  
    \widehat{U}^{(x,y)}_{\left[\floor{(n+1)\widehat{U}^{(x,y)}_{\text{lwr}}}\right]} 
    \leq \widehat{U}^{(x,y)}_{n+1} \leq
    \widehat{U}^{(x,y)}_{\left[\ceil{(n+1)\widehat{U}^{(x,y)}_{\text{upr}}}\right]}
  \right\},
\end{equation}
where $y \in \Copttilde(x)$ implies that 
$\sigma(\widehat{U}^{(x,y)}_{n+1}) = 1$. 

\item The proof of Theorem~\ref{thm:correctmodel2} is given in the Appendix.  
Our proof technique is for the $m = 1$ case.  
The proof of Theorem~\ref{thm:correctmodel2} requires the concentration 
inequality for MLEs stated in Theorem~\ref{concentration}.  Under our proof 
technique the convergence rate $r_n = \sqrt{\log(n)/n}$ of $\Copttrans$ 
is sharp.  

\item The rates $r_n$ is appreciably faster than that of the 
nonparametric conformal prediction band and the binned parametric conformal 
prediction region $\Coptglm$ in \eqref{confregion} when the underlying 
density $p_\psi(y|x)$ has parameter space $\aleph_\X$.  The region $\Coptglm$ 
converges more slowly slower than $\Copttrans$ because the former 
prediction region is hampered by $n_k$, the number of observations per bin, 
as the bin widths shrink at rate $w_n$.  
However, our simulation studies show that $\Coptglm$ performs better than 
$\Copttrans$ when modest departures from the assumed model are present,  
due to its finite sample local validity guarantees.

\item The transformation parametric conformal prediction 
region in \citet{chernozhukov2019distributional} has generality beyond 
the context of the maximum likelihood estimation based methodology that 
we propose.  However, explicit convergence rates for their conformal 
prediction rates are not given and we speculate that they would be slower for 
the entire class of models for which their methodology is appropriate.  
In addition, our methodology leverages the underlying true distribution to 
construct the minimal length $(1-\alpha)\times 100\%$ prediction region under 
the assumed model when on the scale of the approximately uniform random 
variables.  This is instead of choosing the end points corresponding to the 
lower and upper $\alpha/2$ quantiles of the transformed approximate uniform 
variables.  Our focus on minimal length prediction regions and relatively fast 
convergence rates addresses the efficiency and utility concerns of conformal 
prediction in the context of regression which were raised in 
\citet{cortes2019concepts}.

\end{enumerate}

\subsection{Example: exponential family distributions and generalized linear models}
\label{sec:glms}

In this Section we show that a broad class of continuous GLMs can be 
parameterized by $\aleph_\X$ so that the conformal prediction methodology 
is applicable. 
The exponential families that we consider have densities 
$p_\theta : \R^p\to\R$ defined by 
\begin{equation}
  p_{\theta}(y) = e^{\inner{y, \theta} - c(\theta)}.
\label{density}
\end{equation}
We refer to $y$ and $\theta$ as the canonical statistic and canonical 
parameter respectively.
Densities of the form $p_{\theta}$ are integrated with respect to a 
\emph{generating measure} $\mu$, where $c(\theta)$ is the cumulant function 
defined as $c(\theta) = \log\int \exp(\inner{y, \theta})\, \mu(d y)$.
The \emph{effective domain} of $c$ is 
$
   \dom c = \left\{ \theta \in \R : c(\theta) < + \infty \right\}.
$
The exponential family is \emph{full} if $\Theta = \dom c$ and is 
\emph{regular} if $\Theta$ is an open set.  
  
We will only consider scalar outcome variables.  However, we can extend our 
setup to multidimensional canonical statistics, the normal distribution being 
an example with canonical statistic $z \mapsto (z,z^2)'$.  Thus we will assume 
that the canonical statistic vector lies on a one-dimensional manifold in the 
event that the canonical statistic is multidimensional.

When the density \eqref{density} corresponds to a generalized linear 
regression model we will re-parameterize  $\theta = f(x'\beta, \phi)$, 
where $f:\R^p\to\R^p$ is continuous in both arguments,  
the vectors $x \in \R^m$ and $\beta \in \R^m$ are a 
predictors and regression coefficients respectively, 
and 
$\phi \in \R^{p-1}$ is a vector of nuisance parameters.  
Let $\psi = (\beta, \phi)$ and set $r = m + p -1$.  We specify that the 
base set (main effects) of predictors have support $\X \in\R^d$, $d \leq m$.  
Here we assume, without loss of generality, that $\E(Y_1|x) = g^{-1}(x'\beta)$ 
where $g:\R\to\R$ is a link function and $Y_1$ is the first component of the 
canonical statistic vector.  As an example of this reparameterization, 
consider the simple linear regression model with homoskedastic normal errors 
with variance given by $\sigma^2$.  In this example $d = m = 1$, $r = 2$, and 
$\phi = \sigma^2$.  The link function $g$ is taken to be the identity 
function and $\mu_x = \E(Y|x) = x'\beta$.  

The reparameterized density corresponding to the generalized linear regression 
model is then 
\begin{equation}
  p_\psi(y|x) = \exp\left[
    \inner{y, f(x'\beta, \phi)} - c\left\{f(x'\beta, \phi)\right\}
  \right]
\label{glm}
\end{equation}
with respect to generating measure $\mu$.  We will further assume that $\X$ is 
bounded and that the exponential family is full with parameter 
space given by
$
  \Theta_\X = \{\beta \in \R^m,\; \phi \in \R^{p-1}: 
    c\left\{f(x'\beta, \phi)\right\} < \infty, 
      \; \text{for all} \; x \in \X\}.
$
Consider the multiple linear regression model with homoskedastic normal 
errors with variance given by $\sigma^2$.  Here we have 
$f(x'\beta, \phi) = (x'\beta, \sigma^2)'$ and 
$
  \Theta_\X = \{(\beta', \sigma^2)': 
    \beta \in \R^m, \; \sigma > 0, x \in \X\}.
$  
We will assume that the link function of the generalized linear regression 
model is known, the model is correctly specified, and that $\hat{\beta}$ is a 
$\sqrt{n}$-consistent estimator of the regression coefficients $\beta$. 
All of the continuous exponential families implemented in the \texttt{glm} 
function in R have densities that can be parameterized as \eqref{glm}.  
Moreover, we have that $\Theta_\X \subseteq \aleph_\X$ since,  
\begin{align*}
  &\frac{\partial}{\partial \psi} \log p_\psi(y|x) 
    = \frac{\partial}{\partial \psi}\left(
      \inner{y, f(x'\beta, \phi)} - c\left\{f(x'\beta, \phi)\right\}
      \right) \\
  &\qquad= \left(\frac{\partial f(x'\beta, \phi)}{\partial \psi}\right)
      \left(y - \frac{\partial}{\partial f(x'\beta,\phi)}c\left\{f(x'\beta, \phi)\right\}\right)
  = O(y),
\end{align*}
and $M_{Y,x,\psi}(t)$ exists for all $t$ in a neighborhood of 0, for all 
$x \in \X$ , and $\psi \in \Theta_\X$.  Therefore our conformal 
prediction regions are appropriate for the GLMs parameterized in this 
Section.  We show that our GLM parameterization satisfies 
Assumptions 2 and 3 in the Supplementary Materials. 

We describe some related work on conformal prediction techniques for 
regression settings in the next sections.  Finite-sample performance of our 
parametric conformal prediction region and these related methods is assessed 
in Section~\ref{sec:simulations}.

\section{Existing conformal prediction techniques in regression}

\subsection{The nonparametric kernel estimator}
\label{sec:nonparaconf}

\citet{lei2014distribution} proposed a nonparametric conformal prediction 
region for continuous outcomes that satisfies finite sample local validity.   
Partitioning of the predictor space 
is performed so that the nonparametric conformal prediction region within 
each partition achieves finite sample marginal validity.
When we restrict the partition to be formed of equilateral cubes, we let the 
widths of these cubes decrease at rate $w_n$.    
Let $n_k = \sum_{i=1}^n 1(X_i \in A_k)$.  
Let $K(\cdot)$ be a non-negative kernel function. The estimated local marginal 
density of $Y$ is
$
  \ptilnon(v|A_k) = (n_kh_n)^{-1}\sum_{i=1}^n \indicator{X_i \in A_k} 
    K\left((Y_i-v)/h_n\right)
$
where $h_n$ is the bandwidth. The density estimate given a new pair
$(x$,$y) \in A_k \times \R$ is
$$
  \ptilnon^{(x,y)}(v|A_k) = \frac{n_k}{n_k + 1}\ptilnon(v|A_k) 
    + \frac{1}{(n_k + 1)h_n}K\left(\frac{v-y}{h_n}\right).
$$
The local conformity rank is then
$$
  \widetilde{\pi}_{n,k}(x,y) = \frac{1}{n_k + 1}\sum_{i=1}^{n+1}1(X_i\in A_k)
    \indicator{\ptilnon^{(x,y)}(Y_i|A_k) \leq \ptilnon^{(x,y)}(Y_{n+1}|A_k)},  
$$
and the conformal prediction band is 
$
  \Coptloc(x) = \left\{y : \widetilde{\pi}_{n,k}(x,y) \geq \alpha\right\},
$
for all $x \in A_k$.  The intuition for constructing the conformal 
prediction region in this manner is that, asymptotically, the width $w_n$ 
shrinks so that the conditional distributions $Y|X=x$ for all $x\in A_k$ 
become very similar to that of $Y|X=x$ for any particular $x\in A_k$.  
\citet{lei2014distribution} provided that rate of convergence for which 
$\Coptloc(x)$ is asymptotically of minimal length.  In the Supplementary 
Materials we show that when the underlying model is a GLM, $\Coptloc(x)$ 
is asymptotically of minimal length at rate $w_n$.  

The nonparametric conformal prediction region is desirable when the underlying 
distribution is unknown or analytically intractable.  However, its convergence 
is slow when compared with its parametric counterpart.  Our analytical 
results and simulation studies show that the parametric conformal prediction 
region is preferable to the nonparametric conformal prediction region when the 
GLM is reasonably close to correctly specified.
In the Supplementary Materials, we verify that the nonparametric conformal 
region of \citet{lei2014distribution} is appropriate for the class of GLMs 
that we consider.

\subsection{Normalized residuals}
\label{sec:residconf}

\citet{lei2018distribution} proposed a prediction region obtained from 
conformal prediction for residuals, which is appropriate when errors are 
symmetric about the mean function $\mu(x)$. 
\citet[Section 5.2]{lei2018distribution} also proposed an extension of 
their conformal prediction procedure that is appropriate when the errors 
about $\mu(x)$ exhibit heterogeneity but remain symmetric.  
This extension involves a dispersion function $\rho(x)$ that 
captures the changing variability across $x$ and is used to weight the 
residuals so that theses weighted residuals have the same magnitude, 
on average, across $x$.  The conformal procedure of 
\citet[Section 5.2]{lei2018distribution}, denoted in this paper as the least 
squares locally weighted (LSLW) prediction region, proceeds as follows. 
When the mean regression estimator $\hat{\mu}(x)$ of $\mu(x)$ is a symmetric 
function of the data points, augment the original data with a new point 
$(x,y)$, and estimate the mean function $\hat{\mu}_y$ and the dispersion 
function $\hat{\rho}_y$ with respect to the augmented data.  

Define the normalized absolute residuals as 
$
  R_{y,i} = |Y_i - \hat{\mu}_y(X_i)|/\hat{\rho}_y(X_i), 
$
for $i = 1,\ldots,n+1$. 
The normalized absolute residual $R_{y,n+1}$ is an example of an 
anti-conformity measure, in which a low value of $R_{y,n+1}$ indicates 
agreement between $y$ and the estimated regression function.
As in \citet{lei2018distribution}, we specify the dispersion function as the 
conditional mean absolute deviation of $(Y - \mu(X))|X=x$ as a function 
of $x$.  Let 
$
  \pi_{\text{LSLW}}(y) 
    = (n+1)^{-1}\sum_{i=1}^{n+1}\indicator{R_{y,i} \leq R_{y,n+1}} 
$
be the proportion of points with normalized residuals smaller than the 
proposed normalized residual $R_{y,n+1}$.  Define the LSLW conformal prediction 
regions as 
$
  \Copt_{\text{LSLW}}(x) 
    = \left\{y \in \R : (n+1)\pi_{\text{LSLW}}(y) 
      \leq \ceil{(1-\alpha)(n+1)}\right\}. 
$
The least squares (LS) conformal prediction region is constructed as in 
$\Copt_{\text{LSLW}}(x)$ with no weighting with respect to residuals, 
$\hat{\rho}_y = 1$.  These conformal prediction regions achieve finite sample 
marginal validity as a consequence of exchangeability of the data and symmetry 
of $\hat{\mu}_y$ in its arguments \citep{lei2018distribution}.  

In our simulation studies we find that this approach to conformal prediction 
performs well when there are slight deviations from the symmetric error 
assumption.  However, in such settings these conformal prediction regions can 
be larger than desired, give larger than desired coverage, or give larger 
prediction errors than other conformal prediction regions.  Large departures 
from the symmetric errors assumption prove problematic for the LSLW 
conformal prediction region.

\section{Simulation study}
\label{sec:simulations}

We consider three simulation settings to evaluate the performance of conformal 
prediction regions under correct model specification and model 
misspecification.  These simulation settings are:
\begin{itemize}
\item[A)] Gamma regression with $\beta = [1.25, -1]'$ and $n = 150$. 
Data are generated from a Gamma regression model, and the 
parametric conformal and highest density prediction regions are correctly 
specified.  A cubic regression model is assumed for the LS and 
LSLW conformal prediction regions.  
\item[B)] Gamma regression with $\beta = [0.5, 1]'$ and $n = 150$. 
Data are generated from a Gamma regression model, and a cubic regression model 
with homoskedastic normal errors is assumed for the highest density prediction 
region and the misspecified parametric, LS, and LSLW conformal prediction regions. 
\item[C)] Simple linear regression with 
 $\beta = [2, 5]'$, and normal errors with constant variance $\sigma^2 = 1$.  
Results are considered for sample sizes 
$n \in \{150, 250, 500\}$.  In this setting the regression model is correctly 
specified for the highest density prediction region and the parametric, 
LS, and LSLW conformal prediction regions.
\end{itemize}

The prediction regions under consideration are the 
transformation based parametric,
binned parametric,
binned nonparametric,
LSLW,
and LS conformal prediction regions and the highest density (HD) prediction 
region under an assumed model. The transformation based parametric, binned 
parametric, and binned nonparametric are fit using the \texttt{conformal.glm} 
R package \citep{eck2018conformalR} with default settings employed which specifies 
line search precision at 0.005.  The LSLW and LS are fit using the 
\texttt{conformalInference} R package \citep{tibshirani2016conformalR} with 
the default settings employed which specifies that the number of grid points 
is 100.  The HD prediction region is fit using using the \texttt{HDInterval} 
R package \citep{meredith2018HDIntervalR}.
Following the bin width asymptotics of \citet{lei2014distribution},  
the number of bins used to form the binned parametric and nonparametric conformal 
prediction regions is 2 when $n = 150$ and 3 when $n = 250, 500$.  
All simulations correspond to univariate regressions and the predictor variables 
were generated as $X \sim U(0,1)$.  
Figure \ref{Fig:examples} shows four example conformal regions 
for simulation setting A, B, and C with $n = 150$.

\begin{figure}[t!]
\centering
\includegraphics[width=0.75\textwidth]{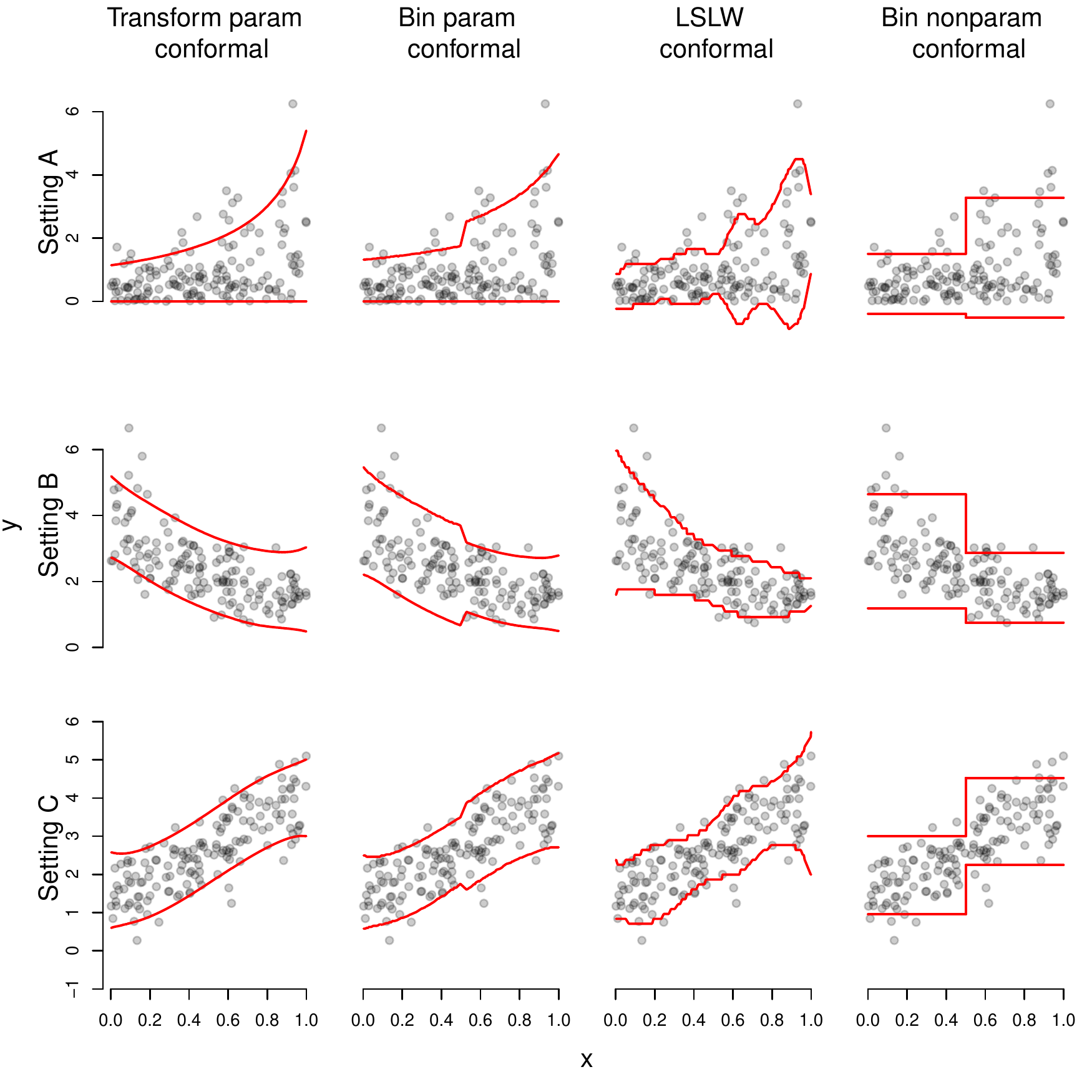}
\caption{Illustration of conformal prediction regions. 
  The top, middle, and bottom rows correspond to simulation setting A with shape 
  parameter equal to 1, simulation setting B with shape parameter equal to 10, 
  and simulation setting C respectively.  
  The first column depicts the transformation based parametric conformal 
  prediction region which is misspecified in row 2.  
  The second column depicts the binned parametric conformal prediction 
  region which is misspecified in row 2.  
  The third column depicts the locally weighted conformal prediction region.  
  The fourth column depicts the nonparametric conformal prediction region. }
\label{Fig:examples}
\end{figure}

Several diagnostic measures are used to compare conformal prediction regions.
These diagnostic measures compare prediction regions by their 
prediction error, volume, and coverage properties.
Our prediction error diagnostic metric will be an average of the squared 
distances of observations outside of the prediction region to the closest 
boundary of the prediction region, averaged over all 
observations. An observation that falls within the prediction region 
has an error of 0.  More formally this prediction error metric is 
$$ 
  \text{prediction error} 
    = n^{-1}\sum_{i=1}^n\indicator{Y_i \not\in \Copt(X_i)}
      \left(\min_{j=1,\ldots,m_i}\left\{\min\{|Y_i - a_{i,j}|, 
        |Y_i - b_{i,j}|\}\right\}\right)^2,
$$
where $a_{i,j}$ and $b_{i,j}$ are, respectively, the lower and upper 
boundaries of possible $j = 1, \ldots, m_i$ disjoint intervals forming 
the prediction region.  

The volume of each prediction region will be estimated by the average of the 
upper boundary minus the lower boundary across observed $\X$, written as
$$ 
  \text{area} = n^{-1}\sum_{i=1}^n\sum_{j=1}^{m_i}(b_{i,j} - a_{i,j}).
$$ 
To assess finite sample marginal validity we calculate the proportion of 
responses that fall within the prediction region.  To assess finite sample 
local validity with respect to binning we first bin the predictor data and 
then, for each bin, we calculate the proportion of responses that fall 
within the prediction region.  The same procedure is used to assess 
finite sample conditional validity, but we use a much finer binning regime 
than what was used to assess finite sample local validity.

\begin{figure} 
\begin{center}
\includegraphics[width=0.49\textwidth]{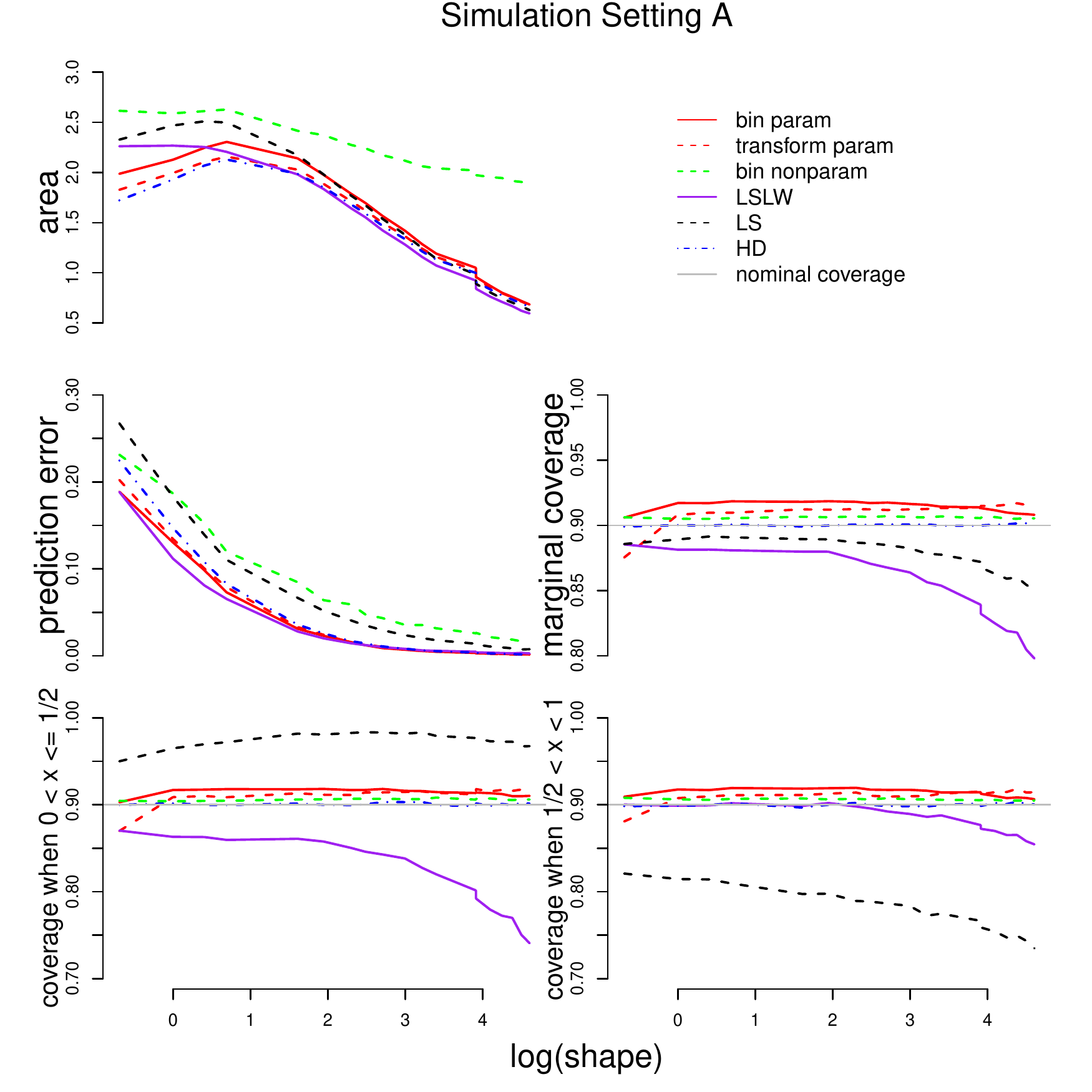}
\includegraphics[width=0.49\textwidth]{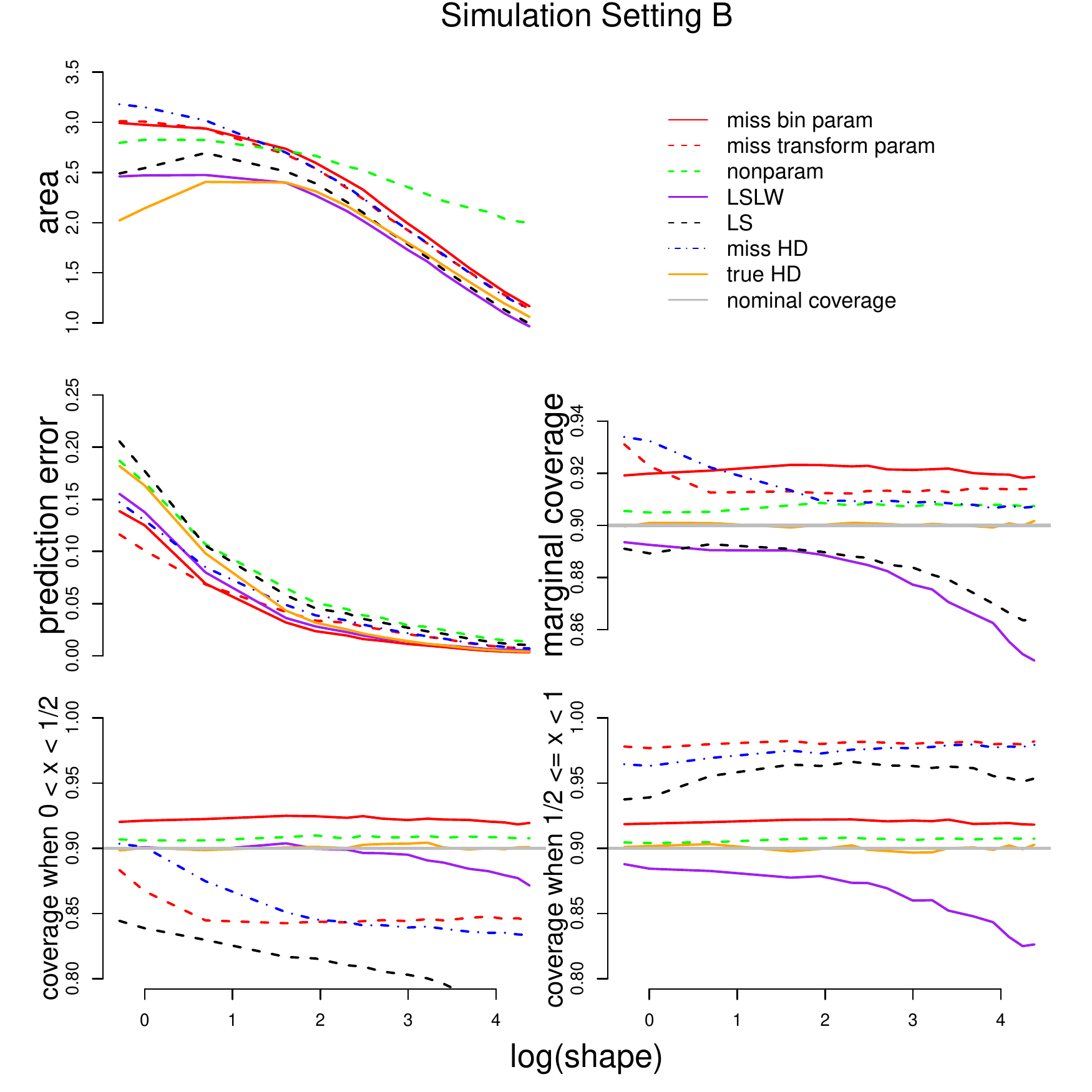}
\end{center}
\caption{
  Area, prediction error, and bin-wise coverage for 
    parametric, 
    nonparametric, 
    LS, 
    LSLW conformal prediction region, 
    and the highest density prediction region 
  for Gamma GLM regression with $n=150$ and $\alpha=0.1$.   Simulation setting A is shown at 
  left, and setting B at right. 
  The average of 250 samples at each shape parameter value in 
  these simulation settings form the lines that are depicted in both plots of 
  this figure.}
\label{Fig:diagnostics}
\end{figure}

In our simulations, we find that the parametric conformal 
prediction regions perform well even when the model is moderately 
misspecified. 
By construction, the binned parametric conformal prediction 
region, along with the binned nonparametric conformal prediction region, 
maintains finite sample local validity with respect to binning as seen in the 
bottom row of both plots in 
Figure~\ref{Fig:diagnostics} and Table~\ref{Tab:regression-results}.  
However, these prediction regions have different shapes conditional on $x$, 
as seen in Figure~\ref{Fig:examples}, and give different prediction errors as 
seen in the top row of both plots in Figure~\ref{Fig:diagnostics} and 
Table~\ref{Tab:regression-results}.  
Both parametric conformal prediction regions adapt naturally to the data when 
the model is correctly specified or when modest deviations from the specified 
model are present.  However, large deviations from model misspecification are 
not handled well as seen in the top row of the right hand side of
Figure~\ref{Fig:diagnostics}, although the transformation and binned parametric 
conformal prediction region give nominal marginal and local coverage 
respectively as seen in the middle row in the right hand side of 
Figure~\ref{Fig:diagnostics}.  
Note that the default precision allowed the transformation based parametric 
conformal prediction region to exhibits undercoverage for small values of 
$x$, and note overcoverage for large values of $x$ in the bottom row of the 
right hand side of Figure~\ref{Fig:diagnostics}.
On the other hand, the nonparametric conformal prediction region does not 
adapt well to data obtained from a Gamma regression model, and the former is 
not data adaptive in our linear regression setting.

\begin{table}[t!]
\scriptsize
\begin{center}
\begin{tabular}{llcccccc}
  & & OLS trans & OLS bin   & bin nonparametric & LS        & LSLW      & HD     \\
  & & conformal & conformal & conformal         & conformal & conformal & region \\ 
  $n = 150$
    & marginal coverage & $0.911$ & $0.919$ & $0.911$ & $0.878$ & $0.883$ & $0.904$ \\
    & local coverage when $0 < x < 1/2$ & $0.912$ & $0.918$ & $0.911$ & $0.878$ & $0.886$ & $0.903$ \\
    & local coverage when $1/2 \leq x < 1$ & $0.909$ & $0.92$ & $0.911$ & $0.878$ & $0.881$ & $0.904$ \\
    & area & $1.88$ & $1.994$ & $2.408$ & $1.775$ & $1.782$ & $1.872$ \\
    & prediction error & $0.01$ & $0.007$ & $0.011$ & $0.012$ & $0.011$ & $0.01$ \\
  \hline
  $n = 250$  
    & marginal coverage & $0.908$ & $0.916$ & $0.909$ & $0.875$ & $0.878$ & $0.902$ \\
    & local coverage when $0 < x < 1/3$ & $0.9$ & $0.916$ & $0.909$ & $0.87$ & $0.882$ & $0.896$ \\
    & local coverage when $1/3 \leq x < 2/3$ & $0.914$ & $0.913$ & $0.909$ & $0.879$ & $0.876$ & $0.907$ \\
    & local coverage when $2/3 \leq x < 1$ & $0.909$ & $0.92$ & $0.909$ & $0.877$ & $0.876$ & $0.904$ \\
    & area & $1.877$ & $1.974$ & $2.139$ & $1.751$ & $1.754$ & $1.864$ \\
    & prediction error & $0.009$ & $0.007$ & $0.01$ & $0.013$ & $0.012$ & $0.01$ \\
  \hline
  $n = 500$
    & marginal coverage & $0.908$ & $0.907$ & $0.904$ & $0.875$ & $0.873$ & $0.902$ \\
    & local coverage when $0 < x < 1/3$ & $0.911$ & $0.908$ & $0.906$ & $0.878$ & $0.875$ & $0.905$ \\
    & local coverage when $1/3 \leq x < 2/3$ & $0.907$ & $0.907$ & $0.904$ & $0.874$ & $0.871$ & $0.902$ \\
    & local coverage when $2/3 \leq x < 1$ & $0.905$ & $0.907$ & $0.904$ & $0.874$ & $0.875$ & $0.899$ \\
    & area & $1.886$ & $1.904$ & $2.098$ & $1.735$ & $1.731$ & $1.86$ \\
    & prediction error & $0.009$ & $0.009$ & $0.011$ & $0.013$ & $0.013$ & $0.01$ 
\end{tabular}
\end{center}
\caption{Diagnostics for conformal prediction regions for linear regression 
  models with normal errors and constant variance.  Local and marginal 
  coverage properties, areas, and prediction errors are presented for the 
    transformation based and binned parametric conformal prediction regions,
    binned nonparametric conformal prediction region,
    LS conformal prediction region, 
    LSLW conformal prediction region, and 
    HD prediction region.} 
\label{Tab:regression-results}
\end{table}

The LS conformal prediction region obtains marginal validity 
\citep{lei2018distribution} but performs poorly when deviations about the 
estimated mean function are either not symmetric, not constant, or both.  
When heterogeneity is present, the LS conformal prediction region 
exhibits undercoverage in regions where variability about the mean function 
is large and over-coverage in regions where variability about the mean 
function is small.  Clear evidence of these features are seen in 
Figure~\ref{Fig:diagnostics} and \ref{Fig:examples}.
This conformal prediction region is very sensitive to model misspecification.  
The LSLW conformal prediction region also guarantees  
marginal validity \citep[Section 5.2]{lei2018distribution}, although our 
simulations reveal that the default software implementation struggles to 
guarantee this in practice.  That being said, the LSLW region appears to 
exhibit flexibility under mild model misspecification as evidenced in  
the second row of Figure \ref{Fig:examples} and our Supplementary Materials.
This region is far less sensitive to model misspecification than the LS 
conformal prediction region, and it performs well under modest model 
misspecification.  However, the LSLW conformal 
prediction region is not appropriate when deviations about an estimated mean 
function are obviously not symmetric, as evidenced in the top row of 
Figure~\ref{Fig:examples}. 
Results from additional simulations corresponding to settings A 
and B are provided in the Supplementary Materials.  The findings from these 
additional simulations are consistent with the conclusions of the simulations 
presented in this Section.

\section{Predicting the risk of diabetes}

\begin{figure}[t]
\begin{center}
\includegraphics[width=0.9\textwidth]{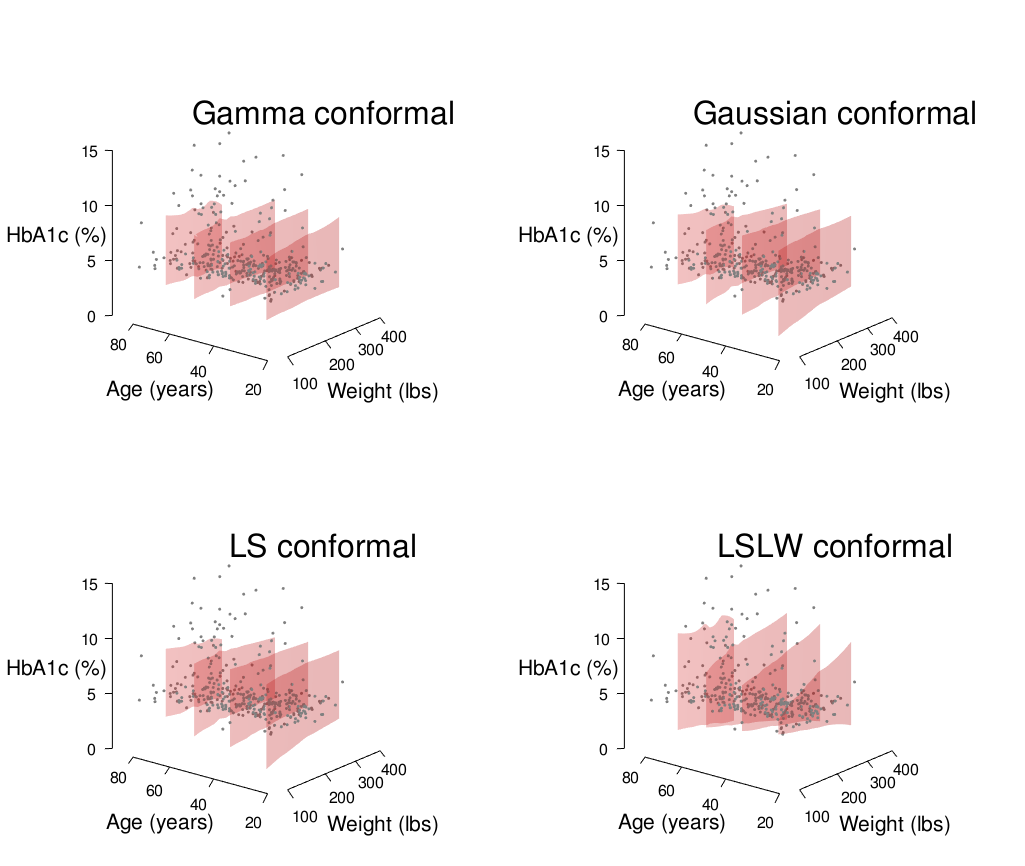}
\end{center}
\caption{ 
  Conformal prediction regions for glycosylated hemoglobin projected onto the 
  age and weight predictor axes.
  Upper and lower bounds of the conformal prediction region are loess smoothed for visual 
  appearance.
}
\label{Fig:3d-conformal-regions}
\end{figure}

Diabetes is a group of metabolic diseases associated with long-term 
damage, dysfunction, and failure of different organs, especially the eyes, 
kidneys, nerves, heart, and blood vessels \citep{american2010diagnosis}.
In 2017 approximately 5 million adult deaths worldwide 
were attributable to diabetes; global healthcare expenditures on people with 
diabetes are estimated USD 850 billion 
\citep{cho2018idf}.  Diabetes remains undiagnosed for an estimated 30\% of 
the people who have the disease \citep{heikes2008diabetes}.  
One way to address the problem of undiagnosed diabetes is to develop simple, 
inexpensive diagnostic tools that can identify people who are at high risk of 
pre-diabetes or diabetes using only readily-available clinical or demographic 
information \citep{heikes2008diabetes}.

We examine the influence of several variables on blood sugar, or glycosylated 
hemoglobin percentage (also known as HbA1c), an important risk factor for 
diabetes.  A glycosylated hemoglobin value of 6.5\% can be used as a cutoff 
for positive diagnosis of diabetes \citep{world2011use}.  
We predict an individual's glycosylated hemoglobin from their height, 
weight, age, and gender, all of which are easy to measure, inexpensive, and do 
not require any laboratory testing.  
The data in this analysis come from a population-based sample of 403 rural 
African-Americans in Virginia \citep{willems1997prevalence}, taken from the 
\texttt{faraway} R package \citep{faraway2016R}.  
We considered a gamma regression model that only includes linear terms for 
each covariate, a linear regression model with homoskedastic normal errors 
and the same linear terms for each covariate, and a linear regression model 
with homoskedastic normal errors that also included quadratic terms for 
each covariate.
Of these considered models, the gamma regression model fit the data 
best: it had the lowest AIC value and gives the best 
predictive predictive performance as measured by the sum of squares prediction error.  

Based on these covariates, conformal prediction regions provide finite sample 
valid prediction regions for glycosylated hemoglobin that may be useful for 
diagnosing diabetes in this study population.
Six conformal prediction regions are considered for predicting glycosylated 
hemoglobin percentage.  These conformal prediction regions are the binned and 
transformation parametric conformal prediction region with a Gamma model fit, 
the binned and transformation parametric conformal prediction region with a 
Gaussian model fit, the LS conformal prediction region, 
and the LSLW conformal prediction region.  
All conformal prediction regions correspond to models that only include 
linear terms for each of the covariates.  
The binned Gamma and Gaussian parametric conformal prediction regions were 
computed with binning across the binary gender factor variable, the predictor 
space is partitioned across genders. 
However, no additional binning structure within the levels of gender was 
employed.


\begin{table}[ht]
\centering
\begin{tabular}{rrrrrrr}
  \hline
 & Gamma trans & Gaussian trans & Gamma bin & Gaussian bin & LSLW      & LS \\
 & conformal   & conformal      & conformal & conformal    & conformal & conformal \\ 
  \hline
marginal coverage & 0.901 & 0.924 & 0.909 & 0.906 & 0.880 & 0.888 \\ 
  volume & 7.656 & 8.560 & 7.349 & 7.730 & 8.574 & 7.103 \\ 
  pred error & 0.653 & 0.425 & 0.931 & 0.849 & 0.803 & 1.012 \\ 
  avg.cond.coverage & 0.856 & 0.888 & 0.874 & 0.849 & 0.863 & 0.827 \\ 
   \hline
\end{tabular}
\label{Tab:diagnostics}
\caption{  Diagnostics for prediction regions.  Marginal coverage 
  ($\alpha = 0.10$), prediction region volume, prediction error, and average 
  conditional coverage make up the first four rows. } 
\end{table}

Diagnostics from the six conformal prediction regions are depicted in 
Table~\ref{Tab:diagnostics}.  The error tolerance for all prediction regions 
was set at $\alpha = 0.10$.  We see that parametric conformal prediction 
regions maintain their advertised finite sample marginal validity for the 
predictions of glycosylated hemoglobin.  
These prediction regions provide a balance between marginal coverage, size, 
prediction error, and average conditional coverage (the average of the 
coverage probabilities taken over small subregions of the predictor space).  
The transformation Gamma conformal prediction region balances these criteria 
particularly nicely. This prediction region is relatively small, has relatively 
small prediction error, and it gives near nominal desired coverage.  This 
finding is expected when the underlying estimated density used as the 
parametric conformity measure is a good approximation of the data generating 
model.

Plots of three dimensional conformal prediction region are displayed in 
Figure~\ref{Fig:3d-conformal-regions}.  These plots are projections of each 
conformal prediction regions to the age (in years), weight (in pounds), and 
glycosylated hemoglobin percentage three dimensional space.

\section{Discussion}

The finite sample validity properties of conformal prediction regions have been 
verified in broader methodological contexts, 
including support vector machines, ridge regression, nearest neighbor 
regression, neural networks, and decision decision trees 
\citep{vovk2005springer, gammerman2007hedging, papa2011neural, 
  vovk2012conditional}.  
While any conformity measure function will achieve finite sample validity, a
careful choice may make the returned conformal prediction regions more useful in 
applications \citep{papa2011neural}.
The developments of conformal prediction in the machine learning literature 
show how empirically successful prediction methods in machine learning can be 
hedged to give valid predictions in finite samples 
\citep{gammerman2007hedging}.  
The developments of conformal prediction in the statistics literature show 
that specification of the conformity measure to incorporate knowledge about 
the data generating process can lead to conformal prediction regions which 
are also asymptotically of minimal length 
\citep{lei2014distribution, dunn2018distribution}.  
Our parametric conformal prediction regions for parametric regression models 
falls within this line of research, which addresses efficiency concerns of 
conformal prediction methodology raised by \citet{cortes2019concepts}.
This line of research shows that when uncertainty about point predictions is 
considered, regression modeling provides smaller prediction regions when we 
require regions to guarantee valid finite sample coverage.  

Because GLMs are widely used by empiricists conducting regression analyses, 
parametric conformal prediction for GLM regression therefore may offer an 
appealing compromise for the applied researcher: when the GLM is correctly 
specified, conformal prediction regions are asymptotically minimal, 
finite sample local and marginal validity holds, and the rate of convergence 
is fast; when the GLM is incorrectly specified, asymptotic minimality is not 
guaranteed, but local finite sample validity still holds by construction.  
Researchers who currently use GLMs to compute prediction intervals for 
the mean regression function may be able to easily 
integrate conformal prediction into their data analysis workflow, since 
specification and fitting of the GLM is unchanged.   
A software package that accompanies this paper implements the parametric and  
nonparametric conformal prediction regions \citep{eck2018conformalR}.

The robustness properties of conformal prediction come at a 
substantial computational cost \citep{vovk2012conditional}.  
In our software implementation of the conformal methodology, 
two line searches are performed to determine the boundaries of the possibly 
disjoint conformal prediction region at every point at which a prediction 
region is desired.  The conformity scores must be recomputed with respect 
to augmented data at every iteration of these line searches.  
This involves refitting the parametric model to augmented data at every 
iteration of the line search to construct the parametric conformal prediction 
region.  Furthermore, when sample sizes are very large, conformal prediction 
may not offer much additional benefit beyond the parametric conditional prediction 
regions available in popular software packages for regression.  However, 
when sample sizes are moderate, conformal prediction may 
substantially outperform traditional methods in terms of finite sample 
marginal, local, and conditional coverage. 

The asymptotic optimality properties of parametric conformal prediction 
regions follow from our concentration inequality for maximum likelihood 
estimation stated in Theorem~\ref{concentration}.  
We conjecture that a similar concentration inequality may hold for random 
variables whose score functions do not have functional subexponential behavior
(see Appendix), and/or a non-existent MGF.  This concentration inequality 
will likely induce convergence rates that are slower than 
$\sqrt{\log(n)/n}$.  
\citet{dunn2018distribution} showed that finite sample 
validity holds in the presence of random effects.  We expect that asymptotic 
optimality properties for parametric conformal prediction regions can be 
extended to their settings and to the class of generalized linear mixed 
models.  

\textbf{Supplementary Materials}: Additional simulation results are 
available in the accompanying Supplementary Materials document 
(appended at the end of the main text).
An accompanying R package is available at 
\url{https://github.com/DEck13/conformal.glm} \citep{eck2018conformalR}. 
A technical report that includes the data and all of the code necessary to 
reproduce the findings, tables, and figures in this manuscript is available at 
\url{https://github.com/DEck13/conformal.glm/tree/master/techreport}.

\textbf{Acknowledgments}: We are grateful to 
Peter M. Aronow, 
Karl Oskar Ekvall, 
Jing Lei,
Aaron J. Molstad, 
Molly Offer-Westort, 
Cyrus Samii, 
Fredrik S\"avje, 
and
Larry Wasserman 
for helpful comments.  
This work was supported by NIH grant NICHD 1DP2HD091799-01.

\appendix

\section*{Appendix: Mathematical details}

\subsection*{Subexponential random variables}

The following definition and two lemmas are taken from 
\citet[Chapter 2]{wainwright2019high}.

\begin{defn}
A mean zero random variable $Y$ is said to be subexponential with parameters 
$(\tau^2, b)$ if 
$
  \E\left\{\exp(tY)\right\} \leq \exp\left(t^2\tau^2/2\right) 
$ 
for all $|t| \leq 1/b$. 
\label{subexpo}
\end{defn}

\begin{lem}
For a mean zero random variable $Y$, the following are equivalent:
(a) $Y$ is subexponential with parameters $(\tau^2, b)$;
(b) There is a positive number $c_0 > 0$ such that 
  $\E\left(e^{tY}\right) < \infty$ for all $|t| < c_0$; 
(c)
  $
    \Prob\left\{Y \geq \E(Y) + t\right\} 
      \leq \max\left(e^{-\frac{t^2}{2\tau^2}}, e^{-\frac{t}{2b}}\right). 
  $
\label{equiv}
\end{lem}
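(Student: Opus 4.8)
The plan is to establish the equivalence through the cycle (a) $\Rightarrow$ (c) $\Rightarrow$ (b) $\Rightarrow$ (a), allowing the constants $(\tau^2, b)$ and $c_0$ to change by fixed factors along the way; this is the sense in which the three characterizations are equivalent. Throughout I use $\E(Y) = 0$ as assumed. The implication (a) $\Rightarrow$ (c) is a Chernoff argument. For $t > 0$ and $0 \le \lambda \le 1/b$, Markov's inequality applied to $e^{\lambda Y}$ together with the defining bound $\E(e^{\lambda Y}) \le e^{\lambda^2 \tau^2/2}$ gives $\Prob(Y \ge t) \le \exp(-\lambda t + \lambda^2 \tau^2/2)$, and the whole content is in optimizing the exponent over the \emph{constrained} range $\lambda \in [0, 1/b]$. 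The unconstrained minimizer $\lambda = t/\tau^2$ is admissible precisely when $t \le \tau^2/b$, yielding the Gaussian branch $e^{-t^2/(2\tau^2)}$; for $t > \tau^2/b$ the optimum is pinned at $\lambda = 1/b$ and, after checking $\tau^2/(2b^2) < t/(2b)$ on this range, collapses to the exponential branch $e^{-t/(2b)}$. Taking the larger of the two branches reproduces the maximum in (c).

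For (c) $\Rightarrow$ (b) I would recover finiteness of the moment generating function by integrating the tail. Splitting on the sign of $Y$ and using $\E[e^{tY}\indicator{Y < 0}] \le 1$ together with the layer-cake identity $\E[e^{tY}\indicator{Y \ge 0}] = \Prob(Y \ge 0) + t\int_0^\infty e^{ts}\Prob(Y \ge s)\,ds$, the bound in (c) gives $\Prob(Y \ge s) \le e^{-s^2/(2\tau^2)} + e^{-s/(2b)}$, so the integral converges for every $t < 1/(2b)$. Hence $\E(e^{tY}) < \infty$ on $0 < t < 1/(2b)$. One subtlety is that (c), as written, controls only the upper tail, so as usual (Wainwright) it is read together with its analogue for $-Y$; applying the same integration to the lower tail yields finiteness of the MGF on a symmetric neighborhood $|t| < c_0$ with $c_0 = 1/(2b)$, which is (b).

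The remaining implication (b) $\Rightarrow$ (a) is the step I expect to be the main obstacle, because it must upgrade mere finiteness of the MGF in a neighborhood into the \emph{quantitative} quadratic bound $e^{t^2\tau^2/2}$ with explicit parameters. The plan is to work with the cumulant generating function $g(t) = \log\E(e^{tY})$, which finiteness of the MGF on $(-c_0, c_0)$ renders infinitely differentiable there (this also requires justifying differentiation under the integral sign and finiteness of all moments, both standard consequences of MGF finiteness on an open neighborhood). One then has $g(0) = 0$, $g'(0) = \E(Y) = 0$, and $g''(t) = \Var_t(Y) \ge 0$, the variance under the exponentially tilted law. Fixing any $b$ with $1/b < c_0$, Taylor's theorem with Lagrange remainder gives $g(t) = \tfrac{1}{2} t^2 g''(\xi)$ for some $\xi$ between $0$ and $t$; since $g''$ is continuous on the compact interval $[-1/b, 1/b]$ it attains a finite maximum $\tau^2 := \sup_{|s| \le 1/b} g''(s)$, whence $g(t) \le \tfrac{1}{2}\tau^2 t^2$ and therefore $\E(e^{tY}) \le e^{t^2 \tau^2/2}$ for all $|t| \le 1/b$, which is exactly (a). The delicate point here is obtaining a \emph{uniform} bound on $g''$ over a fixed interval rather than an asymptotic expansion valid only as $t \to 0$; restricting to the compact subinterval $[-1/b, 1/b]$ strictly inside the domain of finiteness is what makes the maximum finite and the argument go through.
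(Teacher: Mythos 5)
The paper offers no proof of this lemma at all---it is imported directly from \citet[Chapter 2]{wainwright2019high}---and your cycle (a) $\Rightarrow$ (c) $\Rightarrow$ (b) $\Rightarrow$ (a) is precisely the standard argument from that source: the Chernoff bound with the constrained optimization over $\lambda \in [0, 1/b]$ splitting at $t = \tau^2/b$, tail integration via the layer-cake identity, and the Taylor-plus-compactness bound on the cumulant generating function. Your proof is correct, and you rightly flag the one subtlety the paper's statement glosses over: condition (c) as written controls only the upper tail, so it must be read together with its analogue for $-Y$ (as in the source) for the implication (c) $\Rightarrow$ (b) to yield finiteness of the moment generating function on a two-sided neighborhood of zero.
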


\begin{lem}
Let $Y_i$ be independent mean zero subexponential random variables with 
parameters $(\tau_i^2, b_i)$. Then $\sum_{i=1}^n Y_i$ is subexponential with 
parameters $\left(\sum_{i=1}^n\tau_i^2, b_{\star}\right)$ where 
$b_{\star} = \max_{i}(b_i)$ and 
\begin{equation}
  \Prob\left\{ \left|\frac{1}{n}\sum_{i=1}^n Y_i\right| \geq t \right\}
    \leq 2\exp\left\{-\min\left(
        \frac{nt^2}{2n^{-1}\sum_{i=1}^n\tau_i^2}, 
        \frac{nt}{2b_{\star}}
      \right)\right\}.
\label{sumsubexpo}
\end{equation}
\label{sums}
\end{lem}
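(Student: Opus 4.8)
The plan is to proceed in two stages: first verify that the sum inherits the claimed subexponential parameters, then convert that moment generating function control into the stated two-sided tail bound by a Chernoff argument. By Definition~\ref{subexpo}, each $Y_i$ satisfies $\E\{\exp(tY_i)\} \leq \exp(t^2\tau_i^2/2)$ on the interval $|t| \leq 1/b_i$. Writing $S = \sum_{i=1}^n Y_i$ and using independence, I would factor the moment generating function as $\E\{\exp(tS)\} = \prod_{i=1}^n \E\{\exp(tY_i)\}$, so that on the common interval $|t| \leq \min_i 1/b_i = 1/b_\star$ the bounds multiply to give $\E\{\exp(tS)\} \leq \exp\{t^2(\sum_{i=1}^n \tau_i^2)/2\}$. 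This is precisely the statement that $S$ is subexponential with parameters $(\sum_{i=1}^n \tau_i^2, b_\star)$, establishing the first assertion.

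For the tail bound I would apply Markov's inequality to $\exp(tS)$, yielding $\Prob\{S \geq s\} \leq \exp(-ts)\,\E\{\exp(tS)\} \leq \exp(-ts + t^2\nu^2/2)$ for every $t \in [0, 1/b_\star]$, where I abbreviate $\nu^2 = \sum_{i=1}^n \tau_i^2$. The remaining work is the constrained minimization of the exponent $-ts + t^2\nu^2/2$ over $t \in [0, 1/b_\star]$. The unconstrained minimizer is $t = s/\nu^2$, which is admissible precisely when $s \leq \nu^2/b_\star$; in that regime substitution gives the subgaussian bound $\Prob\{S \geq s\} \leq \exp\{-s^2/(2\nu^2)\}$. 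When $s > \nu^2/b_\star$ the optimizer falls outside the feasible interval, so I would take the boundary value $t = 1/b_\star$ and check, using $s > \nu^2/b_\star$, that $-s/b_\star + \nu^2/(2b_\star^2) < -s/(2b_\star)$, yielding the exponential tail $\Prob\{S \geq s\} \leq \exp\{-s/(2b_\star)\}$. Since the same comparison shows the two exponents cross at $s = \nu^2/b_\star$, the two cases combine into $\Prob\{S \geq s\} \leq \exp\{-\min(s^2/(2\nu^2),\, s/(2b_\star))\}$, which is the one-sided form recorded in Lemma~\ref{equiv}(c).

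To finish I would symmetrize and rescale. Because $-Y_i$ is mean zero with $\E\{\exp(t(-Y_i))\} = \E\{\exp((-t)Y_i)\} \leq \exp(t^2\tau_i^2/2)$ on $|t| \leq 1/b_i$, each $-Y_i$ is subexponential with the identical parameters, so the one-sided bound applies verbatim to $-S$. A union bound then doubles the constant: $\Prob\{|S| \geq s\} \leq 2\exp\{-\min(s^2/(2\nu^2), s/(2b_\star))\}$. Setting $s = nt$ and noting $\{|n^{-1}S| \geq t\} = \{|S| \geq nt\}$, I rewrite $(nt)^2/(2\nu^2) = nt^2/(2n^{-1}\nu^2)$ to recover the displayed inequality~\eqref{sumsubexpo} exactly. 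The only point demanding care is the case split in the constrained optimization—ensuring that the boundary choice $t = 1/b_\star$ genuinely dominates in the large-$s$ regime and that the two branches agree with the reported minimum at the crossover $s = \nu^2/b_\star$; everything else is a routine application of independence and Markov's inequality.
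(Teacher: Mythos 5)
Your proof is correct and complete: the MGF factorization of $\E\{\exp(tS)\}$ on the common interval $|t| \leq 1/b_{\star}$, the constrained Chernoff optimization with the case split at $s = \nu^2/b_{\star}$ (including the check that the boundary choice $t = 1/b_{\star}$ yields exponent below $-s/(2b_{\star})$ in the large-$s$ regime), and the symmetrization, union bound, and rescaling $s = nt$ are all sound. The paper itself gives no proof of this lemma, citing it verbatim from Wainwright (2019, Chapter 2), and your argument is exactly the standard one found there, so your approach coincides with the paper's source.
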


Let $l_\psi(Y|x) = \log p_\psi(Y|x)$ be the log likelihood for the random 
variable $Y|X=x$ and let $\nabla_{\psi}l_\psi(Y|x)$ be the corresponding score 
function.  
We will now motivate the construction of the parameter space $\aleph_\X$.  
This parameter space requires a notion of functional subexponential that we 
will now define and examine.

\begin{defn}[functional subexponential]
Let $Y$ be a random variable and suppose that $f:\R\to\R$ is a function such 
that $E\{f(Y)\} = 0$.  We say that $f$ is \emph{functional subexponential} 
with respect to $Y$ with parameters $(\tau^2, b)$ if 
$
  \E\left\{\exp(tf(Y))\right\} \leq \exp\left(t^2\tau^2/2\right) 
$ 
for all $|t| \leq 1/b$.
\end{defn}

\begin{defn}[multivariate functional subexponential]
Let $Y$ be a random variable and suppose that $f:\R\to\R^r$ is a function 
such that $E\{f(Y)\} = 0$.  We say that $f$ is 
\emph{multivariate functional subexponential} with respect to $Y$ with parameters 
$(\tau^2, b)$ if 
$
  \E\left\{\exp(t'f(Y))\right\} \leq \exp\left(|t|^2\tau^2/2\right) 
$ 
for all $|t| \leq 1/b$.
\end{defn}

\begin{lem}
Let $Y$ be a random variable and suppose that $f:\R\to\R^r$ is a function 
such that $E\{f(Y)\} = 0$.  Then $f(Y)$ is multivariate functional subexponential 
with respect to $Y$ if and only the components of $f(Y)$ are functional 
subexponential with respect to $Y$.
\label{lem:fsubexpo}
\end{lem}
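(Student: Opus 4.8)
The plan is to prove the two implications separately, writing $f = (f_1, \ldots, f_r)$ with $\E\{f_j(Y)\} = 0$ for each $j$.

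For the forward direction (multivariate implies componentwise), I would simply specialize the defining inequality to coordinate directions. Suppose $f(Y)$ is multivariate functional subexponential with parameters $(\tau^2, b)$. Fixing an index $j$ and a scalar $s$ with $|s| \leq 1/b$, take $t$ equal to $s$ times the $j$th standard basis vector. Then $t'f(Y) = s\,f_j(Y)$ and $|t| = |s|$ under the $L^1$ norm, so the multivariate bound reduces to $\E\{\exp(s\,f_j(Y))\} \leq \exp(s^2\tau^2/2)$ for all $|s|\leq 1/b$. This is exactly the scalar functional subexponential condition for $f_j$ with the same parameters $(\tau^2, b)$, so every component inherits the property. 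This direction is immediate and carries no obstruction.

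For the reverse direction (componentwise implies multivariate), the difficulty is that the $f_j(Y)$ are in general dependent, being functions of the common $Y$, so the joint MGF does not factor. The plan is to control this dependence through H\"older's inequality. Assume each $f_j$ is functional subexponential with parameters $(\tau_j^2, b_j)$, and set $\tau^2 = \max_j \tau_j^2$ and $b = \max_j b_j$. Fix $t$ with $|t| \leq 1/b$; discarding coordinates with $t_j = 0$, I may assume $t_j \neq 0$ on the active set. The key choice is the conjugate exponents $p_j = |t|/|t_j|$, which satisfy $\sum_j 1/p_j = \sum_j |t_j|/|t| = 1$ precisely because the norm is $L^1$. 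H\"older's inequality then gives
\begin{equation*}
  \E\{\exp(t'f(Y))\}
    = \E\left\{\prod_j \exp(t_j f_j(Y))\right\}
    \leq \prod_j \left(\E\{\exp(p_j t_j f_j(Y))\}\right)^{1/p_j}.
\end{equation*}
Since $|p_j t_j| = |t| \leq 1/b \leq 1/b_j$, the scalar bound applies to each factor and yields $\E\{\exp(p_j t_j f_j(Y))\} \leq \exp((p_j t_j)^2 \tau_j^2/2)$. Substituting, the total exponent is $\tfrac12\sum_j p_j t_j^2 \tau_j^2 = \tfrac12 |t| \sum_j |t_j|\,\tau_j^2 \leq \tfrac12 |t|^2 \tau^2$, using $\tau_j^2 \leq \tau^2$ and $\sum_j |t_j| = |t|$. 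Hence $\E\{\exp(t'f(Y))\} \leq \exp(|t|^2\tau^2/2)$ for all $|t|\leq 1/b$, which is the claimed property with parameters $(\tau^2, b)$.

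The main obstacle is thus the reverse direction, specifically the need to decouple dependent components. The resolution is the observation that the $L^1$ norm in the definition of multivariate functional subexponential makes the H\"older exponents $p_j = |t|/|t_j|$ both admissible (conjugate) and compatible with the per-component constraint regions $|p_j t_j|\leq 1/b_j$, which is what lets the constant $|t|^2\tau^2$ emerge cleanly. Were a different norm used, this exact cancellation would fail and one would be forced to absorb dimension-dependent factors into $\tau^2$ and $b$; the $L^1$ choice is precisely what keeps the parameters sharp.
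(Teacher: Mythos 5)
Your proof is correct, and the reverse direction takes a genuinely different route from the paper's. The paper decouples the dependent components by iterated Cauchy--Schwarz, bounding $\E\{\prod_j \exp(t_j f_j(Y))\}$ by $\prod_j \E\{\exp(2^j t_j f_j(Y))\}^{1/2^j}$, which forces a dimension-dependent inflation of the variance parameter to $\tau' = \tau\,2^{r/2}$; moreover that argument evaluates the componentwise MGF bounds at the points $2^j t_j$, which need not satisfy $|2^j t_j| \leq 1/b$ when only $|t| \leq 1/b$ is assumed, a point the paper does not address. Your use of the generalized H\"older inequality with the conjugate exponents $p_j = |t|/|t_j|$ exploits the $L^1$ norm so that $\sum_j 1/p_j = 1$ and each H\"older argument satisfies $|p_j t_j| = |t| \leq 1/b \leq 1/b_j$ exactly, staying inside the admissible range and producing the clean exponent $\tfrac12 |t| \sum_j |t_j| \tau_j^2 \leq \tfrac12 |t|^2 \tau^2$. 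The payoff is that you obtain the multivariate property with the sharp, dimension-free parameters $(\max_j \tau_j^2, \max_j b_j)$ and avoid the domain issue entirely; since the lemma as stated does not pin down the parameters, both arguments establish it, but yours is the tighter and more careful one. The forward direction is identical to the paper's specialization to coordinate directions.
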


\begin{proof}
First suppose that $f(Y)$ is multivariate functional subexponential 
with respect to $Y$.  Let $s_j$ be the $r$ dimensional zeros vector 
with an $s$ in the $j$th component.  Then 
$
  \E\left\{\exp(sf_j(Y))\right\} \leq \exp\left(s^2\tau^2/2\right) 
$ 
for all $|s| \leq 1/b$ where $f_j(Y)$ is the $j$th component of $f(Y)$.  
It follows that the components of $f(Y)$ are functional subexponential 
with respect to $Y$.

Now suppose that the components of $f(Y)$ are functional subexponential 
with respect to $Y$ with parameters $(\tau^2_j, b_j)$, $j = 1,...,r$.  
Let $b = \max(b_j)$ and $\tau = \max(\tau_j)$.  Then 
$
  \E\left\{\exp(sf_j(Y))\right\} \leq \exp\left(s^2\tau^2/2\right) 
$ 
holds for all $|s| \leq 1/b$.  Now pick a vector $t \in \R^r$ such that 
$|t| < 1/b$.  Then by multiple applications of Cauchy Scharwz and the 
supposition that the components of $f(Y)$ are functional subexponential 
we have that 
\begin{align*}
  &\E\left\{\exp(t'f(Y))\right\} 
    = \E\left\{\prod_{j=1}^r\exp(t_jf_j(Y))\right\}
    \leq \prod_{j=1}^r \E\left\{\exp(2^jt_jf_j(Y))\right\}^{1/2^j} \\
  &\qquad \leq \prod_{j=1}^r \E\left\{\exp(2^{2j-1}t_j^2\tau^2)\right\}^{1/2^j}
    \leq \prod_{j=1}^r \E\left\{\exp(2^jt_j^2\tau^2/2)\right\} 
    \leq \exp\left(\frac{|t|^2 2^r\tau^2}{2}\right),
\end{align*}
for all $|t| < 1/b$. Set $\tau' = \tau 2^{r/2}$, then $f(Y)$ is multivariate 
subexponential with respect to $Y$ with parameters $(\tau', b)$.  The 
conclusion follows.
\end{proof}

The following Lemma is the key piece that ties $\aleph_\X$ together with the 
concentration inequality given in Theorem~\ref{concentration}.

\begin{lem} 
Let $Y|X=x$ be a random variable and suppose that the moment generating 
function $M_{Y,x,\psi}(t)$ exists for all $t$ in a neighborhood of $0$.  
Let $p_\psi(y|x)$ be the corresponding probability density function with 
parameter vector $\psi \in \R^r$.  Assume that $p_\psi(y|x)$ is differentiable 
in both $y$ and $\psi$.  Then the score function $\nabla_\psi \log p_\psi(Y|x)$ 
is functional subexponential with respect to $Y|X=x$ provided that 
$\nabla_\psi \log p_\psi(y|x) = O(y)$.
\label{lem:mgfsubexpo}
\end{lem}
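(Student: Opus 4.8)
The plan is to reduce the multivariate claim to a componentwise one and then dominate the exponential moment of each scalar component of the score by the moment generating function of $Y$ itself. First I would invoke Lemma~\ref{lem:fsubexpo}: since the score $\nabla_\psi \log p_\psi(Y|x)$ is an $\R^r$-valued function of $Y$, it is multivariate functional subexponential with respect to $Y|X=x$ exactly when each of its $r$ scalar components is functional subexponential with respect to $Y|X=x$. This reduces the problem to a single component $f_j(y) = \partial_{\psi_j}\log p_\psi(y|x)$, and when $r=1$ the conclusion follows directly.

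For each component I would first verify the mean-zero requirement in the definition of functional subexponential. Under the stated differentiability of $p_\psi(\cdot|x)$ and the existence of $M_{Y,x,\psi}$ in a neighborhood of $0$, the standard score identity $\E\{\nabla_\psi \log p_\psi(Y|x)\mid X=x\} = 0$ holds once differentiation under the integral sign in the normalization of $p_\psi(\cdot|x)$ is justified, so $\E\{f_j(Y)\} = 0$. Then $f_j(Y)$ is a genuine mean-zero random variable, and by the equivalence (b)$\Rightarrow$(a) in Lemma~\ref{equiv} it is subexponential—hence $f_j$ is functional subexponential—as soon as I exhibit a $c_0 > 0$ with $\E\{\exp(t f_j(Y))\} < \infty$ for all $|t| < c_0$.

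The key estimate uses the growth hypothesis $\nabla_\psi \log p_\psi(y|x) = O(y)$, which (together with differentiability on compacta) yields constants $C, D$ with $|f_j(y)| \leq C|y| + D$, whence
\begin{equation*}
  \exp\bigl(t f_j(y)\bigr) \leq \exp\bigl(|t|\,(C|y| + D)\bigr)
    = e^{|t| D}\exp\bigl(|t| C |y|\bigr)
    \leq e^{|t| D}\left(e^{|t| C y} + e^{-|t| C y}\right).
\end{equation*}
Taking expectations gives $\E\{\exp(t f_j(Y))\} \leq e^{|t| D}\bigl(M_{Y,x,\psi}(|t| C) + M_{Y,x,\psi}(-|t| C)\bigr)$. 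Since $M_{Y,x,\psi}$ is finite on some interval $(-\delta,\delta)$, both terms on the right are finite whenever $|t| C < \delta$, i.e. for all $|t| < c_0 := \delta/C$. Thus the exponential moment of $f_j(Y)$ is finite in a neighborhood of the origin, Lemma~\ref{equiv} applies to each component, and Lemma~\ref{lem:fsubexpo} assembles these into the conclusion.

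I expect the only delicate point to be the justification of the mean-zero score identity—specifically differentiating the normalization of $p_\psi(\cdot|x)$ under the integral sign—since the remainder of the argument is a direct domination of the component moment generating functions by $M_{Y,x,\psi}$. The existence of the MGF in a neighborhood of $0$ together with the $O(y)$ growth of the score supplies precisely the integrable envelope needed to legitimize this interchange, so no regularity beyond the stated hypotheses should be required.
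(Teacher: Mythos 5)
Your proposal is correct and follows essentially the same route as the paper's proof: bound each component of the score by $C|y|+D$ using the $O(y)$ hypothesis, dominate its exponential moment by $M_{Y,x,\psi}$ evaluated near the origin, invoke the MGF-existence characterization of subexponentiality (Lemma~\ref{equiv}), and assemble the components via Lemma~\ref{lem:fsubexpo}. Your treatment is in fact slightly more careful than the paper's, which leaves the domination implicit and does not address the mean-zero score identity at all.
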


\begin{proof}
Let $\nabla_{\psi_j} \log p_\psi(y|x)$ be the $j$th component of 
$\nabla_{\psi} \log p_\psi(y|x)$.  Then, 
\begin{align*}
  &\E\{\exp(t_j\nabla_{\psi_j} \log p_\psi(Y|x))\} 
    = \int \exp(t_j\nabla_{\psi_j} \log p_\psi(y))p_\psi(y|x)dy \\
  &\qquad= \int \exp\left\{t_jy\left(
      \frac{\nabla_{\psi_j} \log p_\psi(y|x)}{y}
    \right)\right\}p_\psi(y|x)dy < \infty,
\end{align*}
for all $t_j$ in a neighborhood about $0$, and all $j = 1$, $\ldots$, $r$.  
The same argument as that in Appendix B of \citet{wainwright2019high} 
implies that the score function $\nabla_{\psi_j} \log p_\psi(Y)$ is 
functional subexponential with respect to $Y|X=x$.  Our conclusion follows 
from Lemma~\ref{lem:fsubexpo}.
\end{proof}

\subsection*{Conentration results}

We can now prove Theorem~\ref{concentration} from Lemma~\ref{lem:mgfsubexpo} 
and our developed notion of functional subexponential random variables.  

\begin{proof}[Proof of Theorem~\ref{concentration}]
From the mean-value theorem, there exists some $\psihat_1$ such that  
$\psihat_{1,j} \in (\psi_j \wedge \psihat_j$, 
$\psi_j \vee \psihat_j)$, $j = 1$, $\ldots$, $r$, which satisfies 
$$
  -\nabla_{\psi}l_n(\psi) = \nabla_{\psi}^2l_n(\psihat_1)(\psihat - \psi).
$$
Rearranging the above yields 
$$
  \sqrt{n}(\psihat - \psi) 
    = -\sqrt{n}\left\{\nabla_{\psi}^2l_n(\psihat_1)\right\}^{-1}
      \nabla_{\psi}l_n(\psi)
$$
where the inverse exists almost surely when $n > r$. We see that 
\begin{align*}
  \left|\sqrt{n}(\psihat - \psi)\right| &= 
    \left|\sqrt{n}\left\{\nabla_{\psi}^2l_n(\psihat_1)\right\}^{-1}
      \nabla_{\psi}l_n(\psi)\right| \\
  &\leq \left|\sqrt{n}\left\{\nabla_{\psi}^2l_n(\psi)\right\}^{-1}
      \nabla_{\psi}l_n(\psi)\right| + |a_n| \\
  &\leq \sqrt{n}\left\|\left\{\nabla_{\psi}^2l_n(\psi)\right\}^{-1}\right\|_1 
    \left|\nabla_{\psi}l_n(\psi)\right| + |a_n|,    
\end{align*}
where $\|\cdot\|_1$ is the induced $k$-norm for a matrix with $k = 1$ and 
$$
  |a_n| = \left|\sqrt{n}\left[
    \left\{\nabla_{\psi}^2l_n(\psihat_1)\right\}^{-1}
      - \left\{\nabla_{\psi}^2l_n(\psi)\right\}^{-1}
  \right]\nabla_{\psi}l_n(\psi)\right|. 
$$
Choose some $0 < a_\lambda < A_{\lambda}$ and let 
$B_\lambda = A_\lambda - a_\lambda$.  Then for $n$ sufficiently large, 
\begin{align*}
  &\Prob\left\{ 
      \sqrt{n}\left|\psihat - \psi\right| \geq A_{\lambda}\sqrt{\log(n)} 
    \right\}
    \leq \Prob\left\{ 
      \sqrt{n}\left\|\left\{\nabla_{\psi}^2l_n(\psi)\right\}^{-1}\right\|_1 
        \left|\nabla_{\psi}l_n(\psi)\right| + |a_n| 
      \geq A_{\lambda}\sqrt{\log(n)} 
    \right\} \\
  &\qquad \leq \Prob\left( 
        \left|\nabla_{\psi}l_n(\psi)\right| 
      \geq \sqrt{\frac{\log(n)}{n}}\frac{B_\lambda}
        {
          \left\|\left\{\nabla_{\psi}^2l_n(\psi)\right\}^{-1}\right\|_1
        } 
    \right) \\
  &\qquad= \Prob\left( 
        \left|n^{-1}\sum_{i=1}^n\nabla_{\psi}l_i(\psi)\right| 
      \geq \sqrt{\frac{\log(n)}{n}}\frac{B_\lambda}
        {
          \left\|n^{-1}\left\{\nabla_{\psi}^2l_n(\psi)\right\}^{-1}\right\|_1 
        } 
    \right),
\end{align*}
where the second inequality follows from the strong law of large numbers with 
respect to $a_n$ and $l_i(\psi)$ is the log likelihood for each observation. 
We can now choose some 
$
  D > \left\|\E\left[
        \left\{\nabla_{\psi}^2l_{n=1}(\psi)\right\}^{-1}
    \right]\right\|_1
$
such that, for $n$ sufficiently large, we have
\begin{align*}
  &\Prob\left( 
        \left|n^{-1}\sum_{i=1}^n\nabla_{\psi}l_i(\psi)\right| 
      \geq \sqrt{\frac{\log(n)}{n}}\frac{B_\lambda}
        {
          \left\|n^{-1}\left\{\nabla_{\psi}^2l_n(\psi)\right\}^{-1}\right\|_1 
        } 
    \right) \\
  &\qquad\leq \Prob\left( 
        \left|n^{-1}\sum_{i=1}^n\nabla_{\psi}l_i(\psi)\right|  
      \geq \sqrt{\frac{\log(n)}{n}}\frac{B_\lambda}{D}
    \right) \\
  &\qquad= \Prob\left( 
        \sum_{j=1}^r
          \left|n^{-1}\sum_{i=1}^n[\nabla_{\psi}l_i(\psi)]_j\right|  
      \geq \sqrt{\frac{\log(n)}{n}}\frac{B_\lambda}{D}
    \right) \\ 
  &\qquad\leq \sum_{j=1}^r\Prob\left( 
        \left|n^{-1}\sum_{i=1}^n[\nabla_{\psi}l_i(\psi)]_j\right| 
      \geq \sqrt{\frac{\log(n)}{n}}\frac{B_\lambda}{Dr}
    \right), 
\end{align*}
where the first inequality follows from the strong law of large numbers 
with respect to
$
  \left\|\E\left[
        \left\{\nabla_{\psi}^2l_{n=1}(\psi)\right\}^{-1}
    \right]\right\|_1,
$
the term 
$
  \left|n^{-1}\sum_{i=1}^n[\nabla_{\psi}l_i(\psi)]_j\right|
$
is the $j$th component of $\nabla_{\psi}l_i(\psi)$, 
and the second inequality follows from sub additivity of probability and the 
fact that a sum of elements is greater than or equal to a number if at least 
one term in the sum is greater than or equal to that number divided by the 
number of elements in the sum.  
Then Lemmas~\ref{sums}, \ref{lem:fsubexpo}, and \ref{lem:mgfsubexpo} give 
\begin{align*}
  \sum_{j=1}^r\Prob\left( 
        \left|n^{-1}\sum_{i=1}^n[\nabla_{\psi}l_i(\psi)]_j\right| 
      \geq \sqrt{\frac{\log(n)}{n}}\frac{B_\lambda}{Dr}
    \right) 
  \leq 2\sum_{j=1}^r n^{-\frac{B_\lambda}{2D\tau_j^2r}}. 
\end{align*}
We can pick $A_{\lambda}$ such that $B_{\lambda}$ can be chosen to satisfy 
$$
  2\sum_{j=1}^r n^{-\frac{B_\lambda}{2D\tau_j^2r}} = O(n^{-\lambda}).
$$
Our conclusion follows. 
\end{proof}

The following Corollary extends the concentration inequality for MLEs 
corresponding to conditional densities $p_\psi(\cdot|x)$ with parameter 
space $\aleph_\X$ in Theorem~\ref{concentration} to the conformal prediction 
framework. 

\begin{cor}
Let $Y|X=x$ be a random variable with conditional density 
$p_\psi(\cdot|x)$ and parameter space $\aleph_\X$.  Assume that 
$p_\psi(\cdot|x)$ is twice differentiable in $\psi$.  
Let $(X_1,Y_1)$, $\ldots$, $(X_n,Y_n)$ be independent and identically 
distributed copies of $(X,Y)$.  Let $\psi \in \R^r$.  Let $\psihat$ be the 
MLE of $\psi$.  Augment the sample data with a new point $(x,y)$, and let 
$\psihat^{(x,y)}$ be the MLE of $\psi$ with respect to the augmented data.  
Then for any $\lambda > 0$, there exists a numerical constant $A_{\lambda}'$, 
such that  
\begin{equation}
  \Prob\left\{ 
    \sqrt{n}\left|\psihat^{(x,y)} - \psi\right| \geq A_{\lambda}'\sqrt{\log(n)} 
  \right\}
  = O\left(n^{-\lambda}\right).
\label{precise-cor}
\end{equation}
\label{concentration-cor}
\end{cor}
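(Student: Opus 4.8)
The plan is to obtain Corollary~\ref{concentration-cor} directly from Theorem~\ref{concentration} by showing that appending a single observation $(x,y)$ perturbs the maximum likelihood estimator by only $O(1/n)$, so that the concentration bound already established for $\psihat$ transfers to $\psihat^{(x,y)}$. The natural starting point is the triangle inequality
$$
  \sqrt{n}\left|\psihat^{(x,y)} - \psi\right|
    \leq \sqrt{n}\left|\psihat - \psi\right|
      + \sqrt{n}\left|\psihat^{(x,y)} - \psihat\right|.
$$
Theorem~\ref{concentration} controls the first summand: for the numerical constant $A_\lambda$ it guarantees $\Prob\{\sqrt{n}|\psihat - \psi| \geq A_\lambda\sqrt{\log(n)}\} = O(n^{-\lambda})$. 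It therefore suffices to show that the perturbation term $\sqrt{n}|\psihat^{(x,y)} - \psihat|$ is eventually dominated, with the same polynomial probability control, by any prescribed positive multiple of $\sqrt{\log(n)}$.

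To bound the perturbation I would compare the two score equations. With $l_n$ the log-likelihood of the original sample, the estimators satisfy $\nabla_\psi l_n(\psihat) = 0$ and $\nabla_\psi l_n(\psihat^{(x,y)}) + \nabla_\psi \log p_\psi(y|x)\big|_{\psi = \psihat^{(x,y)}} = 0$. Subtracting and applying the mean-value theorem componentwise produces an intermediate point $\bar\psi$ lying between $\psihat$ and $\psihat^{(x,y)}$ with
$$
  \left|\psihat^{(x,y)} - \psihat\right|
    \leq \left\|\left\{\nabla_\psi^2 l_n(\bar\psi)\right\}^{-1}\right\|_1
      \left|\nabla_\psi \log p_\psi(y|x)\big|_{\psi = \psihat^{(x,y)}}\right|.
$$
The rightmost factor is $O(y)$ by the defining property of $\aleph_\X$, hence bounded for the fixed augmenting point $(x,y)$. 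For the inverse Hessian I would argue exactly as in the proof of Theorem~\ref{concentration}: since $n^{-1}\nabla_\psi^2 l_n(\bar\psi)$ converges by the strong law of large numbers to the nonsingular expected per-observation Hessian $\E\{\nabla_\psi^2 l_1(\psi)\}$, one has $\|\{\nabla_\psi^2 l_n(\bar\psi)\}^{-1}\|_1 = O(1/n)$ on an event whose complement has vanishing probability. Consequently $|\psihat^{(x,y)} - \psihat| = O(1/n)$ and $\sqrt{n}|\psihat^{(x,y)} - \psihat| = O(n^{-1/2})$.

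Combining the pieces, I would fix $A_\lambda'$ strictly larger than $A_\lambda$. On the event $E_n$ where the inverse Hessian is well behaved, the perturbation term is eventually smaller than $(A_\lambda' - A_\lambda)\sqrt{\log(n)}$, so the event $\{\sqrt{n}|\psihat^{(x,y)} - \psi| \geq A_\lambda'\sqrt{\log(n)}\}$ is contained in $\{\sqrt{n}|\psihat - \psi| \geq A_\lambda\sqrt{\log(n)}\} \cup E_n^c$, giving a total probability of $O(n^{-\lambda})$ by Theorem~\ref{concentration} and the control on $E_n^c$. Equivalently, one may simply rerun the proof of Theorem~\ref{concentration} verbatim on the augmented log-likelihood $l_{n+1} = l_n + \log p_\psi(\cdot\mid\cdot)$, noting that the single extra score term contributes only a deterministic $O(1/n)$ to the averaged score and that the averaged Hessian over $n+1$ points has the same Fisher-information limit; the dominant term again collapses to the subexponential concentration of $|n^{-1}\sum_{i=1}^n \nabla_\psi l_i(\psi)|$. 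The main obstacle is the uniform control of the intermediate-point inverse Hessian $\{\nabla_\psi^2 l_n(\bar\psi)\}^{-1}$ over the random location of $\bar\psi$; this is handled by the same strong-law argument used for the $a_n$ remainder in Theorem~\ref{concentration}, exploiting the consistency of both $\psihat$ and $\psihat^{(x,y)}$ to confine $\bar\psi$ to a neighborhood of $\psi$ on which the expected Hessian is invertible.
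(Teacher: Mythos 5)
Your proposal is correct and follows essentially the same route as the paper: a triangle inequality reducing the claim to Theorem~\ref{concentration} plus a demonstration that $\sqrt{n}\left|\psihat^{(x,y)} - \psihat\right|$ is negligible because the single added score term $\nabla_\psi \log p_\psi(y|x)$ is bounded (via the $O(y)$ property of $\aleph_\X$) and the inverse Hessians are controlled by the strong law of large numbers. The only cosmetic difference is that you obtain the perturbation bound by subtracting the two score equations and expanding around $\psihat$, whereas the paper expands both MLEs around the true $\psi$ and differences the resulting representations; both land on the same $O(n^{-1/2})$ perturbation and the same absorption into $A_\lambda' > A_\lambda$.
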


\begin{proof}
Let $\psihat$ be the maximum likelihood estimator of $\psi$ under the original 
data.  First note that 
\begin{equation} \label{deriv-cor}
\begin{split}
  \Prob\left\{ 
    \sqrt{n}\left|\psihat^{(x,y)} - \psi\right| \geq A_{\lambda}'\sqrt{\log(n)} 
  \right\}
    &\leq  \Prob\left\{ 
      \sqrt{n}\left|\psihat^{(x,y)} - \psihat\right| + 
      \sqrt{n}\left|\psihat - \psi\right| 
      \geq A_{\lambda}'\sqrt{\log(n)} 
    \right\} \\
   &= \Prob\left\{  
      \sqrt{n}\left|\psihat - \psi\right| 
      \geq A_{\lambda}'\sqrt{\log(n)} - 
      \sqrt{n}\left|\psihat^{(x,y)} - \psihat\right|
    \right\}.
\end{split}
\end{equation}
We will show that the $\sqrt{n}\left|\psihat^{(x,y)} - \psihat\right|$ term in 
\eqref{deriv-cor} vanishes quickly enough to yield \eqref{precise-cor}.  
From the proof Theorem~\ref{concentration} we have that 
$$
  \sqrt{n}(\psihat - \psi) 
    = -\sqrt{n}\left\{\nabla_{\psi}^2l_n(\psihat_1)\right\}^{-1}
      \nabla_{\psi}l_n(\psi).
$$
where $l_n(\cdot)$ is the log likelihood of the original data and $\psihat_1$ 
is such that $\psihat_{1,j} \in (\psi_j \wedge \psihat_j$, 
$\psi_j \vee \psihat_j)$, $j = 1$, $\ldots$, $r$. 
Let $l_n^{(x,y)}(\psi)$ be the likelihood of the augmented data.  A similar 
calculation to that in the proof Theorem~\ref{concentration} yields 
$$
  \sqrt{n}(\psihat^{(x,y)} - \psi) 
    = -\sqrt{n}\left\{\nabla_{\psi}^2l_n^{(x,y)}(\psihat^{(x,y)}_1)\right\}^{-1}
      \nabla_{\psi}l_n^{(x,y)}(\psi)
$$
with $\psihat^{(x,y)}_1$ defined similarly to $\psihat_1$.  
We have that
$$
  \nabla_{\psi}l_n^{(x,y)}(\psi) = \nabla_{\psi}l_n(\psi) 
    + \nabla_\psi \log p_\psi(y|x).  
$$
These derivations yield 
\begin{align*}
  &\sqrt{n}\left|\psihat^{(x,y)} - \psihat\right| = 
    \left|(\psihat^{(x,y)} - \psi) - (\psihat - \psi)\right| \\
  &\qquad= |\sqrt{n}\left\{\nabla_{\psi}^2l_n^{(x,y)}(\psihat^{(x,y)}_1)\right\}^{-1}
      \nabla_{\psi}l_n^{(x,y)}(\psi) - 
    \sqrt{n}\left\{\nabla_{\psi}^2l_n(\psihat_1)\right\}^{-1}
      \nabla_{\psi}l_n(\psi)| \\
  &\qquad= |\sqrt{n}\left\{\nabla_{\psi}^2l_n^{(x,y)}(\psihat^{(x,y)}_1)\right\}^{-1}
      \left[  
        \nabla_{\psi}l_n(\psi) + 
        \nabla_\psi \log p_\psi(y|x) 
      \right] \\
    &\qquad\qquad- 
      \sqrt{n}\left\{\nabla_{\psi}^2l_n(\psihat_1)\right\}^{-1}
        \nabla_{\psi}l_n(\psi)| \\
  &\qquad= |\left[
      \left\{\nabla_{\psi}^2l_n^{(x,y)}(\psihat^{(x,y)}_1)\right\}^{-1} -
      \left\{\nabla_{\psi}^2l_n(\psihat_1)\right\}^{-1}
    \right]\sqrt{n}\nabla_{\psi}l_n(\psi) \\
      &\qquad\qquad+ 
        \sqrt{n}\left\{\nabla_{\psi}^2l_n^{(x,y)}(\psihat^{(x,y)}_1)\right\}^{-1}
        \nabla_\psi \log p_\psi(y|x)| \\
  &\qquad= |\left[
      \left\{\frac{1}{n}\nabla_{\psi}^2l_n^{(x,y)}(\psihat^{(y)}_1)\right\}^{-1} -
      \left\{\frac{1}{n}\nabla_{\psi}^2l_n(\psihat_1)\right\}^{-1}
    \right]\frac{1}{\sqrt{n}}\nabla_{\psi}l_n(\psi) \\
      &\qquad\qquad+ 
        \frac{1}{\sqrt{n}}\left\{\frac{1}{n}
          \nabla_{\psi}^2l_n^{(x,y)}(\psihat^{(x,y)}_1)\right\}^{-1}
        \nabla_\psi \log p_\psi(y|x)|.         
\end{align*}
By the strong law of large numbers and a similar argument to the proof of 
Theorem~\ref{concentration}, we can pick $A_{\lambda}'$ such that, 
for $n$ sufficiently large, we have that
\begin{align*}
  \Prob\left\{  
    \sqrt{n}\left|\psihat - \psi\right| 
    \geq A_{\lambda}'\sqrt{\log(n)} - 
    \sqrt{n}\left|\psihat^{(x,y)} - \psihat\right|
  \right\}
  \leq
  \Prob\left\{  
    \sqrt{n}\left|\psihat - \psi\right| 
    \geq A_{\lambda}\sqrt{\log(n)} 
  \right\}.  
\end{align*}
where $A_{\lambda}$ is defined in Theorem~\ref{concentration}.  
Our conclusion follows from Theorem~\ref{concentration}.
\end{proof}

\subsection*{Proof of Theorem~\ref{thm:correctmodel}}

The concentration inequalities in Theorem~\ref{concentration} and 
Corollary~\ref{concentration-cor} allow us to prove 
Theorem~\ref{thm:correctmodel}.

\begin{lem}
Let $Y|X=x$ be a random variable with conditional density 
$p_\psi(\cdot|x)$ and parameter space $\aleph_\X$.  Assume that 
$p_\psi(\cdot|x)$ is twice differentiable in $\psi$ and satisfies 
Assumption 2.  Let $(X_1,Y_1)$, $\ldots$, $(X_n,Y_n)$ be independent and 
identically distributed copies of $(X,Y)$.  Let $\psi \in \R^r$.  
Let $\psihat$ be the MLE of $\psi$.  Augment the sample data with a new 
point $(x,y)$, and let $\psihat^{(x,y)}$ be the MLE of $\psi$ with respect 
to the augmented data.  
Let $\hat{p}^{(x,y)}_\psi(\cdot|x)$ be the conditional density with 
$\psihat^{(x,y)}$ plugged in for $\psi$.  Given $\lambda > 0$, there is a 
numerical constant $\xi_{\lambda}$ such that 
$$
  \Prob\left\{\sup_{x \in \X}
    \|\hat{p}^{(x,y)}_\psi(\cdot|x) - p_\psi(\cdot|x)\|_{\infty} 
      \geq \xi_{\lambda}r_n\right\} 
  = O\left(n^{-\lambda}\right).
$$  
\label{lem:dens}
\end{lem}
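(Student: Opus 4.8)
The plan is to reduce the supremum-norm distance between the plug-in density and the true density to an ordinary distance between parameter vectors, and then to invoke the maximum-likelihood concentration already established in Corollary~\ref{concentration-cor}. Since the plug-in density $\hat{p}^{(x,y)}_\psi(\cdot|x)$ is simply the true conditional density $p_\psi(\cdot|x)$ evaluated at the augmented-data estimator $\psihat^{(x,y)}$ in place of $\psi$, the second Lipschitz bound in Assumption~2 applies directly with $\psi_1 = \psihat^{(x,y)}$ and $\psi_2 = \psi$, yielding for every fixed conditioning value $x$
\[
  \|\hat{p}^{(x,y)}_\psi(\cdot|x) - p_\psi(\cdot|x)\|_{\infty}
    = \|p_{\psihat^{(x,y)}}(\cdot|x) - p_\psi(\cdot|x)\|_{\infty}
    \leq L_2\,\|\psihat^{(x,y)} - \psi\|,
\]
where $L_2$ does not depend on $x$. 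The essential feature of this reduction is that it eliminates the free response argument entirely: the $L^\infty$ norm over that argument is absorbed into Assumption~2, leaving only a bound on the estimation error of the parameter. Since the Euclidean and $L^1$ norms on $\R^r$ are equivalent up to a dimensional constant, the right-hand side may be replaced by $L_2\,|\psihat^{(x,y)} - \psi|$ in the $|\cdot|$ notation used by the concentration results, at the cost of a harmless change of constant.

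First I would take the supremum over $x \in \X$ on both sides; the right-hand bound depends on $x$ only through the augmented estimator, so
\[
  \sup_{x \in \X}\|\hat{p}^{(x,y)}_\psi(\cdot|x) - p_\psi(\cdot|x)\|_{\infty}
    \leq L_2\,\sup_{x \in \X}\,|\psihat^{(x,y)} - \psi|.
\]
Setting $\xi_\lambda = L_2 A_{\lambda}'$ with $A_{\lambda}'$ the constant from Corollary~\ref{concentration-cor}, and recalling that $r_n = \sqrt{\log(n)/n}$, the event appearing in the statement is contained in $\{\sup_{x \in \X}|\psihat^{(x,y)}-\psi| \geq A_{\lambda}' r_n\}$. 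It therefore suffices to show that this latter event has probability $O(n^{-\lambda})$, which is precisely a uniform-over-$x$ version of Corollary~\ref{concentration-cor}.

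The hard part will be upgrading Corollary~\ref{concentration-cor}, which is stated for a single augmentation point, to control the supremum over $x \in \X$. I would split $\psihat^{(x,y)} - \psi = (\psihat^{(x,y)} - \psihat) + (\psihat - \psi)$. The second summand is free of the augmentation point and is governed directly by Theorem~\ref{concentration}. For the first summand, the expansion carried out in the proof of Corollary~\ref{concentration-cor} exhibits $\sqrt{n}\,|\psihat^{(x,y)} - \psihat|$ as a sum of a Hessian-perturbation term and a term driven by the single augmented score $\nabla_\psi \log p_\psi(y|x)$, each scaled so as to be of order $1/\sqrt{n}$. Because the parameterization lies in $\aleph_\X$, this score satisfies $\nabla_\psi \log p_\psi(y|x) = O(y)$, while on the compact support $\X = [0,1]^d$ the normalized Hessians converge to a common nonsingular limit by the strong law of large numbers. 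Uniformity over $x$ then follows either from a covering argument over $\X$ combined with a union bound—whose logarithmic penalty is immaterial since $\lambda$ is arbitrary—or directly from the observation that the $x$-dependence enters only through the single-point contribution, which is uniformly $O(1/\sqrt{n})$. Combining the two summands and absorbing constants into $\xi_\lambda$ then delivers the claimed $O(n^{-\lambda})$ bound.
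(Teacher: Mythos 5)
Your proposal is correct and follows essentially the same route as the paper's own proof: reduce the sup-norm discrepancy to $\|\psihat^{(x,y)}-\psi\|$ via the second Lipschitz bound in Assumption~2 (with a constant uniform over $x$ by boundedness of $\X$), then invoke Corollary~\ref{concentration-cor} and absorb the Lipschitz constant into $\xi_\lambda$. Your additional discussion of making the bound uniform over the augmentation point $x$ is in fact more careful than the paper, whose proof states the parameter-distance bound without a supremum on the right-hand side and applies the single-point corollary directly.
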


\begin{proof}
From the Lipschitz property of Assumption 2 and the boundedness of $\X$, 
we can pick an $M > 0$ such that 
$$
  \sup_{x\in\X} \|\hat{p}^{(x,y)}_\psi(\cdot|x) - p_\psi(\cdot|x)\|_{\infty} 
    \leq M\|\psihat^{(x,y)} - \psi\|.
$$
Therefore, 
$$
  \Prob\left\{\sup_{x \in \X}
    \|\hat{p}^{(x,y)}_\psi(\cdot|x) - p_\psi(\cdot|x)\|_{\infty} 
      \geq \xi_{\lambda}r_n\right\}
  \leq
  \Prob\left\{M\|\psihat^{(x,y)} - \psi\| \geq \xi_{\lambda}r_n\right\}.
$$
Choose $\xi_{\lambda}$ so that $A_\lambda' = \xi_\lambda/M$ satisfies the rate 
of \eqref{precise-cor} in Corollary~\ref{concentration-cor}.  Our conclusion 
then follows from Corollary~\ref{concentration-cor}.
\end{proof}

From the Lipschitz property in Assumption 2, we can pick $M' > 0$ such that 
$$
  \sup_{x\in A_k} \|\hat{p}^{(x,y)}_\psi(\cdot|x) - p_\psi(\cdot|x)\|_{\infty} 
    \leq M'\left(\|\psihat^{(x,y)} - \psi\| + \|\psihat - \psi\|\right).
$$
Let $a_k$ be the center point of the cube $A_k$. Again, from the Lipschitz 
property in Assumption 2, we can pick $M'' > 0$ such that 
$$
  \sup_{x\in A_k} \|\hat{p}_\psi(\cdot|x) - \hat{p}_\psi(\cdot|a_k)\|_{\infty}
    \leq z_nM''\|\psihat\|_{\infty}.
$$
Set $\lambda > 0$ and pick $A_{\lambda}$ as in Theorem~\ref{concentration} 
and $A_{\lambda}'$ as in Corollary~\ref{concentration-cor} and let
$$
  E_1 = \{\|\psihat - \psi\| \leq r_nM'A_{\lambda}, 
    \|\psihat^{(x,y)} - \psi\| \leq r_nM'A_{\lambda}'\},
$$
where $\Prob(E_1^c) = O(n^{-\lambda})$ by 
Theorem~\ref{concentration} and Corollary~\ref{concentration-cor}.  
On $E_1$ we have that
$$
  \sup_{x\in A_k} 
    \|\hat{p}_\psi^{(x,y)}(\cdot|x) - \hat{p}_\psi(\cdot|x)\|_{\infty} 
      \leq r_nM'(A_{\lambda} + A_{\lambda}'),
$$
and 
$$
  \sup_{x\in A_k} 
    \|\hat{p}_\psi(\cdot|x) - \hat{p}_\psi(\cdot|a_k)\|_{\infty}
      \leq z_nM''(\|\psi\| + r_nM'A_{\lambda}).
$$
For $n$ sufficiently large we can pick $M''' > 0$ such that 
$$
  z_nM''(\|\psi\| + r_nM'A_{\lambda}) \leq z_n M'''
$$
and this implies that 
$$
  \sup_{x\in A_k} 
    \|\hat{p}_\psi(\cdot|x) - \hat{p}_\psi(\cdot|a_k)\|_{\infty} 
      \leq z_n M'''
$$
Now let $\{(X_{i_1}, Y_{i_1}),\ldots,(X_{i_{n_k}}, Y_{i_{n_k}})\}$ be the data 
points that belong to $A_k$ where the indices $(i_1,\ldots,i_{n_k})$ are 
conditioned on.
Let the data point $(Y_{(k,\alpha)}, X_{(k,\alpha)})$ be such 
that $\hat{p}_\psi(Y_{(k,\alpha)}|a_k)$ is the $\floor{n_k\alpha}$ largest 
value among all $\hat{p}_\psi(Y_{i_j}|a_k)$, $1 \leq j \leq n_k$ and define 
the sandwiching sets,
\begin{equation} \label{sandwichsets}
\begin{split}
  &\widehat{L}^-_k = \widehat{L}_k\left(
    \hat{p}_\psi\left(Y_{(k,\alpha)}|a_k\right) + 2r_nM'A_{\lambda}'' + 2z_nM''' 
  \right), \\
  &\widehat{L}^+_k = \widehat{L}_k\left(
    \hat{p}_\psi\left(Y_{(k,\alpha)}|a_k\right) - 2r_nM'A_{\lambda}'' - 2z_nM'''
  \right),
\end{split}
\end{equation}
where the level set $\widehat{L}_k(t) = \{y: \hat{p}_\psi(y|a_k) \geq t\}$ 
and $A_\lambda'' = A_\lambda + A_\lambda'$.  The bin sample size has to 
satisfy $n_k \to \infty$ for the sandwiching sets to be of use, this occurs 
when $z_n$ is as specified.  With this construction, we have
\begin{align*}
  \widehat{L}^-_k &\subseteq \left\{y : \frac{1}{n_k + 1}\sum_{i=1}^{n + 1}
    \indicator{X_i \in A_k}
      \indicator{\hat{p}_\psi(Y_i|a_k) + 2r_nMA_{\lambda}'' + 2z_nM'''
        \leq \hat{p}_\psi(y|a_k)} 
    \geq \widetilde{\alpha}_k\right\} \\
  &\subseteq \Coptglm(x)  \\
  &\subseteq\left\{y : \frac{1}{n_k + 1}\sum_{i=1}^{n + 1}
    \indicator{X_i \in A_k}
      \indicator{\hat{p}_\psi(Y_i|a_k) - 2r_nMA_{\lambda}'' - 2z_nM'''
        \leq \hat{p}_\psi(y|a_k)} 
    \geq \widetilde{\alpha}_k\right\} \\
  &\subseteq \widehat{L}^+_k.       
\end{align*}
Therefore
$$
  \Prob\left\{
    \widehat{L}^-_k \subseteq \Coptglm(x) \subseteq \widehat{L}^+_k
  \right\} = 1 - O(n^{-\lambda}).
$$
We summarize this result in the following Lemma.

\begin{lem}
Let $Y|X=x$ be a random variable with conditional density 
$p_\psi(\cdot|x)$ and parameter space $\aleph_\X$.  Assume that 
$p_\psi(\cdot|x)$ is twice differentiable in $\psi$ and satisfies 
Assumption 2.  Let $(X_1,Y_1)$, $\ldots$, $(X_n,Y_n)$ be independent and 
identically distributed copies of $(X,Y)$.  Let $\psi \in \R^r$.  
Let $\psihat$ be the MLE of $\psi$.  Augment the sample data with a new 
point $(x,y)$, and let $\psihat^{(x,y)}$ be the MLE of $\psi$ with respect 
to the augmented data.  Suppose that Assumption 1 holds.  
Let $0 < \alpha < 1$.   
Let the sets $\widehat{L}^-_k$ and $\widehat{L}^+_k$ be defined as in 
\eqref{sandwichsets}. Then, 
$$
  \Prob\left\{
    \widehat{L}^-_k \subseteq \Coptglm(x) \subseteq \widehat{L}^+_k
  \right\} = 1 - O(n^{-\lambda}).
$$
\label{lem:sandwich-subexpo}
\end{lem}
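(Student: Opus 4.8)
The plan is to assemble the bounds derived in the preceding paragraphs into a single high-probability statement, carrying out all estimates deterministically on one good event and then invoking the concentration results only to control that event's complement. First I would fix $\lambda > 0$ and define the good event
$$
  E_1 = \{\|\psihat - \psi\| \leq r_n M' A_{\lambda}, \; \|\psihat^{(x,y)} - \psi\| \leq r_n M' A_{\lambda}'\},
$$
where $A_{\lambda}$ and $A_{\lambda}'$ are the constants supplied by Theorem~\ref{concentration} and Corollary~\ref{concentration-cor}. Since each of those results bounds the corresponding complement by $O(n^{-\lambda})$, a union bound gives $\Prob(E_1^c) = O(n^{-\lambda})$, so it suffices to establish the inclusion $\widehat{L}^-_k \subseteq \Coptglm(x) \subseteq \widehat{L}^+_k$ pointwise on $E_1$.

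On $E_1$ the two Lipschitz estimates of Assumption~2 do the essential work. The first, with constant $M'$, controls the cost of replacing the augmented-data density by the original-data density, giving $\sup_{x \in A_k}\|\hat{p}_\psi^{(x,y)}(\cdot|x) - \hat{p}_\psi(\cdot|x)\|_{\infty} \leq r_n M'(A_{\lambda} + A_{\lambda}') = r_n M' A_\lambda''$. The second, with constant $M''$ and then $M'''$ absorbing $\|\psi\|$ for large $n$, controls the cost of shifting the conditioning point from $x$ to the bin center $a_k$, giving $\sup_{x \in A_k}\|\hat{p}_\psi(\cdot|x) - \hat{p}_\psi(\cdot|a_k)\|_{\infty} \leq z_n M'''$. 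Together these reduce every density appearing in the definition~\eqref{confregion} of $\Coptglm(x)$ to the single reference density $\hat{p}_\psi(\cdot|a_k)$, up to an additive error of order $r_n M' A_\lambda'' + z_n M'''$.

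With this reduction I would perturb the indicator arguments in~\eqref{confregion}: replacing both $\hat{p}_\psi^{(x,y)}(Y_i|X_i)$ and $\hat{p}_\psi^{(x,y)}(y|x)$ by $\hat{p}_\psi(Y_i|a_k)$ and $\hat{p}_\psi(y|a_k)$ shifts each comparison by at most $2 r_n M' A_\lambda'' + 2 z_n M'''$. Matching the conformal threshold $\widetilde{\alpha}_k$ to the order statistic $\hat{p}_\psi(Y_{(k,\alpha)}|a_k)$, the $\floor{n_k\alpha}$ largest density value among the bin observations, then identifies the inner and outer perturbed conditions with the level sets $\widehat{L}^-_k$ and $\widehat{L}^+_k$ of~\eqref{sandwichsets}, yielding the inclusion chain on $E_1$. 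Since $\Prob(E_1^c) = O(n^{-\lambda})$, the conclusion follows.

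The step I expect to be the main obstacle is the alignment between the count-based conformal rank and the level-set description: one must verify that, after the additive perturbation, the empirical condition $\frac{1}{n_k+1}\sum_{i=1}^{n+1} \indicator{X_i \in A_k}\indicator{\cdots} \geq \widetilde{\alpha}_k$ translates precisely into membership in $\widehat{L}^\pm_k$. This requires that the threshold $\hat{p}_\psi(Y_{(k,\alpha)}|a_k)$ be a well-defined order statistic and that $n_k \to \infty$ so that the $\floor{n_k\alpha}$-th largest value behaves stably as the bins shrink; the requirement that $z_n$ decay no faster than $(\log(n)/n)^{1/d}$ is exactly what secures $n_k \to \infty$ through Lemma~\ref{lem:Bern}, which is why that rate constrains the final convergence speed.
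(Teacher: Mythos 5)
Your proposal is correct and follows essentially the same route as the paper: the same good event $E_1$ controlled by Theorem~\ref{concentration} and Corollary~\ref{concentration-cor}, the same two Lipschitz reductions (augmented-to-original density at cost $r_nM'A_\lambda''$, and $x$-to-bin-center at cost $z_nM'''$), the same $2r_nM'A_\lambda'' + 2z_nM'''$ perturbation of the indicators in~\eqref{confregion}, and the same identification of the perturbed conditions with the level sets $\widehat{L}^\pm_k$ at the $\floor{n_k\alpha}$ order statistic. The obstacle you flag --- translating the conformal rank condition into level-set membership, and the role of $n_k\to\infty$ via the choice of $z_n$ --- is exactly the point the paper also leans on without further elaboration.
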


\begin{proof}
The details of this proof are given in the above paragraph.
\end{proof}

We now have enough technical tools to finish the proof of 
Theorem~\ref{thm:correctmodel}.

\begin{proof}[Proof of Theorem~\ref{thm:correctmodel}]
Define $L^l_k(t) = \{y:\hat{p}_\psi(y|a_k) \leq t\}$ and 
$\thatalpha_k = \hat{p}_\psi\left(Y_{(k,\alpha)}|a_k\right)$ and
$\alpha_k = \floor{\alpha n_k}/n_k$.  We have that 
\begin{align*}
  \widehat{P}\left\{\widehat{L}_k^l(\thatalpha_k)|A_k\right\}
    &= \frac{1}{n_k}\sum_{i=1}^n \indicator{X_i \in A_k}
      \indicator{Y_i \in \widehat{L}_k^l(\thatalpha_k)} \\
  &= \frac{1}{n_k}\sum_{i=1}^n \indicator{X_i \in A_k}
    \indicator{\hat{p}_\psi(Y_i|a_k) 
      \leq \hat{p}_\psi\left(Y_{(k,\alpha)}|a_k\right)}, 
\end{align*}
and this implies that 
$
  \thatalpha_k = \inf[t\geq0: 
    \widehat{P}\left\{\widehat{L}_k^l(t)|A_k\right\} \geq \alpha_k]. 
$
Let 
$$
  R_{n,k} = \sup_{x\in A_k} 
    \|\hat{p}_\psi(\cdot|x) - \hat{p}_\psi(\cdot|a_k)\|_{\infty},
$$ 
and let $V_n(x)$ be as in \citet[Lemma 6]{lei2014distribution} with 
$z_n$ in place of $w_n$ and $p_\psi(\cdot|a_k)$ replacing their $p(\cdot|A_k)$ 
in the setup and proof of \citet[Lemma 6]{lei2014distribution}.  
Define the event 
$$
  E_2 = \{\|\psihat - \psi\| \leq r_nM'A_{\lambda}, 
    \|\psihat^{(x,y)} - \psi\| \leq r_nM'A_{\lambda}',
    R_{n,k} \leq z_n M''',
    V_n(x) \leq z_n\xi_{\lambda}\},
$$
where $z_n$ and $\xi_{\lambda}$ are respectively $r_n$ and $\xi_{2,\lambda}$ 
in \citet[Lemma 6]{lei2014distribution}.  Note that 
$\Prob(E_2^c) = O(n^{-\lambda})$.  Let 
$v_1 = z_nM''' + r_nM'(A_\lambda + A_\lambda')$ and $v_2 = z_n\xi_\lambda$ 
and note that, for sufficiently large $n$, these choices satisfy 
\citet[Lemma 8]{lei2014distribution} which then gives 
$$
  |\thatalpha_k - \talpha| \leq v_1 + c_1^{-1}v_2, 
$$
and
$$
  \Prob\left[
    \sup_x \nu\left\{\widehat{L}_k(\thatalpha_k)\triangle \Copt_P(x)\right\} 
      \geq \xi_1v_1 + \xi_2v_2
  \right] = O(n^{-\lambda}),
$$
where $c_1$, $\xi_1$, and $\xi_2$ are all constants that are independent 
of $n$.  

Let $\ttildealpha_k = \thatalpha_k + 2r_nM'A_{\lambda}'' + 2z_nM'''$. Then, for 
$v_3 = 2r_nM'A_{\lambda}'' + 2z_nM'''$  and constants $\xi_j'$, $j = 1,2,3$, 
\citet[Lemma 8]{lei2014distribution} gives us
$$
  \Prob\left[
    \sup_x \nu\left\{\widehat{L}_k^-\triangle \Copt_P(x)\right\} 
      \geq \xi_1'v_1 + \xi_2v_2' + \xi_3'v_3
  \right] = O(n^{-\lambda}).
$$
Similarly, for $v_3' = - 2r_nM'A_{\lambda}'' - 2z_nM'''$  and constants 
$\xi_j''$, $j = 1,2,3$, \citet[Lemma 8]{lei2014distribution} gives us
$$
  \Prob\left[
    \sup_x \nu\left\{\widehat{L}_k^+\triangle \Copt_P(x)\right\} 
      \geq \xi_1''v_1 + \xi_2v_2'' + \xi_3''v_3
  \right] = O(n^{-\lambda}).
$$
Then, conditional on $E_2$, we have that 
$
  \left\{\widehat{L}^-_k \subseteq \Coptglm(x) \subseteq \widehat{L}^+_k\right\},
$
and this implies that 
$$
  \nu\left\{\Coptglm(x) \triangle C_P^{(\alpha)}(x)\right\}
    \leq \nu\left\{\widehat{L}^-_k \triangle C_P^{(\alpha)}(x)\right\}
    + \nu\left\{\widehat{L}^+_k \triangle C_P^{(\alpha)}(x)\right\}.
$$
Therefore our conclusion holds for some $\zeta_\lambda'$ at rate 
$r_n \vee z_n$.
\end{proof}

\subsection*{Preliminaries for the proof of Theorem~\ref{thm:correctmodel2}}

We will prove Theorem~\ref{thm:correctmodel2} under the specification that 
the number of modes of $p_\psi(\cdot|x)$ is equal to 1 for all $x \in \X$, 
so that \eqref{transregion} and \eqref{transregion2} are equivalent.
The general case follows a similar argument.
With an abuse of notation, let $(a(x), b(x))$ be the solution to 
\begin{equation} \label{minproblemtrue}
  \text{argmin}_{(a(x),b(x))} \left(b(x) - a(x)\right), 
      \; \text{subject to} \; \int_{a(x)}^{b(x)} 
        p_\psi(u|x) du = 1 - \alpha, \; \text{and} \; b(x) - a(x) > 0.
\end{equation}
We now provide a probability bound on the distance between the solutions 
to \eqref{minproblem} and \eqref{minproblemtrue}.  This bound is of 
fundamental importance for proving Theorem~\ref{thm:correctmodel2}.

\begin{lem}
Let $Y|X=x$ be a random variable with conditional density 
$p_\psi(\cdot|x)$ and parameter space $\aleph_\X$.  Assume that 
$p_\psi(\cdot|x)$ is twice differentiable in $\psi$ and satisfies 
Assumption 2.  Let $(X_1,Y_1)$, $\ldots$, $(X_n,Y_n)$ be independent and 
identically distributed copies of $(X,Y)$.  Let $\psi \in \R^r$.  
Let $\psihat$ be the MLE of $\psi$.  Augment the sample data with a new 
point $(x,y)$, and let $\psihat^{(x,y)}$ be the MLE of $\psi$ with respect 
to the augmented data.  Suppose that Assumption 1 holds.  
Let $0 < \alpha < 1$.   
Let $(\hat a(x),\hat b(x))$ be the solution to the optimization problem 
\eqref{minproblem} and let $(a(x), b(x))$ be the solution to the 
optimization problem \eqref{minproblemtrue}.  
Let $\lambda > 0$ be as is defined in Lemma~\ref{lem:dens}.  Then we can pick 
a number $\xi_{\lambda}'$ such that
$$
  \Prob\left(\sup_{x\in\X}|(\hat a(x), \hat b(x)) - (a(x), b(x))| 
    \geq \xi_{\lambda}'r_n\right) = O(n^{-\lambda}).
$$
\label{lem:optim}
\end{lem}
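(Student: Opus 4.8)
The plan is to exploit the fact that both optimization problems are posed over the \emph{same} parametric family, so that the estimated solution is nothing but the true solution evaluated at the plug-in parameter. Since $\hat{p}^{(x,y)}_\psi(\cdot|x)$ is the conditional density $p_\psi(\cdot|x)$ with $\psihat^{(x,y)}$ substituted for $\psi$, the minimizer $(\hat a(x),\hat b(x))$ of \eqref{minproblem} coincides with the minimizer of \eqref{minproblemtrue} computed at the parameter value $\psihat^{(x,y)}$. Writing $(a(x,\psi),b(x,\psi))$ for the solution of \eqref{minproblemtrue} as a function of the parameter, we therefore have $(\hat a(x),\hat b(x)) = (a(x,\psihat^{(x,y)}),b(x,\psihat^{(x,y)}))$ and $(a(x),b(x)) = (a(x,\psi),b(x,\psi))$. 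The problem thus reduces to a sensitivity analysis of the endpoints in the parameter, combined with the concentration bound $|\psihat^{(x,y)}-\psi| = O(r_n)$ supplied by Corollary~\ref{concentration-cor}.

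Because $p_\psi(\cdot|x)$ is unimodal, the minimal length region is the single interval $(a,b)$ determined by the equal-height and coverage conditions
\begin{equation} \label{defining}
  p_\psi(a|x) = p_\psi(b|x) \qquad \text{and} \qquad F_\psi(b|x) - F_\psi(a|x) = 1-\alpha,
\end{equation}
where $F_\psi(\cdot|x)$ is the conditional distribution function. Collecting \eqref{defining} into a map $H(a,b,\psi;x)$ and differentiating, the Jacobian of $H$ in $(a,b)$ has determinant $p_\psi(b|x)\{p_\psi'(a|x)-p_\psi'(b|x)\}$, which at the optimum equals $t_x^{(\alpha)}\{p_\psi'(a|x)-p_\psi'(b|x)\}$. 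Provided this determinant is bounded away from $0$ uniformly over $x\in\X$ and $\psi\in\aleph_\X$, the implicit function theorem yields a continuously differentiable map $\psi\mapsto(a(x,\psi),b(x,\psi))$ whose Jacobian in $\psi$ is bounded uniformly in $x$: the numerators of the implied derivatives are controlled by the $\psi$-Lipschitz bound on $p_\psi$ in Assumption 2 and by the boundedness of $|\nabla_\psi F_\psi(v|x)|$ in Assumption 4, so the endpoint map is globally Lipschitz in $\psi$, uniformly in $x$, with some constant $L$.

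To finish, I would combine the two ingredients. By Corollary~\ref{concentration-cor}, on an event of probability $1-O(n^{-\lambda})$ we have $\sqrt{n}|\psihat^{(x,y)}-\psi| \le A_{\lambda}'\sqrt{\log n}$, i.e.\ $|\psihat^{(x,y)}-\psi| \le A_{\lambda}'r_n$ since $r_n = \sqrt{\log(n)/n}$. On this event the uniform Lipschitz property gives
\[
  \sup_{x\in\X}\left|(\hat a(x),\hat b(x)) - (a(x),b(x))\right| \le L\,\left|\psihat^{(x,y)}-\psi\right| \le L A_{\lambda}' r_n ,
\]
so the conclusion holds with $\xi_{\lambda}' = L A_{\lambda}'$.

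The main obstacle is the uniform nondegeneracy of the endpoints, namely that the determinant $t_x^{(\alpha)}\{p_\psi'(a|x)-p_\psi'(b|x)\}$ stays bounded away from zero over $x\in\X$ and $\psi\in\aleph_\X$. The factor $t_x^{(\alpha)}$ is bounded below by $t_0>0$ through the smoothness condition (Assumption 3), while the slope gap $p_\psi'(a|x)-p_\psi'(b|x)>0$ is the genuine regularity content: the density must be strictly increasing at the left endpoint and strictly decreasing at the right endpoint, with slopes bounded away from $0$ uniformly, which is exactly what rules out arbitrarily flat level crossings. I would verify this from unimodality together with the subexponential tail behavior encoded in $\aleph_\X$ (which keeps the $(1-\alpha)$ region bounded uniformly in $x$) and the smoothness condition controlling the mass in a band around $t_x^{(\alpha)}$. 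A more elementary route avoids the implicit function theorem altogether: take the uniform bound $\sup_{x\in\X}\|\hat{p}^{(x,y)}_\psi(\cdot|x)-p_\psi(\cdot|x)\|_{\infty}\le\xi_{\lambda}r_n$ from Lemma~\ref{lem:dens}, translate the $O(r_n)$ vertical perturbation of the density into an $O(r_n)$ horizontal displacement of each endpoint via the same slope lower bound, and absorb the accompanying $O(r_n)$ shift of the threshold level through the coverage-band estimate of Assumption 3; this reproduces the same rate and again isolates the nonvanishing-slope condition as the crux.
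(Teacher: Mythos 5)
Your reduction of \eqref{minproblem} to a sensitivity analysis in the parameter --- writing $(\hat a(x),\hat b(x)) = (a(x,\psihat^{(x,y)}), b(x,\psihat^{(x,y)}))$ and invoking Corollary~\ref{concentration-cor} --- is a legitimate and genuinely different skeleton from the paper's, which never parametrizes the endpoints by $\psi$; the paper instead conditions on the event of Lemma~\ref{lem:dens}, sandwiches the estimated problem between the two problems for the \emph{true} density with coverage constraints $1-\alpha \pm M\xi_{\lambda}r_n$, and converts the resulting $2M\xi_{\lambda}r_n$ difference in mass between the nested solution intervals into a length bound. However, your argument has a genuine gap exactly where you flag it, and the gap is not fillable from the hypotheses of the lemma. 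Both of your routes convert a perturbation of the density (or of the level $t_x^{(\alpha)}$) into a displacement of the endpoints by dividing by the slope gap $p_\psi'(a(x)|x)-p_\psi'(b(x)|x)$, so you need the derivative of $y\mapsto p_\psi(y|x)$ at the two endpoints to be bounded away from zero uniformly over $x\in\X$. That does not follow from unimodality (a unimodal density can have an inflection, hence zero derivative, precisely at the height $t_x^{(\alpha)}$ for some $x$, at which point your Lipschitz constant $L$ blows up), it is not implied by membership in $\aleph_\X$, and Assumptions 3 and 4 --- which you lean on for $t_x^{(\alpha)}\geq t_0$ and for the boundedness of $|\nabla_\psi F(v|x)|$ --- are not among the hypotheses of this lemma (only Assumptions 1 and 2 are assumed here).

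The paper's mechanism shows the slope condition is avoidable: with $a_{n,1}(x)<a_{n,2}(x)$ and $b_{n,2}(x)<b_{n,1}(x)$ the two boundary shells carry total mass exactly $2M\xi_{\lambda}r_n$, and since $p_\psi(\cdot|x)$ is monotone on each shell this mass is at least $\left\{(a_{n,2}(x)-a_{n,1}(x))+(b_{n,1}(x)-b_{n,2}(x))\right\}p_\psi(a_{n,1}(x)|x)$; dividing by a lower bound $m$ on the density \emph{value} at the boundary (not its derivative) gives the $O(r_n)$ length bound directly. If you want to keep the implicit-function-theorem route you must either add the uniform nonvanishing-slope condition as an explicit hypothesis or replace the level-to-endpoint step by this mass-to-length step; as written, the crux of your proof is asserted rather than proved.
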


\begin{proof}
Let $E$ be the event that  
$\|\hat{p}_\psi^{(x,y)}(\cdot|x) - p_\psi(\cdot|x)\|_\infty \leq \xi_\lambda r_n$ 
with $\xi_\lambda$ defined in Lemma~\ref{lem:dens}.  
Note that $\Prob(E^c) = O\left(n^{-\lambda}\right)$.  
As a reminder, WLOG we are assuming that the class of densities $p_\psi$ are 
unimodal over the parameter space $\aleph_\X$.  In what follows, 
we will assume that $a(x) < b(x)$ is an optimization constraint.  
On $E$ we have that,
\begin{align*}
  &\sup_{x \in \X}\left\{
    [\text{argmin}_{(a(x),b(x))} \left(b(x) - a(x)\right), 
    \; \text{subject to} \; \int_{a(x)}^{b(x)} 
      \hat{p}_\psi^{(x,y)}(u|x) du = 1 - \alpha] \right. \\ 
    &\qquad- \left.
      [\text{argmin}_{(a(x),b(x))} \left(b(x) - a(x)\right), 
      \; \text{subject to} \; \int_{a(x)}^{b(x)} 
        p_\psi(u|x) du = 1 - \alpha] 
  \right\} \\
  &\leq\sup_{x \in \X}\left\{
    [\text{argmin}_{(a(x),b(x))} \left(b(x) - a(x)\right), 
    \; \text{subject to} \; 
      \int_{a(x)}^{b(x)} \{p_\psi(u|x) - \xi_{\lambda}r_n\} du = 1 - \alpha] 
        \right. \\
    &\qquad- \left.
      [\text{argmin}_{(a(x),b(x))} \left(b(x) - a(x)\right), 
      \; \text{subject to} \; 
        \int_{a(x)}^{b(x)} \{p_\psi(u|x) + \xi_{\lambda}r_n\} du = 1 - \alpha] 
  \right\} \\
  &= \sup_{x \in \X}\left\{
    [\text{argmin}_{(a(x),b(x))} \left(b(x) - a(x)\right), 
    \; \text{subject to} \; \int_{a(x)}^{b(x)} p_\psi(u|x) du 
      -(b(x) - a(x))\xi_{\lambda}r_n = 1 - \alpha] \right. \\ 
      &\qquad- \left.
        [\text{argmin}_{(a(x),b(x))} \left(b(x) - a(x)\right), 
        \; \text{subject to} \; \int_{a(x)}^{b(x)} p_\psi(u|x) du 
        + (b(x) - a(x))\xi_{\lambda}r_n = 1 - \alpha] 
  \right\}.
\end{align*}
Since the optimization problems above correspond to highest density 
regions for a uniformly bounded class of densities, it must be the 
case that 
$\inf_{x \in \X}a(x) > -\infty$, 
$\sup_{x \in \X}b(x) < \infty$.
For all $\alpha > 0$ in the optimization problems above, 
there exists an $N$ and $M$ such that 
$$
  \int_{a(x)}^{b(x)} p_\psi(u|x) du + (b(x) - a(x))\xi_{\lambda}r_n 
    \leq \int_{a(x)}^{b(x)} p_\psi(u|x) du + M\xi_{\lambda}r_n 
$$ 
and
$$
  \int_{a(x)}^{b(x)} p_\psi(u|x) du - (b(x) - a(x))\xi_{\lambda}r_n 
    \geq \int_{a(x)}^{b(x)} p_\psi(u|x) du - M\xi_{\lambda}r_n 
$$
for all $n > N$.  This yields that
\begin{align*}
  &\sup_{x \in \X}\left\{
    [\text{argmin}_{(a(x),b(x))} \left(b(x) - a(x)\right), 
    \; \text{subject to} \; \int_{a(x)}^{b(x)} p_\psi(u|x) du 
      -(b(x) - a(x))\xi_{\lambda}r_n = 1 - \alpha] \right. \\ 
      &\qquad- \left.
        [\text{argmin}_{(a(x),b(x))} \left(b(x) - a(x)\right), 
        \; \text{subject to} \; \int_{a(x)}^{b(x)} p_\psi(u|x) du 
        + (b(x) - a(x))\xi_{\lambda}r_n = 1 - \alpha] 
  \right\} \\
  &\leq \sup_{x \in \X}\left\{
    [\text{argmin}_{(a(x),b(x))} \left(b(x) - a(x)\right), 
    \; \text{subject to} \; \int_{a(x)}^{b(x)} p_\psi(u|x) du 
      - M\xi_{\lambda}r_n = 1 - \alpha] \right. \\ 
      &\qquad- \left.
        [\text{argmin}_{(a(x),b(x))} \left(b(x) - a(x)\right), 
        \; \text{subject to} \; \int_{a(x)}^{b(x)} p_\psi(u|x) du 
        + M\xi_{\lambda}r_n = 1 - \alpha] 
  \right\} \\
  &= \sup_{x \in \X}\left\{
    [\text{argmin}_{(a(x),b(x))} \left(b(x) - a(x)\right), 
    \; \text{subject to} \; \int_{a(x)}^{b(x)} p_\psi(u|x) du 
      = 1 - \alpha + M\xi_{\lambda}r_n] \right. \\ 
      &\qquad- \left.
        [\text{argmin}_{(a(x),b(x))} \left(b(x) - a(x)\right), 
        \; \text{subject to} \; \int_{a(x)}^{b(x)} p_\psi(u|x) du 
        = 1 - \alpha - M\xi_{\lambda}r_n] 
  \right\}.
\end{align*}
Let $a_{n,1}(x)$ and $b_{n,1}(x)$ be the solution to 
$$
  \text{argmin}_{(a(x),b(x))} \left(b(x) - a(x)\right), 
    \; \text{subject to} \; \int_{a(x)}^{b(x)} p_\psi(u|x) du 
      = 1 - \alpha + M\xi_{\lambda}r_n, 
$$
and let $a_{n,2}(x)$ and $b_{n,2}(x)$ be the solution to 
$$
  \text{argmin}_{(a(x),b(x))} \left(b(x) - a(x)\right), 
    \; \text{subject to} \; \int_{a(x)}^{b(x)} p_\psi(u|x) du 
      = 1 - \alpha - M\xi_{\lambda}r_n.
$$
Since $\ptrue$ is unimodal, we have that $a_{n,1}(x) < a_{n,2}(x)$ 
and $b_{n,2}(x) < b_{n,1}(x)$.  The combination of this and the setup of 
the optimization problems implies that 
\begin{align*}
  &\sup_{x\in\X}\left\{(a_{n,2}(x) - a_{n,1}(x))p_\psi(a_{n,1}(x)|x) 
    + (b_{n,1}(x) - b_{n,2}(x))p_\psi(b_{n,1}(x)|x)\right\} \\
    &\qquad= \sup_{x\in\X}\left[\left\{(a_{n,2}(x) - a_{n,1}(x)) 
      + (b_{n,1}(x) - b_{n,2}(x))\right\}p_\psi(a_{n,1}(x)|x)\right] \\
    &\qquad\leq 2M\xi_{\lambda}r_n.
\end{align*}
The number $p_\psi(a_{n,1}(x)|x)$ is increasing in $n$ for all $x \in \X$ 
since $1 - \alpha + M\xi_{\lambda}r_n$ is decreasing in $n$.  We can pick a 
number $m$ such that $\inf_{x\in\X}p_\psi(a_{n,1}(x)|x) > m$.  Thus, 
$$
  \sup_{x\in\X}\left[\left\{(a_{n,2}(x) - a_{n,1}(x)) 
      + (b_{n,1}(x) - b_{n,2}(x))\right\}\right] 
    \leq \frac{2M\xi_{\lambda}r_n}{m}.
$$
Let $\xi_{\lambda}' = 2M\xi_{\lambda}/m$.  
This implies that  
$$
  \sup_{x\in\X}(a_{n,2}(x) - a_{n,1}(x)) \leq \xi_{\lambda}'r_n 
    \qquad \text{and} \qquad
  \sup_{x\in\X}(b_{n,1}(x) - a_{n,2}(x)) \leq \xi_{\lambda}'r_n.
$$
Therefore we have that 
$|(\hat a(x), \hat b(x)) - (a(x), b(x))| \leq \xi_{\lambda}'r_n$.  
Our conclusion follows.
\end{proof}

\begin{lem}
Let $Y|X=x$ be a random variable with conditional density 
$p_\psi(\cdot|x)$ and parameter space $\aleph_\X$.  Assume that 
$p_\psi(\cdot|x)$ is twice differentiable in $\psi$ and satisfies 
Assumption 2.  Let $F(y|x)$ be the conditional distribution function 
corresponding to $p_\psi(\cdot|x)$ and assume that $F(y|x)$ satisfies 
Assumption 4.    
Let $(X_1,Y_1)$, $\ldots$, $(X_n,Y_n)$ be independent and 
identically distributed copies of $(X,Y)$.  Let $\psi \in \R^r$.  
Let $\psihat$ be the MLE of $\psi$.  Augment the sample data with a new 
point $(x,y)$.  Suppose that Assumption 1 holds.  
Let $\psihat^{(x,y)}$ be the MLE of $\psi$ with respect to the augmented data. 
Let $\hat{F}^{(x,y)}(\cdot|x)$ be the estimated conditional distribution 
$F(y|x)$ with $\psihat^{(x,y)}$ plugged in.  
Given $\lambda > 0$, there is a numerical constant 
$\chi_{\lambda}'$ such that 
$$
  \Prob\left\{\sup_{x \in \X}\|\hat{F}^{(x,y)}(\cdot|x) - F(\cdot|x)\|_{\infty} 
    \geq \chi_{\lambda}'r_n\right\} 
  = O\left(n^{-\lambda}\right).
$$  
\label{lem:dist}
\end{lem}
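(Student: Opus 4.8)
The plan is to mirror the proof of Lemma~\ref{lem:dens}, replacing the Lipschitz control on the density (Assumption 2) with the uniform gradient bound on the distribution function supplied by Assumption 4. Making the parameter dependence explicit, write $\hat{F}^{(x,y)}(v|x) = F_{\psihat^{(x,y)}}(v|x)$ and $F(v|x) = F_{\psi}(v|x)$. First I would apply the mean value theorem in the parameter: for each fixed $v$ and $x$ there is an intermediate point $\tilde\psi$ on the segment joining $\psi$ and $\psihat^{(x,y)}$ with
\[
  F_{\psihat^{(x,y)}}(v|x) - F_{\psi}(v|x)
    = \nabla_\psi F_{\tilde\psi}(v|x)^{\top}\left(\psihat^{(x,y)} - \psi\right).
\]
By H\"older's inequality together with the uniform bound $|\nabla_\psi F(v|x)| \leq K$ of Assumption 4 --- valid for all $v$, all $x \in \X$, and all $\psi \in \aleph_\X$ --- taking absolute values and then suprema over $v$ and over $x$ yields the deterministic estimate
\[
  \sup_{x \in \X}\|\hat{F}^{(x,y)}(\cdot|x) - F(\cdot|x)\|_{\infty}
    \leq K\left|\psihat^{(x,y)} - \psi\right|.
\]

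With this bound in hand, the remaining steps are identical to those of Lemma~\ref{lem:dens}. I would bound
\[
  \Prob\left\{\sup_{x \in \X}\|\hat{F}^{(x,y)}(\cdot|x) - F(\cdot|x)\|_{\infty}
    \geq \chi_{\lambda}'r_n\right\}
  \leq
  \Prob\left\{K\left|\psihat^{(x,y)} - \psi\right| \geq \chi_{\lambda}'r_n\right\},
\]
and then choose $\chi_{\lambda}'$ so that $A_{\lambda}' = \chi_{\lambda}'/K$ meets the rate in \eqref{precise-cor}. Because $r_n = \sqrt{\log(n)/n}$ is precisely the scale at which $|\psihat^{(x,y)} - \psi|$ concentrates, the conclusion then follows immediately from Corollary~\ref{concentration-cor}.

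The only point requiring care is the uniformity of the mean-value step: the intermediate point $\tilde\psi$ depends on both $v$ and $x$, so I must ensure that Assumption 4 controls $|\nabla_\psi F|$ at every such point rather than merely at $\psi$. This is exactly what Assumption 4 delivers, since its bound is stated for all $\psi \in \aleph_\X$ and the segment between $\psi$ and $\psihat^{(x,y)}$ remains in $\aleph_\X$ whenever $\psihat^{(x,y)}$ does. Thus Assumption 4 plays for $F$ precisely the role that the Lipschitz condition of Assumption 2 plays for $p_\psi$ in Lemma~\ref{lem:dens}, and no genuinely new obstacle arises beyond this bookkeeping.
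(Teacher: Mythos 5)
Your proposal is correct and follows essentially the same route as the paper: reduce $\sup_{x}\|\hat{F}^{(x,y)}(\cdot|x) - F(\cdot|x)\|_{\infty}$ to a constant multiple of $|\psihat^{(x,y)} - \psi|$ via the gradient bound of Assumption 4, then invoke Corollary~\ref{concentration-cor} at rate $r_n$. The only (cosmetic) difference is that you use the exact mean value theorem with an intermediate $\tilde\psi$, whereas the paper Taylor-expands around $\psi$ and absorbs an $o(|\psihat^{(x,y)} - \psi|)$ remainder for large $n$; your version is if anything slightly cleaner, provided the segment to $\psihat^{(x,y)}$ stays in $\aleph_\X$ as you note.
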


\begin{proof}
Let $E$ be the event that $|\psihat^{(x,y)} - \psi| < A_{\lambda}'r_n$ with 
$\lambda$ and $A_{\lambda}$ defined in Corollary~\ref{concentration-cor} and 
note that $\Prob(E^c) = O\left(n^{-\lambda}\right)$ by 
Corollary~\ref{concentration-cor}.
A Taylor expansion of $\hat{F}^{(x,y)}(v|x)$ around $F(v|x)$ yields 
\begin{align*}
  |\hat{F}^{(x,y)}(v|x) - F(v|x)| &= 
    |\nabla_\psi F(v|x)'(\psihat^{(x,y)} - \psi) + o(|\psihat^{(x,y)} - \psi|)| \\
  &= |\nabla_\psi F(v|x)'(\psihat^{(x,y)} - \psi) + o(r_n)|, 
\end{align*}
on $E$.  Boundedness of $|\nabla_\psi F(v|x)|$ for all $v$ and all 
$\psi \in \aleph_\X$ implies that  
$|\nabla_\psi F(v|x)'(\psihat^{(x,y)} - \psi)| = O(r_n)$.  
This then implies that there exists an $N$ such that 
\begin{align*}
  |\nabla_\psi F(v|x)'(\psihat^{(x,y)} - \psi)| + o(r_n)| 
    &\leq |\nabla_\psi F(v|x)||\psihat^{(x,y)} - \psi| + |o(r_n)| \\
    &\leq |\nabla_\psi F(v|x)|A_{\lambda}'r_n, 
\end{align*}
for all $n > N$ on $E$.  Our conclusion follows for some 
$\chi_{\lambda}' > 0$.
\end{proof}

We now build on Lemma~\ref{lem:dist}.

\begin{lem}
Define $U_{\text{lwr}}(x) = F(a(x)|x)$ and 
$U_{\text{upr}}(x) = F(b(x)|x)$ 
where $(a(x), b(x))$ as defined in \eqref{minproblemtrue}.  
Define $\lambda > 0$ as in Lemma~\ref{lem:optim} and Lemma~\ref{lem:dist}.  
Then there exists $\eta_{\text{lwr}}, \eta_{\text{upr}} > 0$ such that 
$$
  \Prob\left(\sup_{x\in\X}|\widehat{U}^{(x,y)}_{\text{lwr}} - U_{\text{lwr}}(x)| 
    \geq \eta_\lambda^{\text{lwr}}r_n\right) = O(n^{-\lambda}),
      \quad \text{and} \quad 
  \Prob\left(\sup_{x\in\X}|\widehat{U}^{(x,y)}_{\text{upr}} - U_{\text{upr}}(x)| 
    \geq \eta_\lambda^{\text{upr}}r_n\right) = O(n^{-\lambda}).
$$
\label{lem:percentiles}
\end{lem}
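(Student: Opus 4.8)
The plan is to control each supremum by splitting the discrepancy into a distribution-estimation error and an endpoint-perturbation error, and then to feed these two pieces into Lemma~\ref{lem:dist} and Lemma~\ref{lem:optim} respectively. Recalling that $\widehat{U}^{(x,y)}_{\text{lwr}} = \widehat{F}^{(x,y)}(\hat a(x)|x)$ and $U_{\text{lwr}}(x) = F(a(x)|x)$, I would insert $F(\hat a(x)|x)$ and use the triangle inequality,
\begin{align*}
  |\widehat{F}^{(x,y)}(\hat a(x)|x) - F(a(x)|x)|
    &\leq |\widehat{F}^{(x,y)}(\hat a(x)|x) - F(\hat a(x)|x)| \\
    &\quad + |F(\hat a(x)|x) - F(a(x)|x)|.
\end{align*}
The first term is dominated, uniformly in $x$ and in the evaluation point, by $\sup_{x\in\X}\|\widehat{F}^{(x,y)}(\cdot|x) - F(\cdot|x)\|_{\infty}$, which Lemma~\ref{lem:dist} bounds by $\chi_{\lambda}'r_n$ off an event of probability $O(n^{-\lambda})$.

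For the second term, I would use that $F(\cdot|x)$ is Lipschitz in its first argument with constant $\sup_{v}p_\psi(v|x)$, and that this constant is finite and bounded uniformly over $x\in\X$ and $\psi\in\aleph_\X$ because the relevant class of densities is uniformly bounded (the same uniform boundedness already invoked in the optimization step of Lemma~\ref{lem:optim}). Writing $C$ for such a uniform bound gives $|F(\hat a(x)|x) - F(a(x)|x)| \leq C\,|\hat a(x) - a(x)|$, and Lemma~\ref{lem:optim} controls $\sup_{x\in\X}|\hat a(x) - a(x)| \leq \sup_{x\in\X}|(\hat a(x),\hat b(x)) - (a(x),b(x))|$ by $\xi_{\lambda}'r_n$ off an event of probability $O(n^{-\lambda})$.

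Intersecting the two high-probability events and applying a union bound to their complements then yields
$$
  \sup_{x\in\X}|\widehat{U}^{(x,y)}_{\text{lwr}} - U_{\text{lwr}}(x)|
    \leq (\chi_{\lambda}' + C\xi_{\lambda}')\,r_n
$$
off an event of probability $O(n^{-\lambda})$, so the lower-endpoint claim holds with $\eta_\lambda^{\text{lwr}} = \chi_{\lambda}' + C\xi_{\lambda}'$; the identical decomposition applied to $\hat b(x)$ and $b(x)$ gives the upper-endpoint claim with $\eta_\lambda^{\text{upr}} = \chi_{\lambda}' + C\xi_{\lambda}'$. The only delicate point, and the step I expect to require the most care, is establishing that the Lipschitz constant of $F(\cdot|x)$ is uniform over $x\in\X$; this reduces to uniform boundedness of the density over $\aleph_\X$, which is available in the present setup. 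Once that decomposition is in place, everything else is a routine combination of the two preceding lemmas, so the argument is short.
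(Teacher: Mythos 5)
Your proposal is correct and follows essentially the same route as the paper's proof: insert $F(\hat a(x)|x)$, apply the triangle inequality, bound the first term via Lemma~\ref{lem:dist} and the second via Lemma~\ref{lem:optim} together with a uniform bound on the density (the paper bounds the difference quotient by $\sup_{x\in\X}p_\psi(a(x)|x)+\varepsilon$ for large $n$, whereas you use the global Lipschitz constant $\sup_v p_\psi(v|x)$ directly, which is a minor and if anything cleaner variation). The resulting constants $\eta_\lambda^{\text{lwr}},\eta_\lambda^{\text{upr}}$ match the paper's up to relabeling, so no further comment is needed.
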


\begin{proof}
Set $\lambda > 0$ and let $E$ be the event that 
$
  \sup_{x\in\X}|(\hat a(x), \hat b(x)) - (a(x), b(x))| 
    \leq \xi_{\lambda}'r_n,
$
and \\
$ 
  \sup_{x \in \X}\|\hat{F}^{(x,y)}(\cdot|x) - F(\cdot|x)\|_{\infty} 
    \leq \chi_{\lambda}'r_n, 
$
where $\xi_{\lambda}'$ is defined as in Lemma~\ref{lem:optim} and 
$\chi_{\lambda}'$ is defined as in Lemma~\ref{lem:dist}, and note that 
these two Lemmas give $\Prob(E^c) = O(n^{-\lambda})$.  
Then,
\begin{align*}
  |\widehat{U}^{(x,y)}_{\text{lwr}} - U_{\text{lwr}}(x)| 
    &= |\widehat F^{(x,y)}(\hat a(x)|x) \pm F(\hat a(x)|x) - F(a(x)|x)| \\
    &\leq |\widehat F^{(x,y)}(\hat a(x)|x) - F(\hat a(x)|x)| + 
      |F(\hat a(x)|x) - F(a(x)|x)| \\
    &\leq r_n\chi' 
      + \left|\frac{F(\hat a(x)|x) - F(a(x)|x)}{\hat a(x) - a(x)}\right|
        |\hat a(x) - a(x)|.
\end{align*}
Arbitrarily choose $\varepsilon > 0$, then for $n$ sufficiently large, 
we have that
$$
  \left|\frac{F(\hat a(x)|x) - F(a(x)|x)}{\hat a(x) - a(x)}\right|
    |\hat a(x) - a(x)| \leq (p_\psi(a(x)|x) + \varepsilon)\xi_{\lambda}'r_n.
$$
We can pick $L \in \R$ such that 
$\sup_{x \in \X}p_\psi(a(x)|x) \leq L$.  Putting all of this together, 
we have 
$$
  |\widehat{U}^{(x,y)}_{\text{lwr}} - U_{\text{lwr}}(x)| 
    \leq (\chi_\lambda' + (L + \varepsilon)\xi_{\lambda}')r_n,
$$
for $n$ sufficently large.  Set 
$\eta_\lambda^{\text{lwr}} = (\chi_\lambda' + (L + \varepsilon)\xi_{\lambda}')$ 
and note that the right hand side of the above displayed equation does not 
depend on $x$.  The same argument gives 
$
  |\widehat{U}^{(x,y)}_{\text{upr}} - U_{\text{upr}}(x)| 
    \leq \eta_\lambda^{\text{upr}}r_n,
$
for $n$ sufficently large.  Our conclusion follows.
\end{proof}

\subsection*{Proof of Theorem 3}

\begin{proof}[proof of Theorem~\ref{thm:correctmodel2}]
Let $\lambda > 0$ be as defined in Lemma~\ref{lem:percentiles}. 
Define $\xi_{\lambda}'$ as in Lemma~\ref{lem:optim}, 
$\chi_{\lambda}'$ as in Lemma~\ref{lem:dist},
and both $\eta_{\lambda}^{\text{lwr}}$ and $\eta_{\lambda}^{\text{upr}}$ 
as in Lemma~\ref{lem:percentiles}.
Let $E$ be the event such that 
$
  \sup_{x\in\X}|(\hat a(x), \hat b(x)) - (a(x), b(x))| 
    \leq \xi_{\lambda}'r_n,
$
and 
$ 
  \sup_{x \in \X}\|\hat{F}^{(x,y)}(\cdot|x) - F(\cdot|x)\|_{\infty} 
    \leq \chi_{\lambda}'r_n, 
$
and 
$
  \sup_{x\in\X}\left(|\widehat{U}^{(x,y)}_{\text{lwr}} - U_{\text{lwr}}(x)|\right)
    \leq \eta_\lambda^{\text{lwr}}r_n,
$
and \\
$
  \sup_{x\in\X}\left(|\widehat{U}^{(x,y)}_{\text{upr}} - U_{\text{upr}}(x)|\right)
    \leq \eta_\lambda^{\text{upr}}r_n.
$    
Note that Lemmas~\ref{lem:optim}-\ref{lem:percentiles} implies that  
$\Prob(E^c) = O(n^{-\lambda})$.  Condition on $E$ and note that
$$
  (n+1)U_{\text{lwr}}^{(x,y)} 
    \geq (n+1)(U_{\text{lwr}}(x) - \eta_{\lambda}^{\text{lwr}}r_n),
$$
and
$$
  (n+1)U_{\text{upr}}^{(x,y)} 
    \leq (n+1)(U_{\text{upr}}(x) + \eta_{\lambda}^{\text{upr}}r_n).
$$  
With these specifications, we have that
\begin{align*}
  \widehat{U}^{(x,y)}_{\left[\floor{(n+1)\widehat{U}^{(x,y)}_{\text{lwr}}}\right]}
    &\geq \widehat{U}^{(x,y)}_{\left[\floor{
      (n+1)(U_{\text{lwr}}(x) - \eta_\lambda^{\text{lwr}}r_n)
    }\right]} \\
    &= \widehat{F}^{(x,y)}(Y_{\left[\floor{
        (n+1)(U_{\text{lwr}}(x) - \eta_\lambda^{\text{lwr}}r_n)
      }\right]} | X_{\left[\floor{
        (n+1)(U_{\text{lwr}}(x) - \eta_\lambda^{\text{lwr}}r_n)
      }\right]}) \\
    &\geq F(Y_{\left[\floor{
        (n+1)(U_{\text{lwr}}(x) - \eta_\lambda^{\text{lwr}}r_n)
      }\right]} | X_{\left[\floor{
        (n+1)(U_{\text{lwr}}(x) - \eta_\lambda^{\text{lwr}}r_n)
      }\right]}) - \chi_{\lambda}'r_n \\
    &\geq U_{\left[\floor{
        (n+1)(U_{\text{lwr}}(x) - \eta_\lambda^{\text{lwr}}r_n)
      }\right]} - \chi_{\lambda}'r_n - |O(n^{-1/2})|,
\end{align*}
where  
$
  U_{\left[\floor{
        (n+1)(U_{\text{lwr}}(x) - \eta_\lambda^{\text{lwr}}r_n)
      }\right]}
$
is the 
$\left[\floor{ (n+1)(U_{\text{lwr}}(x) - \eta_\lambda^{\text{lwr}}r_n) }\right]$th 
quantile of the sample of uniform random variables, $F(Y_i|X_i)$, $i=1,...,n+1$ 
and the last inequality follows from \citet[Theorem 13]{ferguson1996course}.  
Now, let 
$$
  z_n = \frac{\floor{
        (n+1)(U_{\text{lwr}}(x) - \eta_\lambda^{\text{lwr}}r_n)}}
        {n+1},
$$
and note that 
$$
  U_{\left[\floor{
        (n+1)(U_{\text{lwr}}(x) - \eta_\lambda^{\text{lwr}}r_n)
      }\right]}
   \geq z_n - n^{-1/2}|B_n(z_n)| - |O(n^{-1}\log(n))|, 
$$
where $\{B_n(u): 0\leq u\leq 1\}$ is a sequence of Brownian bridges. 
This inequality follows from \citet[Theorem 1]{csorgo1978strong} 
and \citet{komlos1975approximation} and a bit of algebra.  
We can pick $\chi_{\lambda}''$ such that, for $n$ sufficiently large, 
$$
  z_n - n^{-1/2}|B_n(z_n)| - |O(n^{-1}\log(n))| - \chi_{\lambda}'r_n - |O(n^{-1/2})|
    \geq U_{\text{lwr}}(x) - \chi_{\lambda}''r_n.
$$

A similar argument gives, 
\begin{align*}
  \widehat{U}^{(x,y)}_{\left[\floor{(n+1)\widehat{U}^{(x,y)}_{\text{upr}}}\right]} 
    &\leq U_{\text{upr}}(x) + \chi_{\lambda}''r_n, 
\end{align*}
for sufficiently large $n$.
We also have that 
$|\widehat{U}^{(x,y)}_{n+1} - F(y|x)| \leq \chi_{\lambda}'r_n$.

Putting all of this together, we have that 
$$
  \Copttrans(x) \subseteq \{ y: U_{\text{lwr}}(x) - r_n(\chi_{\lambda}'+\chi_{\lambda}'')
      \leq F(y|x) \leq   
    U_{\text{upr}}(x) + r_n(\chi_{\lambda}'+\chi_{\lambda}'')\},
$$
for $n$ sufficiently large.  Let $F_x^{-1}(\cdot)$ be the inverse of 
$F(\cdot|x)$, then
\begin{align*}
  &\{ y: U_{\text{lwr}}(x) - r_n(\chi_{\lambda}'+\chi_{\lambda}'') 
      \leq F(y|x) \leq   
    U_{\text{upr}}(x) + r_n(\chi_{\lambda}'+\chi_{\lambda}'')\} \\
  &\qquad=
    \left\{ y: F_x^{-1}\left(F(a(x)|x) - r_n(\chi_{\lambda}'+\chi_{\lambda}'')\right) 
      \leq y \leq   
    F_x^{-1}\left(F(b(x)|x) + r_n(\chi_{\lambda}'+\chi_{\lambda}'')\right)\right\}.
\end{align*}
For $n$ sufficiently large, we can pick $K > 0$ such that  
\begin{align*}
  &\left\{ y: F_x^{-1}\left(F(a(x)|x) - r_n(\chi_{\lambda}'+\chi_{\lambda}'')\right) 
      \leq y \leq   
    F_x^{-1}\left(F(b(x)|x) + r_n(\chi_{\lambda}'+\chi_{\lambda}'')\right)\right\} \\
  &\qquad\subseteq
    \left\{ y: a(x) - r_n(\chi_{\lambda}'+\chi_{\lambda}'')K
      \leq y \leq   
    b(x) + r_n(\chi_{\lambda}'+\chi_{\lambda}'')K\right\}.
\end{align*}
Note that 
$$
  \nu\left(C_P(x) \triangle \{ a(x) - r_n(\chi_{\lambda}'+\chi_{\lambda}'')K
      \leq y \leq   
    b(x) + r_n(\chi_{\lambda}'+\chi_{\lambda}'')K\}\right) 
    = 2r_n(\chi_{\lambda}'+\chi_{\lambda}'')K.
$$

A similar argument gives
$$
  \Copttrans(x) \supseteq \left\{ y: a(x) - r_n(\chi_{\lambda}'+\chi_{\lambda}'')K'
      \leq y \leq   
    b(x) + r_n(\chi_{\lambda}'+\chi_{\lambda}'')K'\right\},
$$
for some $K' > 0$ and $n$ sufficiently large, where
$$
  \nu\left(C_P(x) \triangle \left\{ y: 
    a(x) - r_n(\chi_{\lambda}'+\chi_{\lambda}'')K'
      \leq y \leq   
    b(x) + r_n(\chi_{\lambda}'+\chi_{\lambda}'')K'\right\}\right) 
  = 2r_n(\chi_{\lambda}'+\chi_{\lambda}'')K'.
$$

Putting everything together yields
$$
  \nu\left(\Copttrans(x) \triangle C_P(x)\right) 
    \leq 4r_n(\chi_{\lambda}'+\chi_{\lambda}'')\max(K,K').
$$
Set $\chi_\lambda = 4(\chi_{\lambda}'+\chi_{\lambda}'')\max(K,K')$.  Therefore
$$
  \nu\left(\Copttrans(x) \triangle C_P(x)\right) 
    \leq \chi_{\lambda}r_n.
$$
Our conclusion follows.
\end{proof}

\newpage
\section*{Supplementary Materials for 
  ``Efficient and minimal length parametric conformal prediction regions''}

We provide additional technical and numeric 
results which demonstrate the advantages and disadvantages 
of our parametric conformal prediction region 
compared to the nonparametric conformal prediction region 
\citep{lei2014distribution}, the LS conformal prediction 
region \citep{lei2018distribution} obtained from conformalized residual 
scores, the LSLW conformal prediction region 
\citep[Section 5.2]{lei2018distribution} obtained from conformalized locally 
weighted residual scores, and the highest density prediction region.  In 
analyses with model misspecification, the parametric, LS, and LSLW 
conformal prediction regions and highest density 
prediction region are constructed under the misspecified model.  
The binning used to construct the parametric and nonparametric conformal 
prediction regions follows the bin width asymptotics of 
\citet{lei2014distribution}.  Comparative diagnostics are discussed in 
Section 5 of the main text.

We expand upon the three simulation settings in the sensitivity analysis in 
Section 5.  These simulation settings are:  
\begin{itemize}
\item[A)] Gamma regression with $\beta = [1.25, -1]'$ and 
$n \in \{150, 250, 500\}$. 
In this setting the Gamma density is correctly specified for the parametric 
conformal and the highest density prediction regions.  A cubic regression 
model is assumed for the LS and LSLW 
conformal prediction regions.  
\item[B)] Gamma regression with $\beta = [0.5, 1]'$ and 
$n \in \{150, 250, 500\}$. 
In this setting the cubic regression model is assumed for the highest density 
prediction region and the misspecified parametric conformal, LS, 
and LSLW conformal prediction regions. 
\item[C)] Simple linear regression with normal errors and constant variance.  
We set $\beta = [2, 5]'$, and the normal errors to have variance 
$\sigma^2 = 1$.  Results are considered for sample sizes 
$n \in \{150, 250, 500\}$.  In this setting the regression model is correctly 
specified for the highest density prediction region and the parametric, 
LS, and LSLW conformal prediction regions.  
\end{itemize}

In the next Section we show that all of the assumptions required for the 
asymptotic efficiency of $\Coptloc$ (see Section 2.3.1 in the 
manuscript) hold for GLMs.

\section{Conformal prediction in the GLM setting}

In this Section we establish that GLMs which can be parameterized as in 
(10) in the main text satisfy Assumptions 2-4 in the main text, and that the 
kernel density based conformal prediction region $\Coptloc(x)$ is 
appropriate for this class of GLMs.  There is considerable overlap in the 
conditions required to achieve both goals, Assumptions 2-3 in the main 
text are established in the proof that our class of GLMs satisfies the 
conditions required for the asymptotic properties of $\Coptloc(x)$ to hold.

We verify that $\Coptloc(x)$ is asymptotically minimal at rate 
$w_n = \left\{\log(n)/n\right\}^{1/(d + 3)}$ when the underlying 
distribution is a GLM.  This result is obtained by showing that our GLM 
parameterization satisfies the conditions of Theorem 1 in 
\citet{lei2014distribution}.
We now provide the conditions in \citet{lei2014distribution} that are 
required for $\Coptloc(x)$ to be asymptotically of minimal length at 
rate $w_n$. The first of these conditions are Assumptions 1-3 in the 
main text.  We additionally require Assumption 5 below.  \\

\noindent{\bf Assumption 5} (Assumptions 1 (b) and (c) of \cite{lei2014distribution}).    
(a) For all $x$, $\ptrue(\cdot|x) \in \Sigma(\eta, L)$, 
  where $\Sigma(\eta, L)$ is a H{\"o}lder class of functions defined in 
  \citet[Appendix A.1]{lei2014distribution}. 
  Correspondingly, the kernel $K$ used to construct the nonparametric 
  conformal prediction band is a $\beta$-valid kernel; 
(b) For any $0 \leq s < \floor{\eta}$, $\ptrue^{(s)}(y|x)$ is 
  continuous and uniformly bounded by $L'$ for all $x$ and $y$. \\

See \citet[Appendix A.1]{lei2014distribution} for a definition of $\beta$-valid 
kernels, the H{\"o}lder class of functions for which $\ptrue(\cdot|x)$ is 
assumed to belong to is defined below.  Additional explanations of these conditions 
are given in \citet{lei2013distribution} and \citet{lei2014distribution}. \\

\begin{defn}[H{\"o}lder class] Given $L>0$ and $\beta>0$, let $l=\floor{\beta}$ 
be the largest integer strictly less than $\beta$. The H{\"o}lder class 
$\Sigma(\beta,L)$ is the family of functions $f:\R\to\R$ whose derivative 
$f^{(l)}$ satisfies
$$
  |f^{(l)}(y)-f^{(l)}(y')| \leq |y-y'|^{\beta-l}, 
    \qquad \text{for all} \qquad y, y'.
$$
\end{defn}

We now show that $\Coptloc(x)$ is asymptotically minimal at rate $w_n$ when 
the underlying model is a GLM, where $\beta$ in Assumption 5 is chosen so that 
$0 < \beta < 2$.

\begin{prop}
Let $(Y_1$, $X_1)$, $\ldots$, $(Y_n$, $X_n)$, $Y_i \in \R^r$, $X_i \in \R^m$, 
be an independent and identically distributed sample of random variables 
with conditional density \eqref{glm}
and parameter space 
$$
  \Theta_\X = \{\beta \in \R^m,\; \phi \in \R^{r-1}: 
    c\left\{f(x'\beta, \phi)\right\} < \infty, 
      \; \text{for all} \; x \in \X\},
$$ 
Assume that $\E(Y_1|x) = g^{-1}(x'\beta)$ and that the canonical statistic 
vector is a one dimensional manifold.  Suppose that Assumption 1 in the 
manuscript holds.  Let $0 < \alpha < 1$. 
Then, for a given $\lambda > 0$, there exists a numerical constant 
$\zeta_\lambda$ such that, 
\begin{equation} 
  \Prob\left[ \sup_{x\in\X} 
    \nu\left\{\Coptloc(x)\triangle\Copt_P(x)\right\} 
      \geq \zeta_\lambda w_n
  \right] = O\left(n^{-\lambda}\right). 
\label{converge}
\end{equation}
\label{expoconf}
\end{prop}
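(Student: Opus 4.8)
The plan is to show that the GLM density class \eqref{glm} parameterized by $\Theta_\X$ satisfies every hypothesis of Theorem 1 in \citet{lei2014distribution}, after which \eqref{converge} follows immediately with the stated rate $w_n$. Those hypotheses are exactly Assumptions 1--3 of the main text together with Assumption 5 above. Since Assumption 1 is assumed and Assumptions 2 and 3 are verified for this same GLM parameterization elsewhere in the Supplementary Materials, the work reduces to confirming Assumption 2, Assumption 3, and the new Assumption 5, and then invoking the cited theorem. I would organize the argument around verifying these assumptions in turn.

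For Assumption 2, I would differentiate $p_\psi(y|x) = \exp[\inner{y, f(x'\beta,\phi)} - c\{f(x'\beta,\phi)\}]$ and use that $f$ is continuous in both arguments, that $c$ is smooth on the interior of its effective domain, and that $(x,\psi)$ is confined to the bounded region $\X \times \Theta_\X$ on which $\theta = f(x'\beta,\phi)$ varies smoothly. Combining this with the earlier identity $\nabla_\psi \log p_\psi(y|x) = O(y)$ and the existence of $M_{Y,x,\psi}(t)$ near $0$, the supremum over $y$ of the density differences is controlled linearly in $\norm{x_1 - x_2}$ and $\norm{\psi_1 - \psi_2}$, yielding the two Lipschitz constants $L_1, L_2$. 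For Assumption 5, I would use that the density is a smooth function of $y$ on the interior of its support, its $y$-derivatives being $p_\psi(y|x)$ times polynomials in the bounded canonical parameter $\theta$ (up to the base-measure factor), so the subexponential tails guaranteed by $\aleph_\X$ keep these derivatives continuous and uniformly bounded over $\X \times \Theta_\X$. Choosing the H\"older exponent $\beta$ with $0 < \beta < 2$ then renders $p_\psi(\cdot|x)$ and its first derivative H\"older with a uniform constant, and a standard $\beta$-valid kernel $K$ fulfills the companion requirement.

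For Assumption 3, I would exploit the smoothness of the exponential-family density together with compactness of $\X$. Because $\theta = f(x'\beta,\phi)$ lies in the interior of the regular full family, $p_\psi(\cdot|x)$ is smooth with at most finitely many modes, and its $t_x^{(\alpha)}$-level set is a finite union of smooth boundaries across which $\nabla_y p_\psi$ does not vanish; this gives the two-sided bound $c_1\varepsilon^\gamma \le \Prob[\{y : |p_\psi(y|x) - t_x^{(\alpha)}| \le \varepsilon\} \mid X=x] \le c_2\varepsilon^\gamma$. Continuity of the map $x \mapsto t_x^{(\alpha)}$ over the compact set $\X$ then furnishes $\inf_x t_x^{(\alpha)} \ge t_0 > 0$. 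With Assumptions 2, 3, and 5 in hand, Theorem 1 of \citet{lei2014distribution} delivers \eqref{converge}.

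The step I expect to be the main obstacle is Assumption 3: the pointwise level-set regularity is routine for a single fixed smooth density, but the required bounds must hold \emph{uniformly} over both $x \in \X$ and $\psi \in \aleph_\X$, and one must simultaneously ensure $t_x^{(\alpha)}$ stays bounded away from $0$. Controlling how the shape of the density near its $(1-\alpha)$-level set may degenerate as $x$ approaches the boundary of $\X$, together with the interaction between the tail behavior and the threshold, is the delicate part; it is the uniformity, rather than the pointwise statement, that carries the real difficulty.
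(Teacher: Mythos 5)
Your proposal is correct and follows essentially the same route as the paper: reduce to Theorem 1 of \citet{lei2014distribution} by verifying Assumptions 2, 3, and 5 for the exponential-family density \eqref{glm}, with Assumption 3 handled by showing the $\varepsilon$-band around the level $t_x^{(\alpha)}$ has measure of order $\varepsilon$ and the density is bounded above and below on that band. The uniformity over $x\in\X$ and the parameter space that you flag as the delicate point is also the step the paper treats most lightly (it simply asserts the existence of uniform constants $a_1, a_2, M_3, M_4$), so your sketch matches the paper's level of rigor as well as its structure.
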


\begin{proof}
The one dimensional manifold assumption is to ensure that all changes in the 
density function with respect to the canonical statistic can be expressed 
through any one component of the canonical statistic vector.  
The result \eqref{converge} follows from Theorem 1 in 
\cite{lei2014distribution}, provided that Assumptions 1-3 from the main text 
hold and that Assumption 5 holds.  
Assumption 1 is an assumed condition of this proposition.  
Exponential families are smooth functions over their parameter space.  
This implies that Assumption 2 from the main text holds for exponential 
families since $\X$ is bounded.  

We now show that Assumption 5 holds.  WLOG, let the exponential family be 
regular full with density $f_\theta(y) = e^{y\theta - c(\theta)}$.  
Then, for every integer $s \leq \floor{\eta}$, we have that 
$f_\theta^{(s)}(y) = \theta^s f_\theta(y)$.  It follows that 
$f_\theta^{(s)}(y)$ is continuous and uniformly bounded, and this establishes 
part (b)of Assumption 5.  We now establish part (a) of Assumption 5 with 
$0 < \beta < 2$, with the only case left to prove is when $1 < \beta < 2$ 
and $s = 1$.  In this case, we can write 
$$
  |f_\theta^{(1)}(y)-f_\theta^{(1)}(y')| 
    = \theta f_\theta(y)|1 - \exp\{\theta(y'-y)\}|.
$$ 
We can bound $\theta f_\theta(y)$ by some $M_1 > 0$ for all $y$ and 
$1 < \beta < 2$, and we can bound $|1 - \exp\{\theta(y'-y)\}|$ by $M_2$ 
for all $y,y'$.  Thus, 
$$
  |f_\theta^{(1)}(y)-f_\theta^{(1)}(y')| 
    \leq M_1M_2(y' - y)^{\beta - 1}
$$ 
for all $y,y'$ such that $|y - y'| \geq 1$.  We can choose $\delta > 0$ such 
that $\exp\{\theta(y'-y) \leq 1 + 2\theta|y' - y|$ for all 
$y,y'$ such that $|y' - y| \leq \delta$.  Thus, for all $y,y'$ such that 
$|y' - y| \leq \delta$, we have that 
$$
  |f_\theta^{(1)}(y)-f_\theta^{(1)}(y')| 
    \leq 2M_1(y' - y)^{\beta - 1}.
$$ 
Finally, for all $y,y'$ such that $\delta \leq |y-y'| < 1$, we have that 
$$
  |f_\theta^{(1)}(y)-f_\theta^{(1)}(y')| 
    \leq 2M_1M_2\left(\frac{y' - y}{\delta}\right)^{\beta - 1}.
$$ 
Part (a) of Assumption 5 holds with $L = 2M_1M_2\delta^{-2}$.

We now verify that Assumption 3 from the main text holds when 
$\ptrue(y|x)$ is an exponential family with density \eqref{glm}.
We will assume WLOG that generating measure $\mu$ is Lebesgue measure.    
Define $\talpha_x$ such that 
$
  \Prob\left\{y: \ptrue(y|x) \geq \talpha_x\right\} = \alpha 
$ 
and let $\Aaxe = \{y: |\ptrue(y|x)-\talpha_x| \leq \varepsilon\}$.  
When $\ptrue(y|x)$ is an exponential family with density \eqref{glm} 
we can choose a $\varepsilon_o > 0$ such that, for all 
$\varepsilon < \varepsilon_o$, we can pick $a_1$, $a_2 \in \R$ for which 
$$
  a_1\varepsilon \leq \mu(\Aaxe) \leq a_2\varepsilon.
$$
Let $M(x) = \sup_{y \in \Aaxe(x)}\{\ptrue(y|x)\}$.  We can pick 
$M_3 < \infty$ such that $\sup_{x\in\X} M(x) \leq M_3$. Then we have, 
\begin{align*}
  &\Prob\left[\left\{
      y:|\ptrue(y|x)-\talpha_x| < \varepsilon
    \right\}| X=x\right] \\
  &\qquad=  \int_{\Aaxe} \exp\left[
      \inner{y, f(x'\beta, \phi)} - c\left\{f(x'\beta, \phi)\right\}
    \right]\mu(dy) \\
  &\qquad\leq M(x)\mu(\Aaxe) \leq a_2M_3\varepsilon.    
\end{align*}
The upper bound in Assumption 3 follows from the choice of $c_2 = a_2M_3$.  
Now pick $\varepsilon_o$ such that $\inf_x \talpha_x \geq \varepsilon_o$.  
This choice of $\varepsilon_o$ is possible since $\alpha < 1$.  
Let $m(x) = \inf_{y \in \Aaxe(x)}\{\ptrue(y|x)\}$. 
We can pick $M_4 > 0$ such that $\inf_{x\in\X}m(x) \geq M_4$.  Then we have, 
\begin{align*}
  &\Prob\left[\left\{
      y:|\ptrue(y|x)-\talpha_x| < \varepsilon
    \right\}| X=x\right] \\
  &\qquad=  \int_{\Aaxe} \exp\left[
      \inner{y, f(x'\beta, \phi)} - c\left\{f(x'\beta, \phi)\right\}
    \right]\nu(dy) \\
  &\qquad\geq m(x)\mu(\Aaxe)  \geq a_1M_4\varepsilon.
\end{align*}
The lower bound in Assumption 3 follows from the choice of $c_1 = a_1M_4$.  
Therefore Assumption 3 holds when $\ptrue(y|x)$ is an exponential family with 
density \eqref{glm}.
\end{proof}

We now verify that Assumption 5 in the main text holds, we show that  
If $|\nabla_\psi F(v|x)|$ is bounded for all $v$ and all $\psi \in \Theta_\X$.

The gradient of $\log \ptrue(y|x)$ is then 
$
  \nabla_{\psi} B(x)
    \left\{y - \nabla_{\theta}c\left(\theta\right)\right\}
$
where 
$$
  B(x) = \left(\begin{array}{cc}
    x & 0 \\
    0 & I 
  \end{array}\right)\left\{\nabla_{\psi}f(x'\beta, \phi)\right\}',
$$
and $\theta = f(x'\beta, \phi)$.  
From \citet[Theorem 11]{trench1978functions}, we have that
\begin{align*}
  \nabla_\psi F(v|x) = \nabla_\psi\int_{-\infty}^v p_{\psi}(u|x) du 
    &= \int_{-\infty}^v \nabla_\psi p_{\psi}(u|x) du \\      
    &= B(x)\int_{-\infty}^v\left\{u - \nabla_\theta c(\theta)\right\} 
      p_{\psi}(u|x) du. 
\end{align*}
The quantity $B(x)$ is bounded in $x$ since $\X$ is bounded.  Existence of 
moments of all orders for the random variables with conditional density 
$p_{\psi}(u|x)$ with parameter space $\Theta_\X$ implies that 
$$
  \int_{-\infty}^v\left\{u - \nabla_\theta c(\theta)\right\} 
      p_{\psi}(u|x) du
$$
is bounded for all $v$ and all $x \in \X$.  Therefore 
$|\nabla_\psi F(v|x)|$ is bounded for all $v$ and all $x \in \X$. 
Therefore Assumption 5 holds for GLMs with densities \eqref{glm}.

\section{Summary of simulation results}
 
We provide additional numerical evidence that parametric conformal 
prediction regions perform well even when the model is misspecified.   
The guarantee of finite-sample marginal and local validity in the presence 
of model misspecification are noted benefits of both the parametric and 
nonparametric conformal prediction regions.  
However, the parametric and nonparametric conformal prediction regions are 
visually very different and give different prediction errors at small to 
moderate sample sizes as seen in Section~\ref{sec:plotsofregions}.  
We see that the both parametric conformal 
prediction regions adapt naturally to the data when the model is correctly 
specified or modest deviations from the specified model are present.  
Large deviations from model misspecification are not handled well as seen 
in the top 2 rows of 
Figures~\ref{conformal-plots-A-150}-\ref{conformal-plots-A-500}, although 
the parametric conformal prediction region does give nominal marginal and 
local coverage with respect to binning in finite-samples.  
When the model is correctly specified, the parametric conformal prediction 
region increasingly resembles the highest density oracle prediction region  
as the sample size increases.
On the other hand, the nonparametric conformal prediction region does not 
adapt well to data obtained from a Gamma regression model or data obtained 
from a linear regression model with a steep mean function (steepness 
is relative to the variability about the mean function) in small to moderate 
sample sizes.

The LS conformal prediction region obtains marginal validity 
\citep{lei2018distribution} but performs poorly when deviations about the 
estimated mean function are either not symmetric, not constant, or both.
When heterogeneity is present, the LS conformal prediction region 
exhibits undercoverage in regions where variability about the mean function 
is large and overcoverage in regions where variability about the mean function 
is small.  This conformal prediction region is very sensitive to model 
misspecification.  
The LSLS conformal prediction region also obtains 
marginal validity \citep[Section 5.2]{lei2018distribution} and it is far less 
sensitive to model misspecification than the LS conformal prediction region  
and it performs well under most model misspecification.  
However, the LSLW conformal prediction region is not 
appropriate when deviations about an estimated mean function are obviously not 
symmetric, as evidenced in Section~\ref{sec:gammaplots}. 
Note that all conformal prediction regions are obtained using the software 
defaults.  Therefore, observed deviations from guaranteed coverage properties 
can result from the precision (or lack thereof) of the default settings.

\begin{figure}[h!]
\begin{center}
\includegraphics[width=0.450\textwidth]{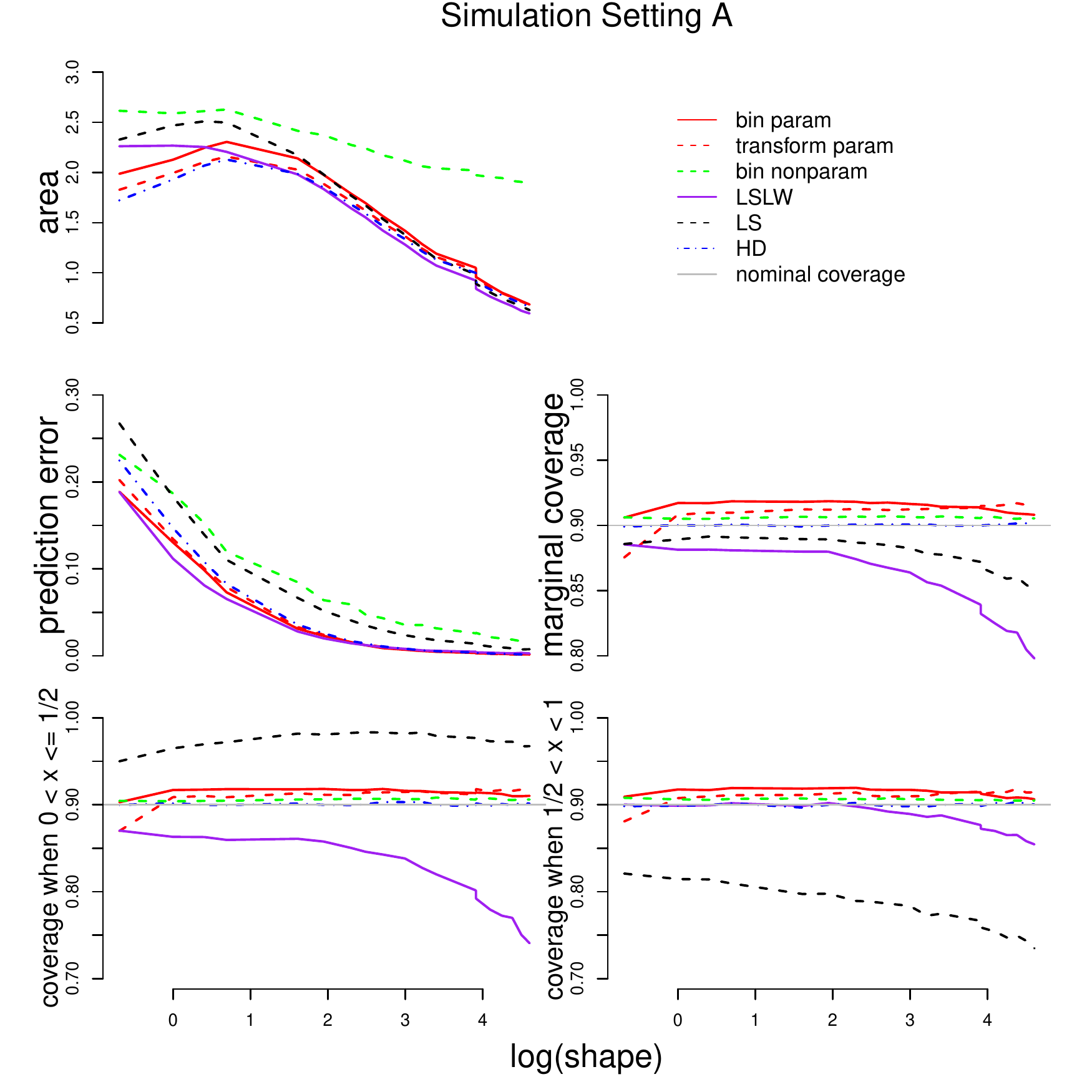}
\includegraphics[width=0.450\textwidth]{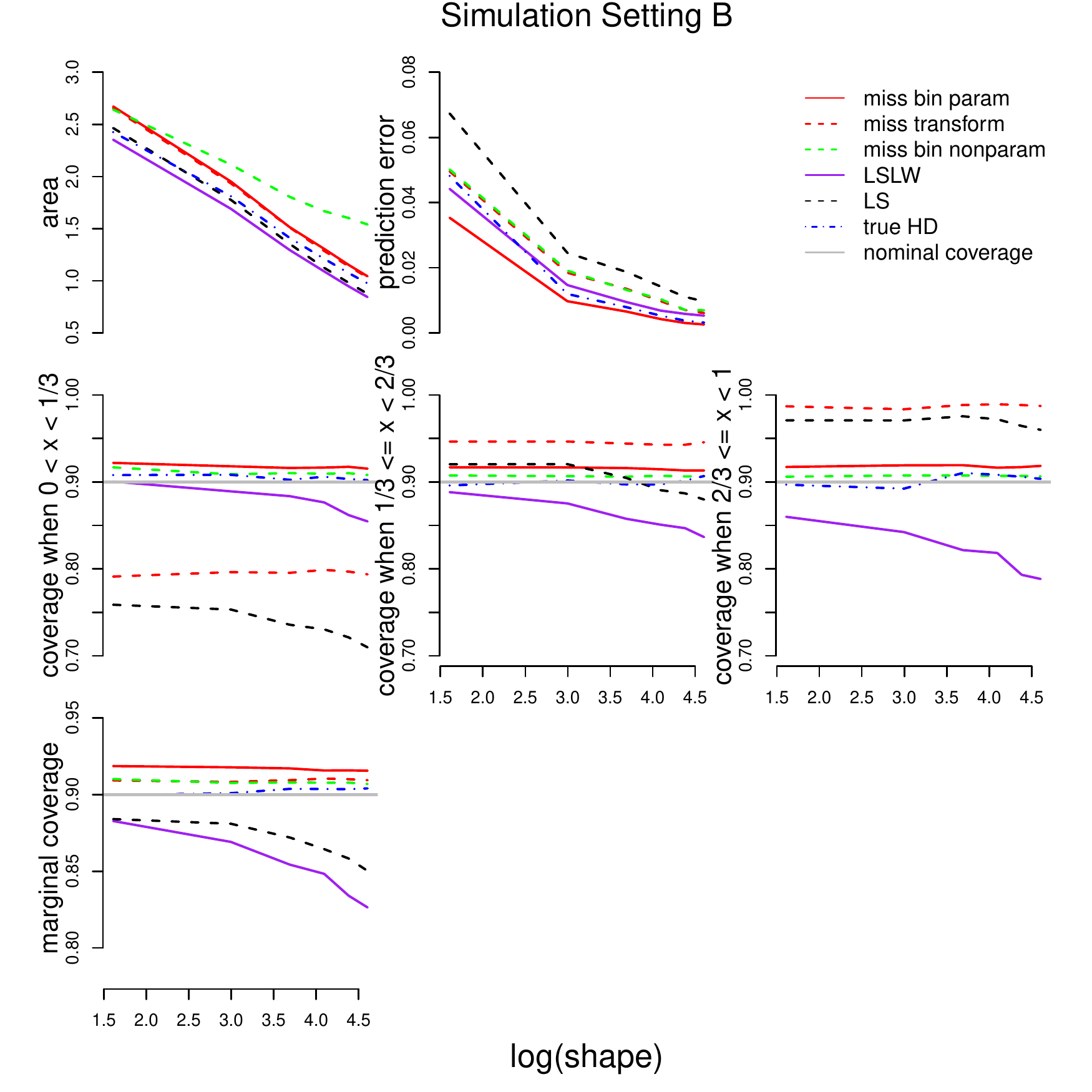}
\end{center}
\caption{Area, prediction error, and bin-wise coverage for 
    parametric, 
    nonparametric, 
    LS, 
    LSLW conformal prediction region, 
    and the highest density prediction region 
  for gamma GLM regression with $n=250$.   Simulation setting A is shown at 
  left, and setting B at right.
  The average of 50 Monte Carlo samples at each shape parameter value in 
  these simulation settings form the lines that are depicted in this figure.}
\label{Fig:diagnostics-250}
\end{figure}

\newpage
\begin{figure}[h!]
\begin{center}
\includegraphics[width=0.450\textwidth]{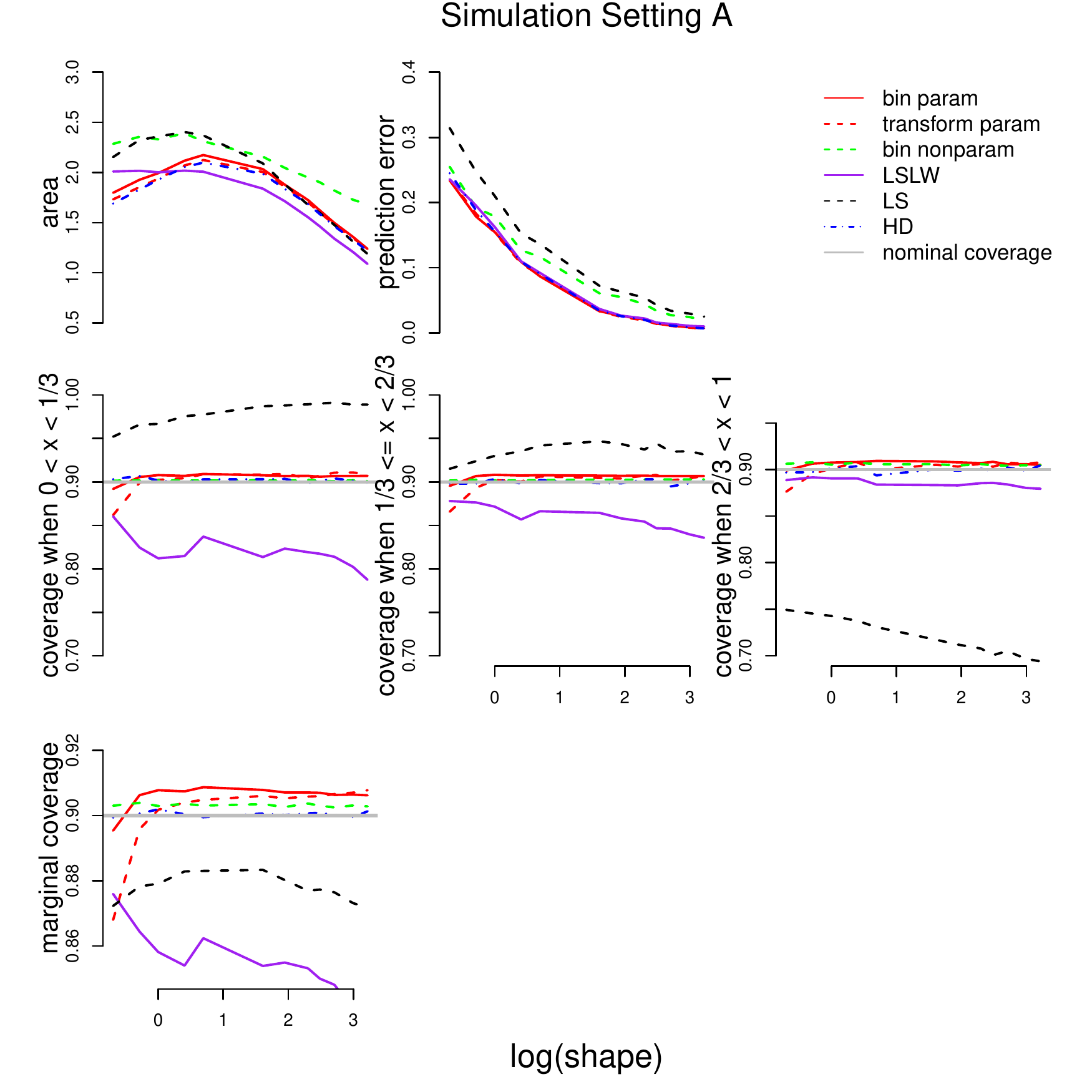}
\includegraphics[width=0.450\textwidth]{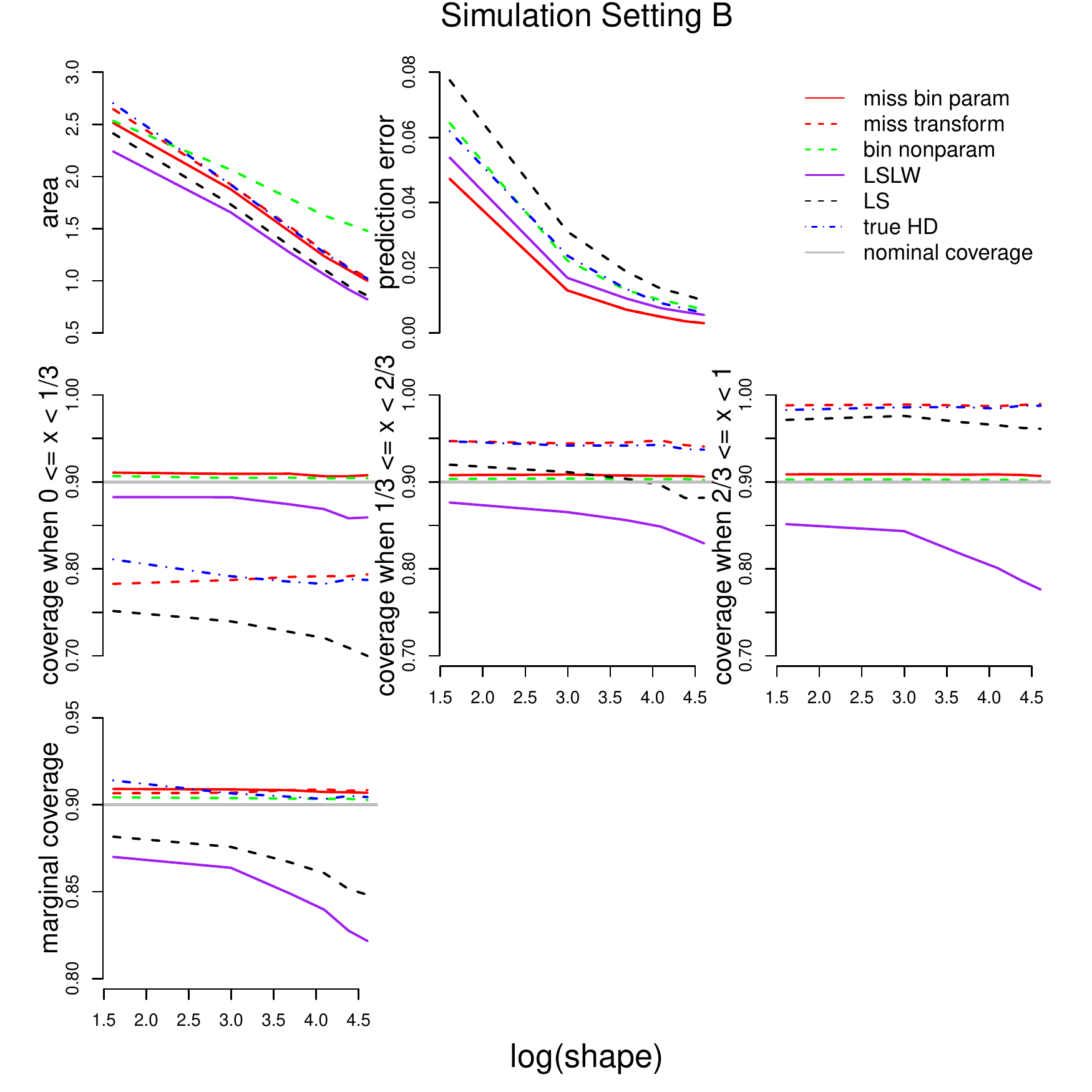}
\end{center}
\caption{Area, prediction error, and bin-wise coverage for 
    parametric, 
    nonparametric, 
    LS, 
    LSLW conformal prediction region, 
    and the highest density prediction region 
  for gamma GLM regression with $n=500$.   Simulation setting A is shown at 
  left, and setting B at right.
  The average of 50 Monte Carlo samples at each shape parameter value in 
  these simulation settings form the lines that are depicted in this figure.}
\label{Fig:diagnostics-500}
\end{figure}

\newpage
\begin{figure}[h!]
\begin{center}
\includegraphics[width=0.70\textwidth]{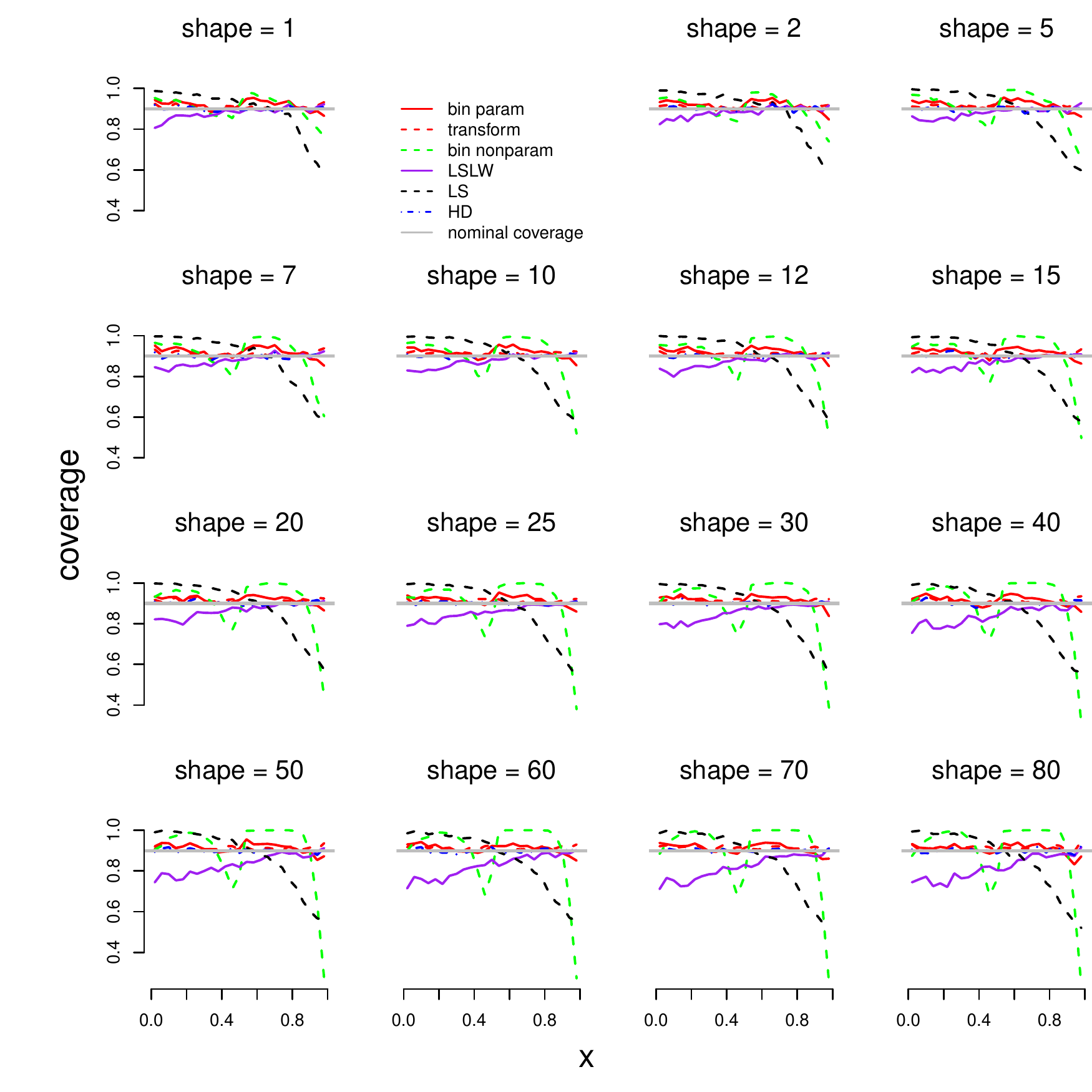}
\end{center}
\caption{Plot of the estimated coverage probabilities of prediction regions 
  across $x$ and shape parameter values when the model is correctly 
  specified, $n = 150$, and the number of bins is equal to $2$.
  The average of 250 Monte Carlo samples at each shape parameter value in 
  these simulation settings form the lines that are depicted in this figure.}
\label{Fig:gamma-inx-150}
\end{figure}

\newpage
\begin{figure}[h!]
\begin{center}
\includegraphics[width=0.70\textwidth]{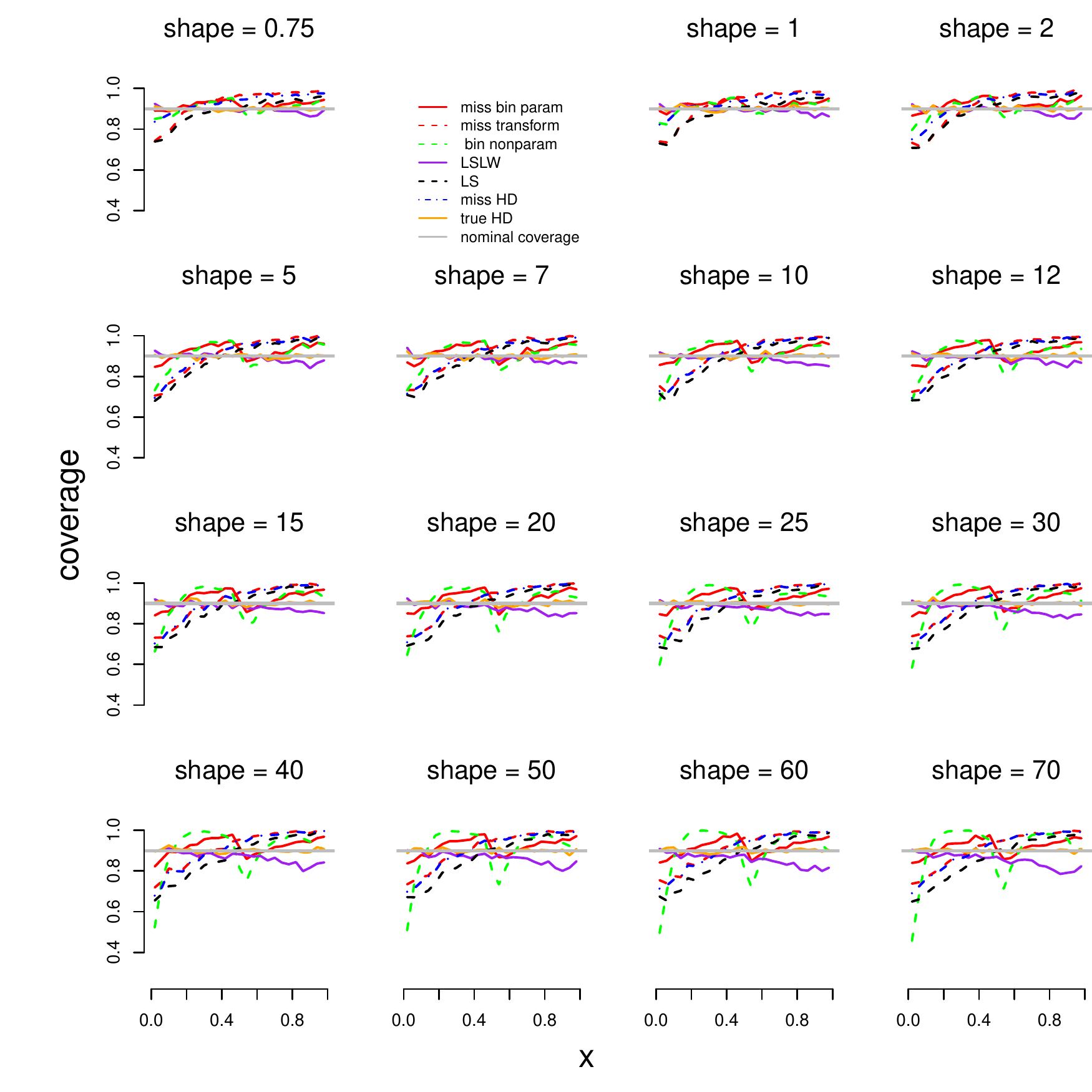}
\end{center}
\caption{Plot of the estimated coverage probabilities of prediction regions 
  across $x$ and shape parameter values when the model is misspecified, 
  $n = 150$, and the number of bins is equal to $2$.
  The average of 250 Monte Carlo samples at each shape parameter value in 
  these simulation settings form the lines that are depicted in this figure.}
\label{Fig:misspec-inx-150}
\end{figure}

\newpage
\section{Plots of conformal prediction regions}
\label{sec:plotsofregions}

\subsection{Plots when Gamma model is correctly specified}
\label{sec:gammaplots}

\begin{figure}[h!]
\begin{center}
\includegraphics{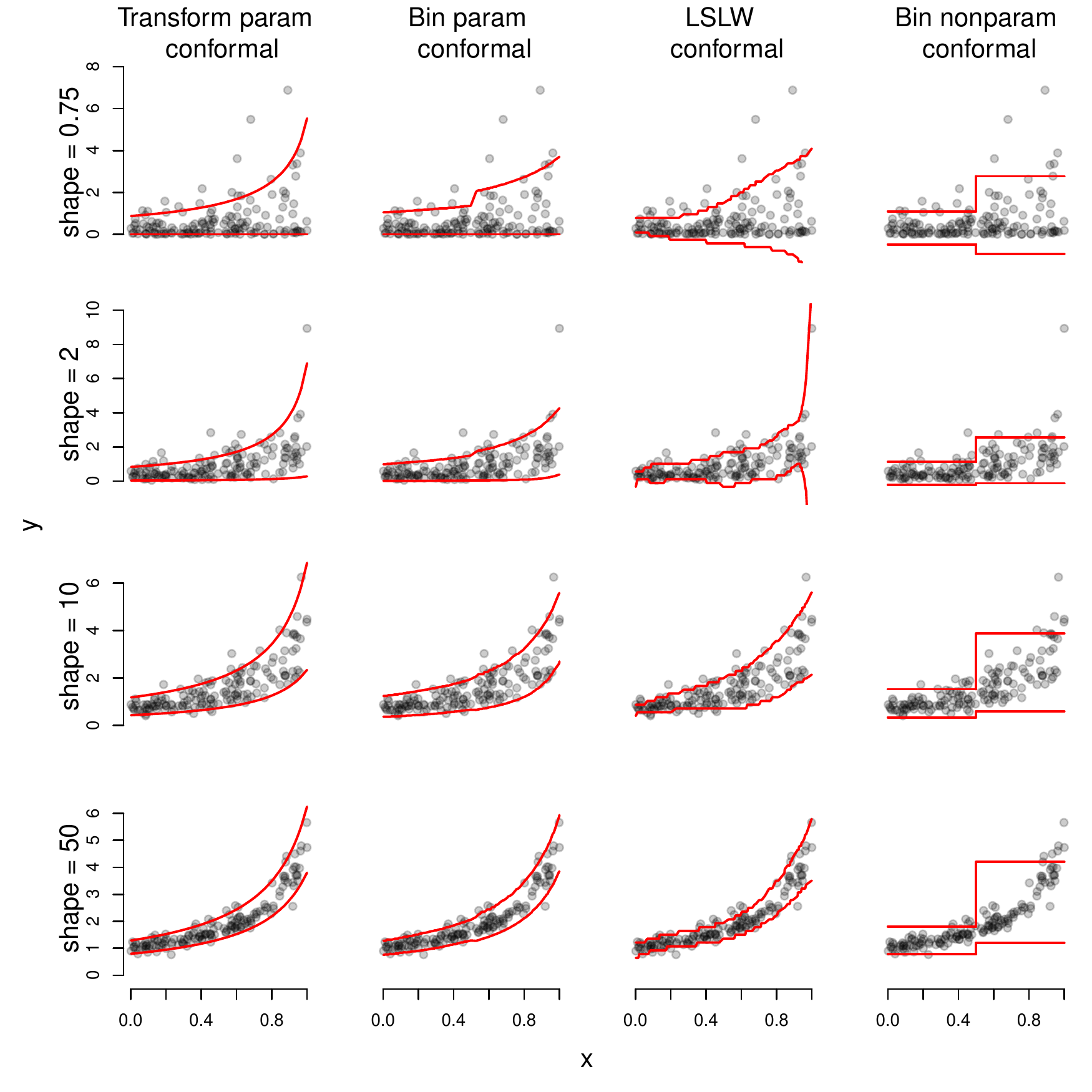}
\end{center}
\caption{Illustration of conformal prediction regions for Gamma 
  regression under simulation setting A with 
  $n=150$, $\alpha = 0.10$, and the number of bins equals 2.}
\label{conformal-plots-A-150}
\end{figure}

\newpage
\begin{figure}[h!]
\begin{center}
\includegraphics{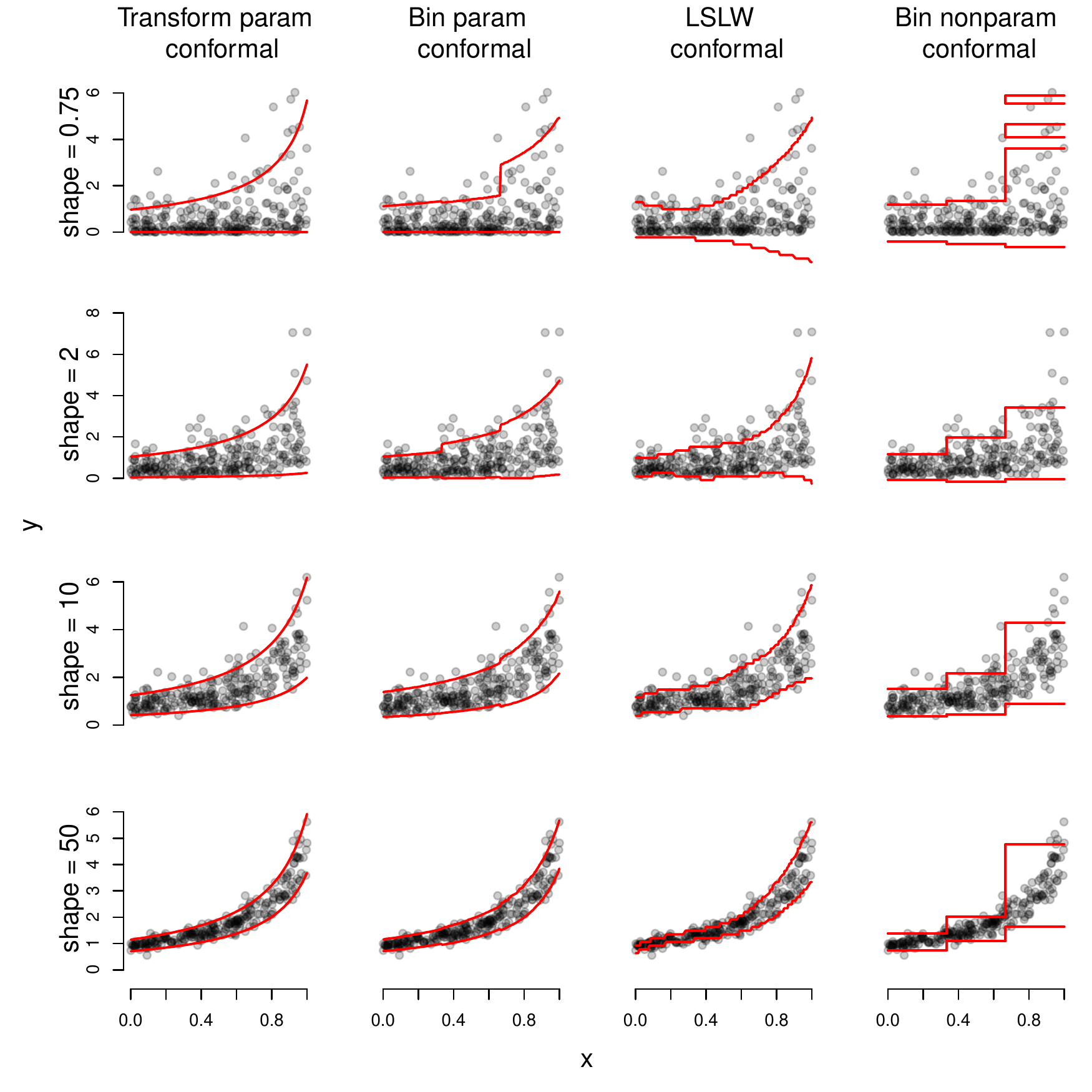}
\end{center}
\caption{Illustration of conformal prediction regions for Gamma 
  regression under simulation setting A with 
  $n=250$, $\alpha = 0.10$, and the number of bins equals 3.}
\label{conformal-plots-A-250}
\end{figure}

\newpage
\begin{figure}[h!]
\begin{center}
\includegraphics{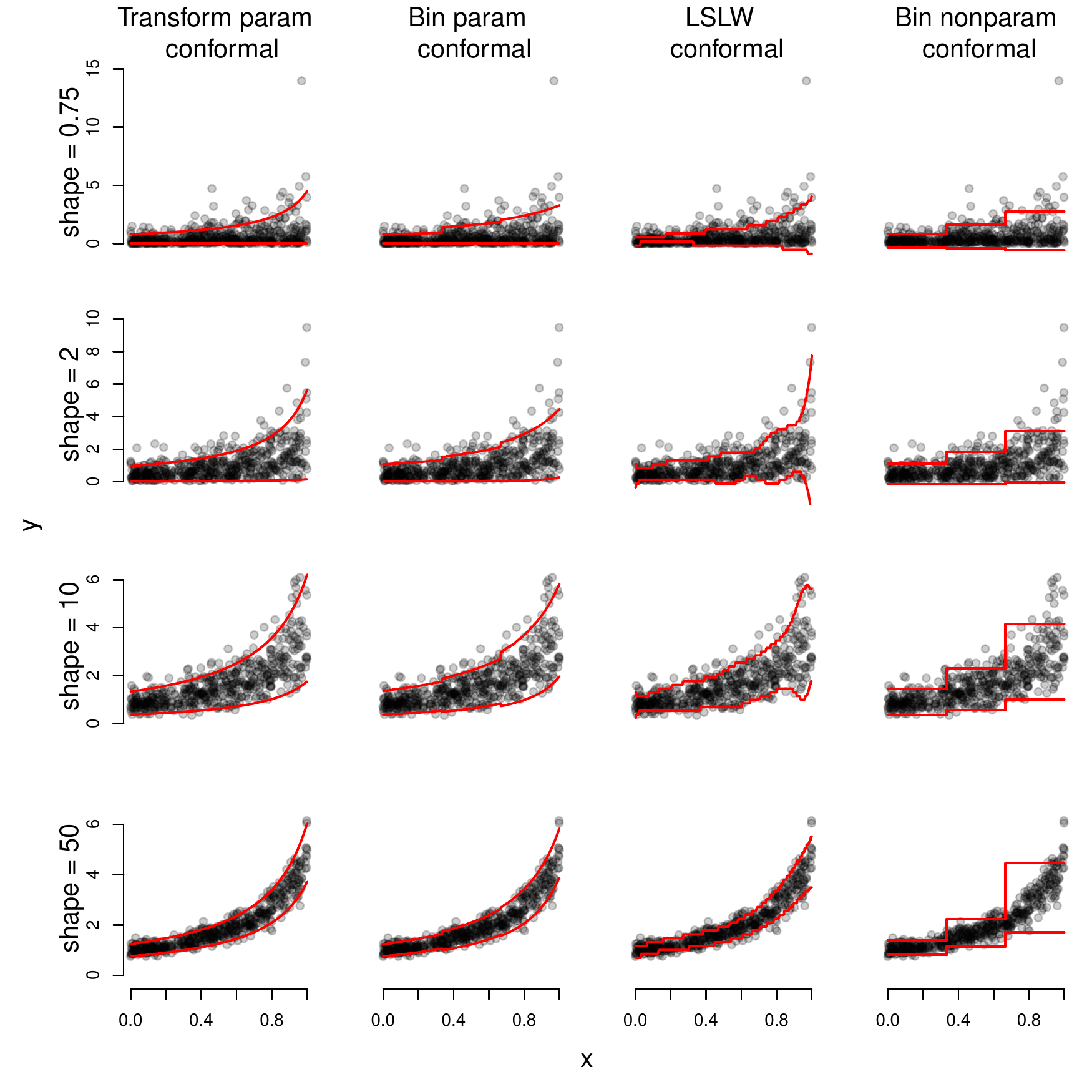}
\end{center}
\caption{Illustration of conformal prediction regions for Gamma 
  regression under simulation setting A with 
  $n=500$, $\alpha = 0.10$, and the number of bins equals 3.}
\label{conformal-plots-A-500}
\end{figure}

\newpage
\subsection{Plots when model is misspecified}

\begin{figure}[h!]
\begin{center}
\includegraphics{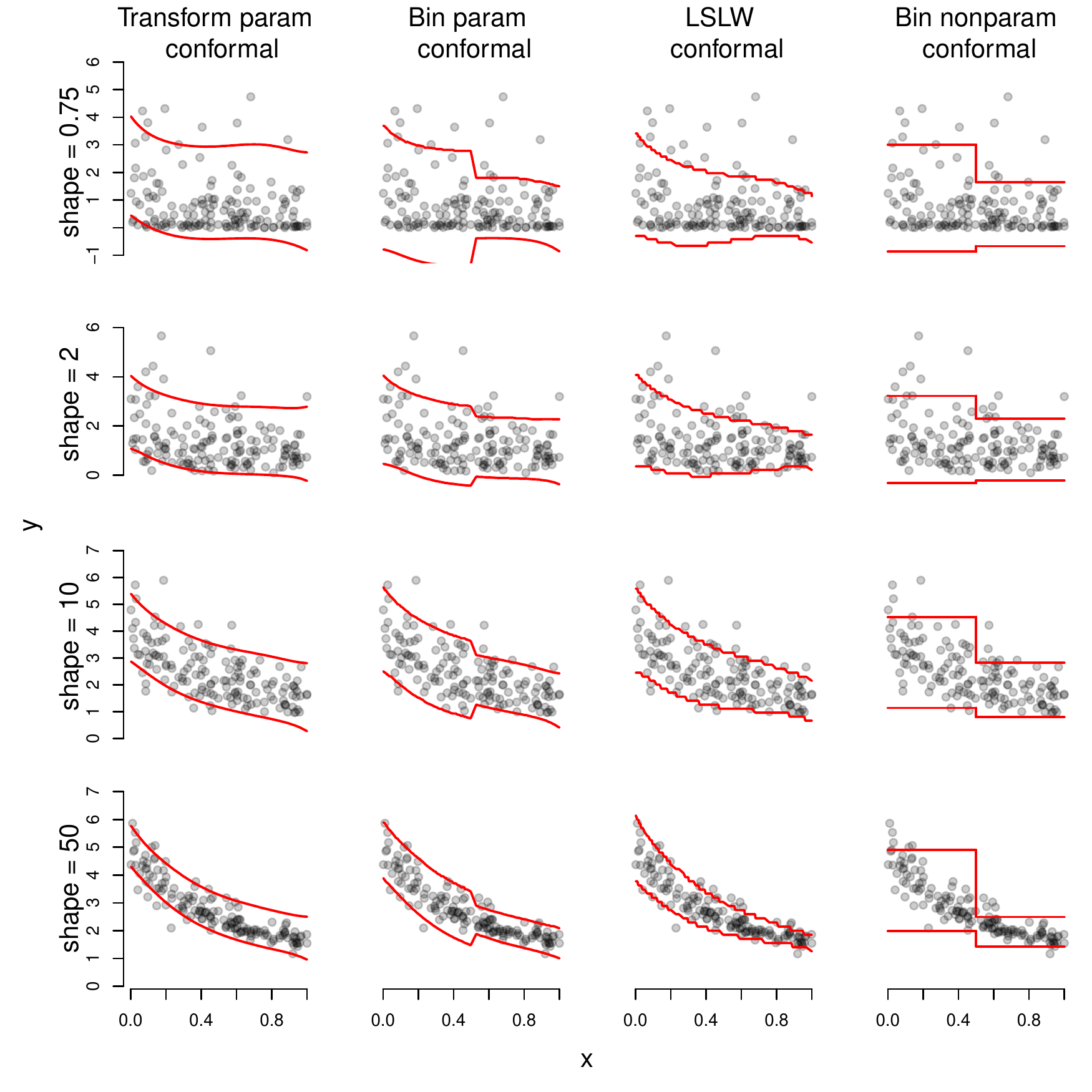}
\end{center}
\caption{Illustration of conformal prediction regions for Gamma 
  regression under simulation setting B with 
  $n=150$, $\alpha = 0.10$, and the number of bins equals 2.}
\label{conformal-plots-B-150}
\end{figure}

\newpage
\begin{figure}[h!]
\begin{center}
\includegraphics{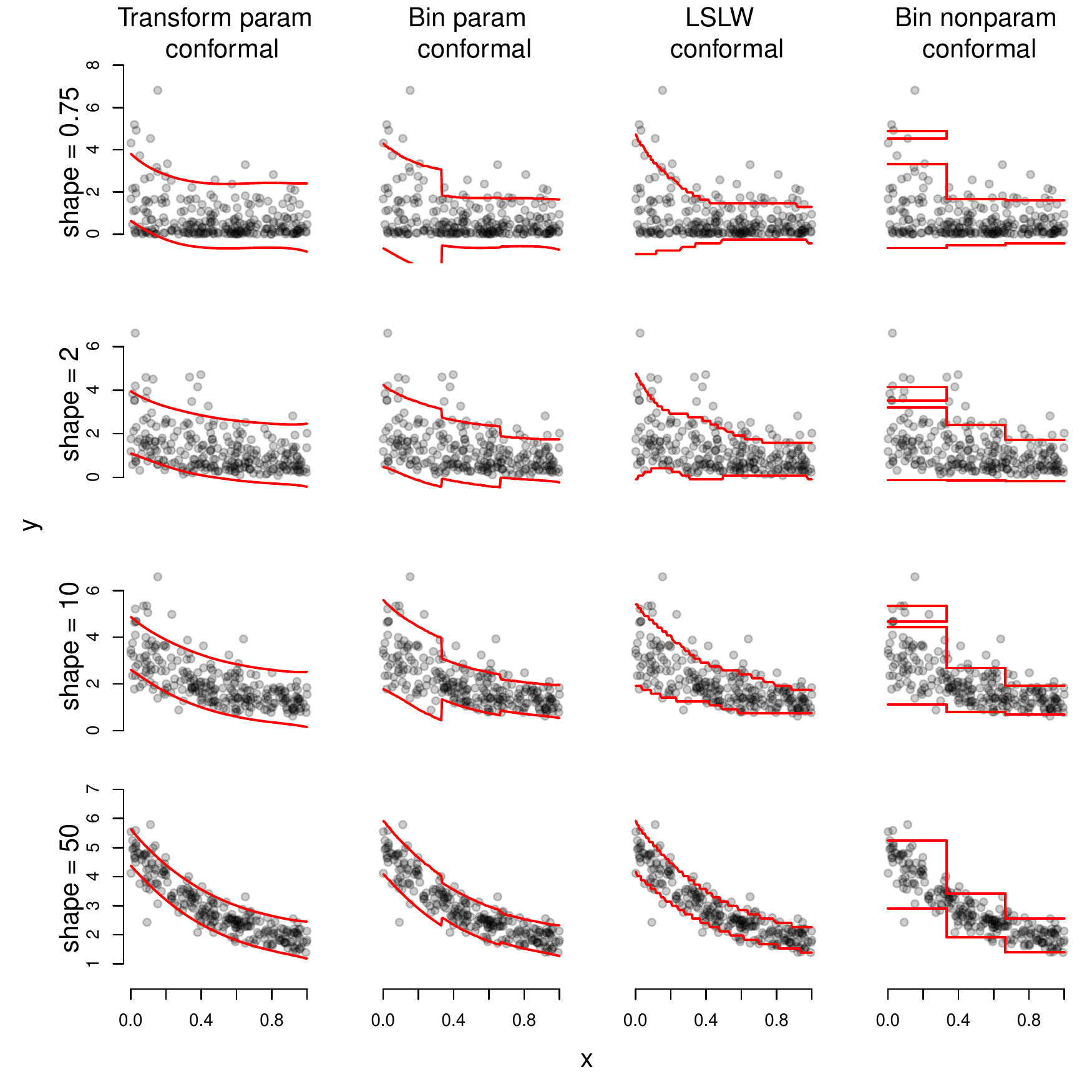}
\end{center}
\caption{Illustration of conformal prediction regions for Gamma 
  regression under simulation setting B with 
  $n=250$, $\alpha = 0.10$, and the number of bins equals 3.}
\label{conformal-plots-B-250}
\end{figure}

\newpage
\begin{figure}[h!]
\begin{center}
\includegraphics{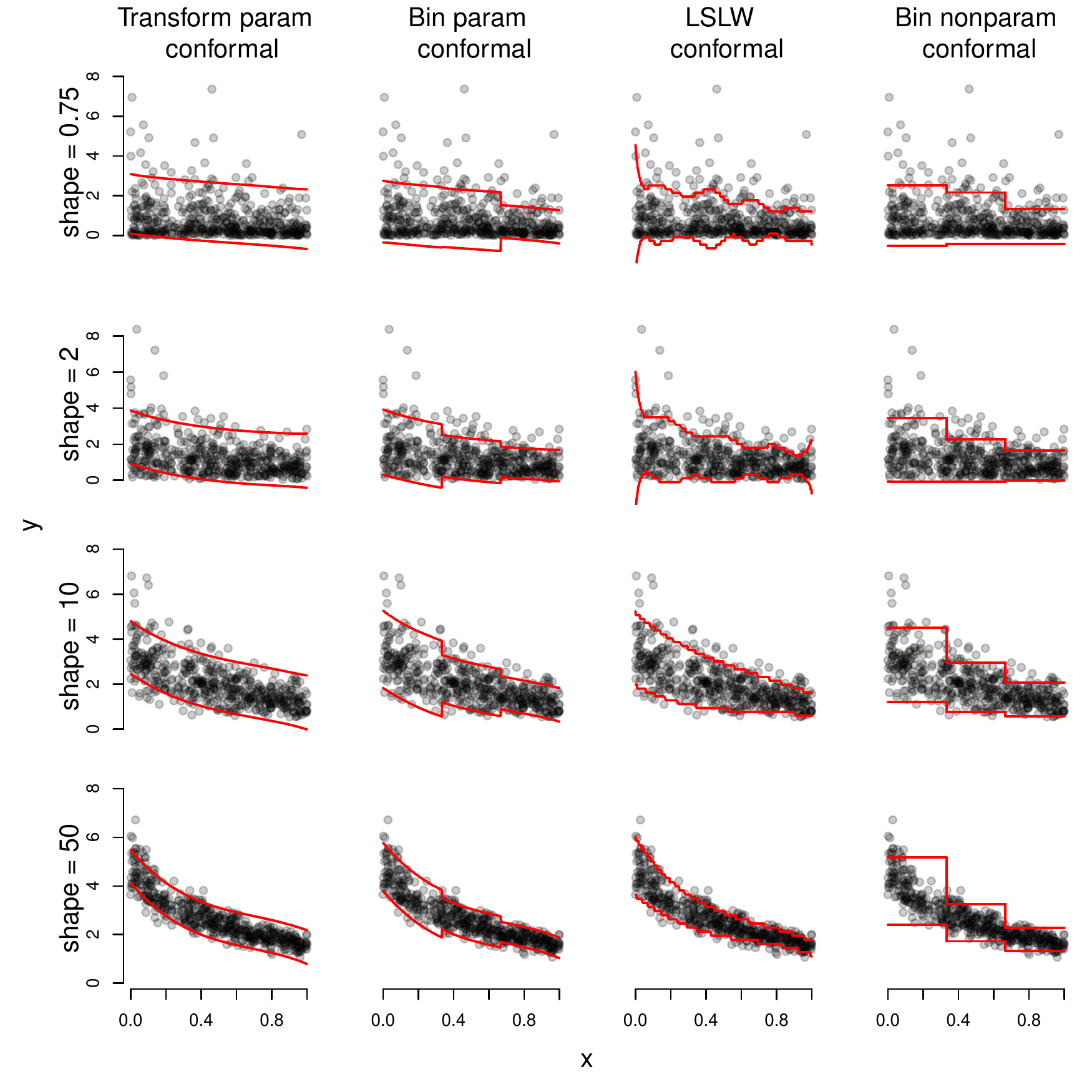}
\end{center}
\caption{Illustration of conformal prediction regions for Gamma 
  regression under simulation setting B with 
  $n=500$, $\alpha = 0.10$, and the number of bins equals 3.}
\label{conformal-plots-B-500}
\end{figure}

\newpage
\subsection{Plots when linear regression model is correctly specified}

\begin{figure}[h!]
\begin{center}
\includegraphics{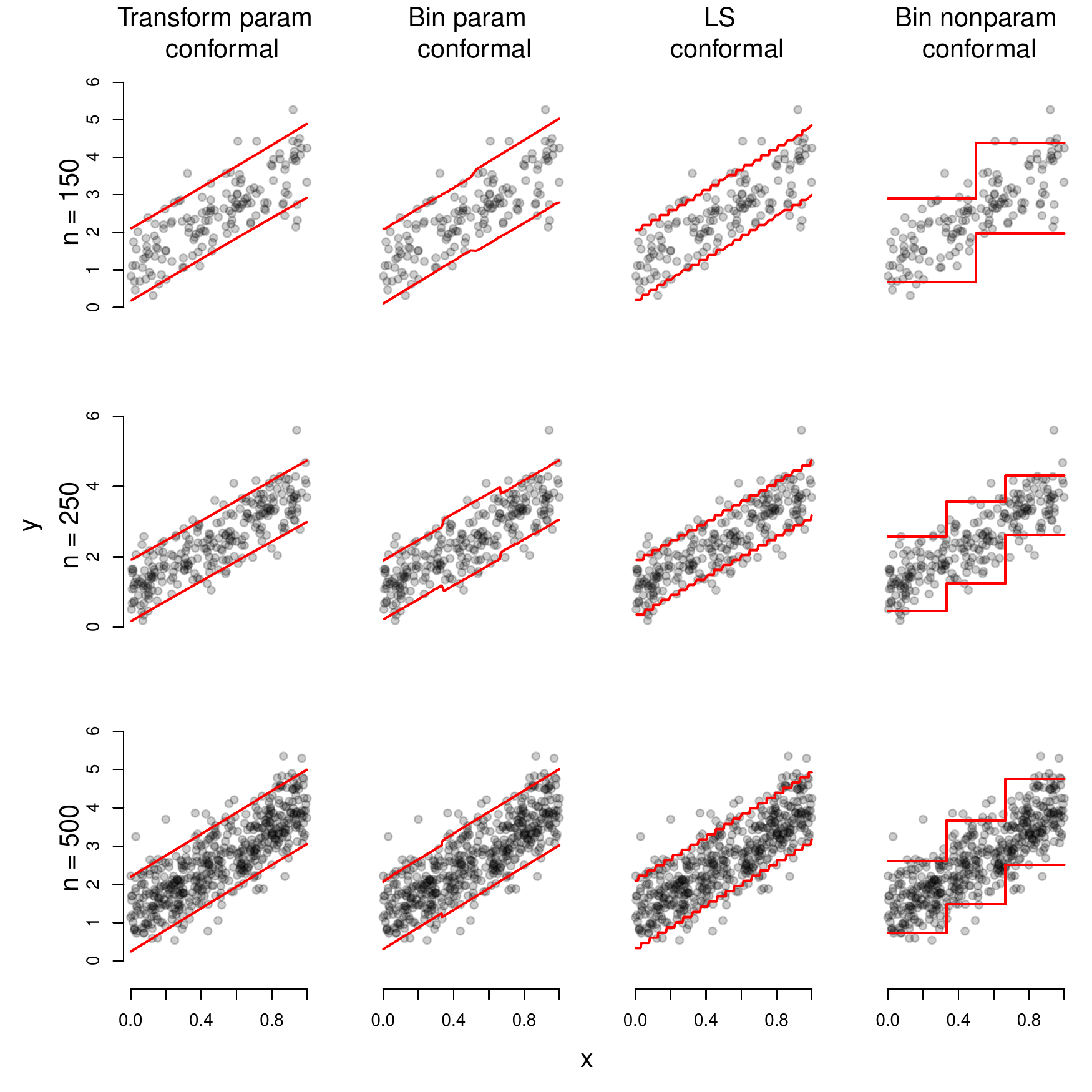}
\end{center}
\caption{Illustration of conformal prediction regions for linear  
  regression under simulation setting C with $\alpha = 0.10$.}
\label{conformal-plots-C}
\end{figure}

\newpage
\section{Example from README file in \citet{eck2018conformalR}}

In this section we illustrate conformal prediction regions through a gamma 
regression example with perfect model specification 
when the model is known, the model can be parameterized as in displayed 
equation 12 of the main text, and the model does not have additive 
symmetric errors.  We also compare conformal prediction regions to the oracle 
highest density region under the correct model.  This example is included in 
the README file of the corresponding R package \citet{eck2018conformalR}.  

In this example we set the sample size to $n = 500$ and we consider the 
gamma regression model with a single predictor with inverse link function 
(default in \texttt{glm} package).  The predictors were generated 
independently as $X_i \sim U(0,1)$, $i = 1$, $\ldots$, $n$.  We specified 
that $\beta = [0.25,\, 2]'$.  
The response variable was generated using \texttt{rgamma} with 
the shape parameter $\phi = 2$ and the rate as $2[1,\, X_i]'\beta$.  
Therefore we have iid data $(X_i,Y_i)$ from a continuous distribution where 
$Y_i|X_i$ is a GLM that can be parameterized as in displayed equation 12 of 
the main text.

\begin{figure}[ht!]
\begin{center}
\includegraphics[width=0.640\textwidth]{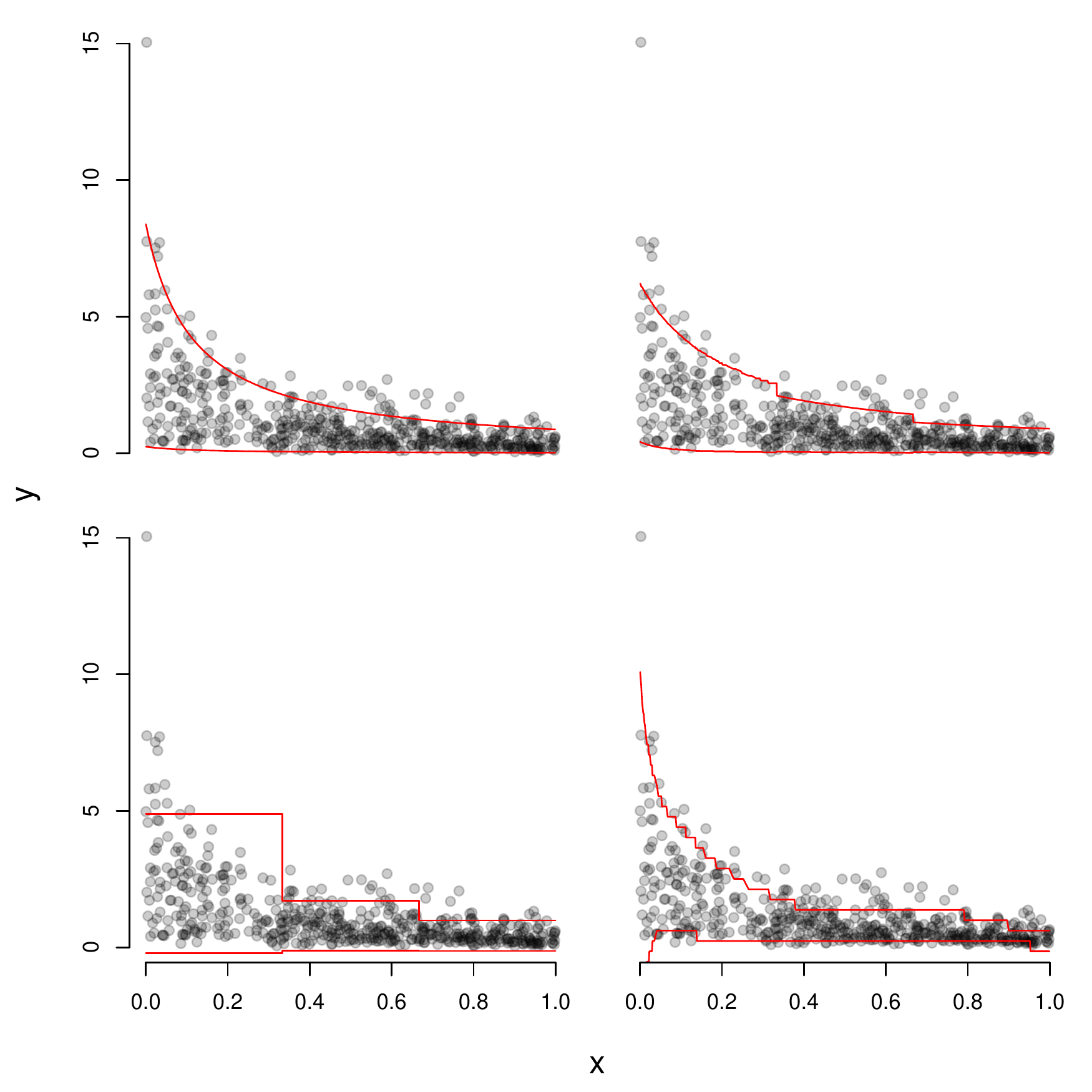}
\end{center}
\caption{
  Depiction of four prediction regions when $n = 500$.  
  The description of the gamma regression data generating process is outlined 
  in the README file of the corresponding R package 
  \citet{eck2018conformalR}.  Sim setting: shape = $2$, bins = $3$. 
  The regions are depicted as follows: 
    transformation based parametric conformal prediction (top-left panel),
    binned parametric conformal prediction region (top-right panel),
    binned nonparametric conformal prediction region (bottom-left panel), and 
    LSLW conformal prediction region (bottom-right panel).
}
\label{Fig:gammaexample}
\end{figure}

Figure~\ref{Fig:gammaexample} shows prediction regions for four of the five 
prediction regions considered.  
The top row depicts the transformation parametric conformal prediction 
region (left panel) and the binned parametric conformal prediction region 
(right panel).  The bottom row depicts the binned nonparametric 
conformal prediction region (left panel) and the LSLW conformal prediction 
region (right panel).  The bin width was specified as 1/3 for the binned 
conformal prediction regions.  We see that the binned parametric conformal 
prediction region is a close discretization of the transformation conformal 
prediction region, the nonparametric conformal prediction region is quite 
jagged and unnatural, and the LSLW conformal prediction region includes 
negative response values and is jagged. 
In Figure~\ref{Fig:gammaexample2} we see that the transformation based 
parametric conformal prediction region closely resembles the HD prediction 
region, and that the binned parametric conformal prediction region is a close 
descretization of the HD prediction region.

\begin{figure}[ht!]
\begin{center}
\includegraphics[width=0.640\textwidth]{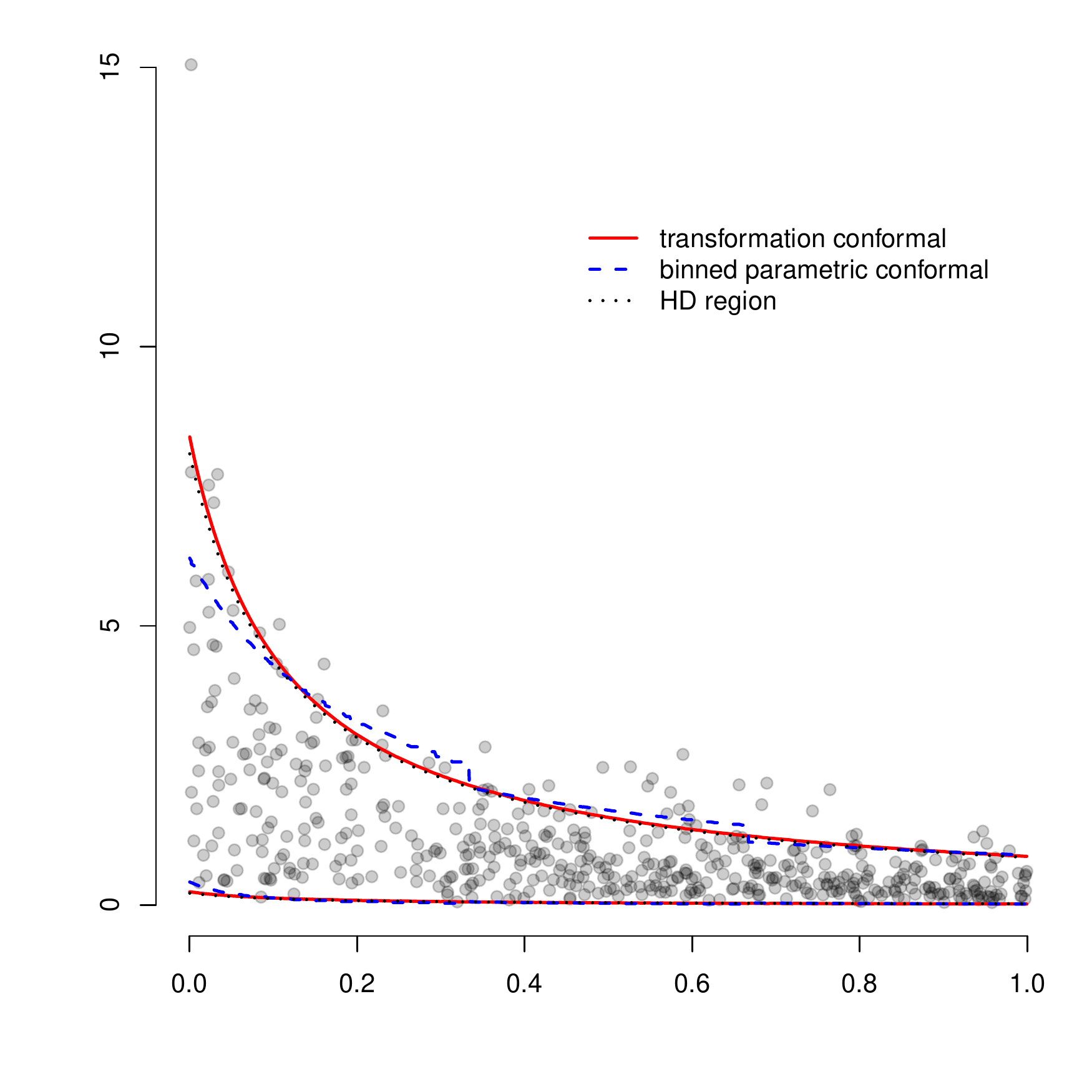}
\end{center}
\caption{
  Depiction of both parametric conformal prediction regions in 
  Figure~\ref{Fig:gammaexample} and the HD region.
}
\label{Fig:gammaexample2}
\end{figure}

All of the presented prediction regions exhibit close to finite-sample 
marginal validity and local validity with respect to binning.  However, 
the transformation conformal prediction region, LSLW conformal prediction 
region, and the HD prediction region do not exhibit finite-sample local 
validity in the second bin and the HD prediction region does not quite 
possess finite-sample marginal validity.  
The binned parametric conformal prediction region is smallest in size 
with an estimated area of 2.19.  The HD prediction region is a close second 
with an estimated area of 2.21.  The transformation conformal prediction 
region is a respectable third with an estimated area of 2.26.
LSLW conformal prediction region has an estimated area of 2.56 and 
The nonparametric conformal prediction region has an estimated area of 2.68. 
Under correct model specification, the parametric conformal prediction 
regions are similar in performance to that of the highest density prediction 
region. 

Performance of these prediction regions under the same model 
specification in this example is investigated further in a Monte Carlo 
simulation study.  The results of this Monte Carlo simulation study are 
consistent with those presented here.  Details are included in the 
reproducible technical report available at 
\url{https://github.com/DEck13/conformal.glm/tree/master/techreport}.

\bibliographystyle{plainnat}
\bibliography{conformalsources}

\begin{thebibliography}{56}
\providecommand{\natexlab}[1]{#1}
\providecommand{\url}[1]{\texttt{#1}}
\expandafter\ifx\csname urlstyle\endcsname\relax
  \providecommand{\doi}[1]{doi: #1}\else
  \providecommand{\doi}{doi: \begingroup \urlstyle{rm}\Url}\fi

\bibitem[{American Diabetes Association}(2010)]{american2010diagnosis}
{American Diabetes Association}.
\newblock Diagnosis and classification of diabetes mellitus.
\newblock \emph{Diabetes Care}, 33\penalty0 (Supplement 1):\penalty0 S62--S69,
  2010.

\bibitem[Balasubramanian et~al.(2014)Balasubramanian, Ho, and
  Vovk]{balasubramanian2014conformal}
Vineeth Balasubramanian, Shen-Shyang Ho, and Vladimir Vovk.
\newblock \emph{Conformal Prediction for Reliable Machine Learning: Theory,
  Adaptations and Applications}.
\newblock Morgan Kaufmann, Boston, 2014.

\bibitem[Bobkov and G{\"o}tze(1999)]{bobkov1999transport}
Sergej~G Bobkov and Friedrich G{\"o}tze.
\newblock Exponential integrability and transportation cost related to
  logarithmic {Sobolev} inequalities.
\newblock \emph{Journal of Functional Analysis}, 163\penalty0 (1):\penalty0
  1--28, 1999.

\bibitem[Bosc et~al.(2019)Bosc, Atkinson, Felix, Gaulton, Hersey, and
  Leach]{bosc2019large}
Nicolas Bosc, Francis Atkinson, Eloy Felix, Anna Gaulton, Anne Hersey, and
  Andrew~R Leach.
\newblock Large scale comparison of {QSAR} and conformal prediction methods and
  their applications in drug discovery.
\newblock \emph{Journal of Cheminformatics}, 11\penalty0 (1):\penalty0 4, 2019.

\bibitem[Burnaev and Vovk(2014)]{vovk2014ridge}
Evgeny Burnaev and Vladimir Vovk.
\newblock Efficiency of conformalized ridge regression.
\newblock In \emph{Conference on Learning Theory}, pages 605--622, 2014.

\bibitem[Chernozhukov et~al.(2018{\natexlab{a}})Chernozhukov, W{\"u}thrich, and
  Zhu]{chernozhukov2018causal}
Victor Chernozhukov, Kasper W{\"u}thrich, and Yinchu Zhu.
\newblock An exact and robust conformal inference method for counterfactual and
  synthetic controls.
\newblock \emph{preprint}, 2018{\natexlab{a}}.
\newblock URL \url{arXiv:1712.09089}.

\bibitem[Chernozhukov et~al.(2018{\natexlab{b}})Chernozhukov, W{\"u}thrich, and
  Zhu]{chernozhukov2018timeseries}
Victor Chernozhukov, Kasper W{\"u}thrich, and Yinchu Zhu.
\newblock Exact and robust conformal inference methods for predictive machine
  learning with dependent data.
\newblock \emph{preprint}, 2018{\natexlab{b}}.
\newblock URL \url{arXiv:1802.06300}.

\bibitem[Chernozhukov et~al.(2019)Chernozhukov, W{\"u}thrich, and
  Zhu]{chernozhukov2019distributional}
Victor Chernozhukov, Kasper W{\"u}thrich, and Yinchu Zhu.
\newblock Distributional conformal prediction.
\newblock \emph{preprint}, 2019.
\newblock URL \url{arXiv:1909.07889}.

\bibitem[Cho et~al.(2018)Cho, Shaw, Karuranga, Huang, da~Rocha~Fernandes,
  Ohlrogge, and Malanda]{cho2018idf}
NH~Cho, JE~Shaw, S~Karuranga, Y~Huang, JD~da~Rocha~Fernandes, AW~Ohlrogge, and
  B~Malanda.
\newblock {IDF Diabetes Atlas}: Global estimates of diabetes prevalence for
  2017 and projections for 2045.
\newblock \emph{Diabetes Research and Clinical Practice}, 138:\penalty0
  271--281, 2018.

\bibitem[Ciollaro et~al.(2018)Ciollaro, Cisewski, Freeman, Genovese, Lei,
  O'Connell, and Wasserman]{ciollaro2018functional}
Mattia Ciollaro, Jessi Cisewski, Peter~E. Freeman, Christopher~R. Genovese,
  Jing Lei, Ross O'Connell, and Larry Wasserman.
\newblock Functional regression for quasar spectra.
\newblock \emph{preprint}, 2018.
\newblock URL \url{arXiv:1404.3168}.

\bibitem[Cort{\'e}s-Ciriano and Bender(2018)]{cortes2018deep}
Isidro Cort{\'e}s-Ciriano and Andreas Bender.
\newblock Deep confidence: A computationally efficient framework for
  calculating reliable prediction errors for deep neural networks.
\newblock \emph{Journal of Chemical Information and Modeling}, 2018.

\bibitem[Cort{\'e}s-Ciriano and Bender(2019)]{cortes2019concepts}
Isidro Cort{\'e}s-Ciriano and Andreas Bender.
\newblock Concepts and applications of conformal prediction in computational
  drug discovery.
\newblock \emph{arXiv:1908.03569}, 2019.

\bibitem[Cs\"{o}rg\H{o} and R\'{e}v\'{e}sz(1978)]{csorgo1978strong}
Mikl\'{o}s Cs\"{o}rg\H{o} and P\'{a}l R\'{e}v\'{e}sz.
\newblock Strong approximations of the quantile process.
\newblock \emph{The Annals of Statistics}, pages 882--894, 1978.

\bibitem[Devetyarov et~al.(2012)Devetyarov, Nouretdinov, Burford, Camuzeaux,
  Gentry-Maharaj, Tiss, Smith, Luo, Chervonenkis, Hallett, Vovk, Waterfield,
  Cramer, Timms, Sinclair, Menon, Jacobs, and Gammerman]{devetyarov2012cancer}
Dmitry Devetyarov, Ilia Nouretdinov, Brian Burford, Stephane Camuzeaux,
  Aleksandra Gentry-Maharaj, Ali Tiss, Celia Smith, Zhiyuan Luo, Alexey
  Chervonenkis, Rachel Hallett, Volodya Vovk, Mike Waterfield, Rainer Cramer,
  John~F. Timms, John Sinclair, Usha Menon, Ian Jacobs, and Alex Gammerman.
\newblock Conformal predictors in early diagnostics of ovarian and breast
  cancers.
\newblock \emph{Progress in Artificial Intelligence}, 1\penalty0 (3):\penalty0
  245--257, 2012.

\bibitem[Djellout et~al.(2004)Djellout, Guillin, and Wu]{djellout2004transport}
Hacene Djellout, Arnaud Guillin, and Liming Wu.
\newblock Transportation cost-information inequalities and applications to
  random dynamical systems and diffusions.
\newblock \emph{The Annals of Probability}, 32\penalty0 (3B):\penalty0
  2702--2732, 2004.

\bibitem[Dunn and Wasserman(2018)]{dunn2018distribution}
Robin Dunn and Larry Wasserman.
\newblock Distribution-free prediction sets with random effects.
\newblock \emph{preprint}, 2018.
\newblock URL \url{arXiv:1809.07441}.

\bibitem[Eck(2018)]{eck2018conformalR}
Daniel~J. Eck.
\newblock \emph{conformal.glm: Conformal Prediction for Generalized Linear
  Regression Models}, 2018.
\newblock \url{https://github.com/DEck13/conformal.glm}.

\bibitem[Eklund et~al.(2015)Eklund, Norinder, Boyer, and
  Carlsson]{eklund2015drug}
Martin Eklund, Ulf Norinder, Scott Boyer, and Lars Carlsson.
\newblock The application of conformal prediction to the drug discovery
  process.
\newblock \emph{Annals of Mathematics and Artificial Intelligence}, 74\penalty0
  (1-2):\penalty0 117--132, 2015.

\bibitem[Faraway(2016)]{faraway2016R}
Julian Faraway.
\newblock \emph{faraway: Functions and Datasets for Books by Julian Faraway},
  2016.
\newblock \url{https://cran.r-project.org/web/packages/faraway/index.html}.

\bibitem[Ferguson(1996)]{ferguson1996course}
Thomas~S. Ferguson.
\newblock A course in large sample theory, 1996.

\bibitem[Gammerman and Vovk(2007)]{gammerman2007hedging}
Alexander Gammerman and Vladimir Vovk.
\newblock {Hedging Predictions in Machine Learning: The Second Computer Journal
  Lecture }.
\newblock \emph{The Computer Journal}, 50\penalty0 (2):\penalty0 151--163, 02
  2007.

\bibitem[Heikes et~al.(2008)Heikes, Eddy, Arondekar, and
  Schlessinger]{heikes2008diabetes}
Kenneth~E Heikes, David~M Eddy, Bhakti Arondekar, and Leonard Schlessinger.
\newblock Diabetes risk calculator: a simple tool for detecting undiagnosed
  diabetes and pre-diabetes.
\newblock \emph{Diabetes care}, 31\penalty0 (5):\penalty0 1040--1045, 2008.

\bibitem[Izbicki et~al.(2019)Izbicki, Shimizu, and
  Stern]{izbicki2019distribution}
Rafael Izbicki, Gilson~Y. Shimizu, and Rafael~B. Stern.
\newblock Distribution-free conditional predictive bands using density
  estimators.
\newblock \emph{preprint}, 2019.
\newblock URL \url{arXiv:1910.05575}.

\bibitem[Ji et~al.(2018)Ji, Svensson, Zoufir, Bender, and
  Valencia]{ji2018emoltox}
Changge Ji, Fredrik Svensson, Azedine Zoufir, Andreas Bender, and Alfonso
  Valencia.
\newblock {eMolTox}: prediction of molecular toxicity with confidence.
\newblock \emph{Bioinformatics}, 1:\penalty0 2, 2018.

\bibitem[Johansson et~al.(2018)Johansson, Linusson, L{\"o}fstr{\"o}m, and
  Bostr{\"o}m]{johansson2018interpretable}
Ulf Johansson, Henrik Linusson, Tuve L{\"o}fstr{\"o}m, and Henrik Bostr{\"o}m.
\newblock Interpretable regression trees using conformal prediction.
\newblock \emph{Expert Systems With Applications}, 97:\penalty0 394--404, 2018.

\bibitem[Koml{\'o}s et~al.(1975)Koml{\'o}s, Major, and
  Tusn{\'a}dy]{komlos1975approximation}
J{\'a}nos Koml{\'o}s, P{\'e}ter Major, and G{\'a}bor Tusn{\'a}dy.
\newblock An approximation of partial sums of independent rv'-s, and the sample
  df. i.
\newblock \emph{Zeitschrift f{\"u}r Wahrscheinlichkeitstheorie und verwandte
  Gebiete}, 32\penalty0 (1-2):\penalty0 111--131, 1975.

\bibitem[Lambrou et~al.(2009)Lambrou, Papadopoulos, and
  Gammerman]{lambrou2009diagnosis}
Antonis Lambrou, Harris Papadopoulos, and Alex Gammerman.
\newblock Evolutionary conformal prediction for breast cancer diagnosis.
\newblock \emph{Proceedings of 9th International Conference on Information
  Technology and Applications in Biomedicine}, pages 1--4, 2009.

\bibitem[Ledoux(1999)]{ledoux1999logsobolev}
Michel Ledoux.
\newblock Concentration of measure and logarithmic sobolev inequalities.
\newblock \emph{Lecture Notes in Math}, 1709:\penalty0 120--216, 1999.

\bibitem[Ledoux(2001)]{ledoux2001concentration}
Michel Ledoux.
\newblock \emph{The Concentration of Measure Phenomenon}.
\newblock American Mathematical Society, 2001.

\bibitem[Lei and Wasserman(2014)]{lei2014distribution}
Jing Lei and Larry Wasserman.
\newblock Distribution-free prediction bands for nonparametric regression.
\newblock \emph{Journal of the Royal Statistical Society series B}, 76,
  1:\penalty0 71--96, 2014.

\bibitem[Lei et~al.(2013)Lei, Robins, and Wasserman]{lei2013distribution}
Jing Lei, James Robins, and Larry Wasserman.
\newblock Distribution-free prediction sets.
\newblock \emph{Journal of the American Statistical Association}, 108:\penalty0
  278--287, 2013.

\bibitem[Lei et~al.(2015)Lei, Rinaldo, and Wasserman]{lei2015conformal}
Jing Lei, Alessandro Rinaldo, and Larry Wasserman.
\newblock A conformal prediction approach to explore functional data.
\newblock \emph{Annals of Mathematics and Artificial Intelligence},
  74:\penalty0 29--43, 2015.

\bibitem[Lei et~al.(2018)Lei, G’Sell, Rinaldo, Tibshirani, and
  Wasserman]{lei2018distribution}
Jing Lei, Max G’Sell, Alessandro Rinaldo, Ryan~J Tibshirani, and Larry
  Wasserman.
\newblock Distribution-free predictive inference for regression.
\newblock \emph{Journal of the American Statistical Association}, 113\penalty0
  (523):\penalty0 1094--1111, 2018.

\bibitem[McCullagh and Nelder(1989)]{mccullagh1989generalized}
Peter McCullagh and John~A Nelder.
\newblock \emph{Generalized Linear Models}.
\newblock CRC press, 1989.

\bibitem[Meredith and Kruschke(2018)]{meredith2018HDIntervalR}
Mike Meredith and John Kruschke.
\newblock \emph{HDInterval: Highest (Posterior) Density Intervals}, 2018.
\newblock \url{https://cran.r-project.org/web/packages/HDInterval/}.

\bibitem[Miao(2010)]{miao2010concentration}
Yu~Miao.
\newblock Concentration inequality of maximum likelihood estimator.
\newblock \emph{Applied Mathematics Letters}, 23\penalty0 (10):\penalty0
  1305--1309, 2010.

\bibitem[Norinder et~al.(2018{\natexlab{a}})Norinder, Ahlberg, and
  Carlsson]{norinder2018predictingb}
Ulf Norinder, Ernst Ahlberg, and Lars Carlsson.
\newblock Predicting {Ames} mutagenicity using conformal prediction in the
  {Ames/QSAR} {International Challenge Project}.
\newblock \emph{Mutagenesis}, 2018{\natexlab{a}}.

\bibitem[Norinder et~al.(2018{\natexlab{b}})Norinder, Myatt, and
  Ahlberg]{norinder2018predictinga}
Ulf Norinder, Glenn Myatt, and Ernst Ahlberg.
\newblock Predicting aromatic amine mutagenicity with confidence: A case study
  using conformal prediction.
\newblock \emph{Biomolecules}, 8\penalty0 (3):\penalty0 85, 2018{\natexlab{b}}.

\bibitem[Papadopoulos et~al.(2011)Papadopoulos, Vovk, and
  Gammerman]{papa2011neural}
Harris Papadopoulos, Vladimir Vovk, and Alex Gammerman.
\newblock Inductive conformal prediction: Theory and application to neural
  networks.
\newblock \emph{Journal of Artificial Intelligence Research}, 40:\penalty0
  815--840, 2011.

\bibitem[{R Core Team}(2019)]{R-core}
{R Core Team}.
\newblock \emph{R: A Language and Environment for Statistical Computing}.
\newblock R Foundation for Statistical Computing, Vienna, Austria, 2019.
\newblock \url{https://www.R-project.org/}.

\bibitem[Romano et~al.(2019{\natexlab{a}})Romano, Barber, Sabatti, and
  Cand\'{e}s]{romano2019malice}
Yaniv Romano, Rina~F. Barber, Chiara Sabatti, and Emmanuel~J. Cand\'{e}s.
\newblock With malice towards none: Assessing uncertainty via equalized
  coverage.
\newblock \emph{preprint}, 2019{\natexlab{a}}.
\newblock URL \url{arXiv:1908.05428}.

\bibitem[Romano et~al.(2019{\natexlab{b}})Romano, Patterson, and
  Cand\'{e}s]{romano2019conformalized}
Yaniv Romano, Evan Patterson, and Emmanuel~J. Cand\'{e}s.
\newblock Conformalized quantile regression.
\newblock \emph{preprint}, 2019{\natexlab{b}}.
\newblock URL \url{arXiv:1905.03222}.

\bibitem[Sesia and Cand\'{e}s(2019)]{sesia2019conformal}
Matteo Sesia and Emmanuel~J. Cand\'{e}s.
\newblock A comparison of some conformal quantile regression methods.
\newblock \emph{preprint}, 2019.
\newblock URL \url{arXiv:1909.05433}.

\bibitem[Shafer and Vovk(2008)]{shafer2008tutorial}
Glenn Shafer and Vladimir Vovk.
\newblock A tutorial on conformal prediction.
\newblock \emph{Journal of Machine Learning Research}, 9:\penalty0 371--421,
  2008.

\bibitem[Svensson et~al.(2018{\natexlab{a}})Svensson, Afzal, Norinder, and
  Bender]{svensson2018maximizing}
Fredrik Svensson, Avid~M Afzal, Ulf Norinder, and Andreas Bender.
\newblock Maximizing gain in high-throughput screening using conformal
  prediction.
\newblock \emph{Journal of Cheminformatics}, 10\penalty0 (1):\penalty0 7,
  2018{\natexlab{a}}.

\bibitem[Svensson et~al.(2018{\natexlab{b}})Svensson, Aniceto, Norinder,
  Cortes-Ciriano, Spjuth, Carlsson, and Bender]{svensson2018conformal}
Fredrik Svensson, Natalia Aniceto, Ulf Norinder, Isidro Cortes-Ciriano, Ola
  Spjuth, Lars Carlsson, and Andreas Bender.
\newblock Conformal regression for quantitative structure--activity
  relationship modeling--quantifying prediction uncertainty.
\newblock \emph{Journal of Chemical Information and Modeling}, 58\penalty0
  (5):\penalty0 1132--1140, 2018{\natexlab{b}}.

\bibitem[Tibshirani(2016)]{tibshirani2016conformalR}
Ryan Tibshirani.
\newblock \emph{conformalInference: Tools for conformal inference in
  regression}, 2016.
\newblock \url{https://github.com/ryantibs/conformal}.

\bibitem[Toccaceli et~al.(2017)Toccaceli, Nouretdinov, and
  Gammerman]{toccaceli2017conformal}
Paolo Toccaceli, Ilia Nouretdinov, and Alexander Gammerman.
\newblock Conformal prediction of biological activity of chemical compounds.
\newblock \emph{Annals of Mathematics and Artificial Intelligence}, 81\penalty0
  (1-2):\penalty0 105--123, 2017.

\bibitem[Trench(2013)]{trench1978functions}
William~F. Trench.
\newblock \emph{Introduction to Real Analysis}.
\newblock 2013.
\newblock Relevant material from Supplement 1.

\bibitem[Villani(2008)]{villani2008optimal}
C{\'e}dric Villani.
\newblock \emph{Optimal Transport: Old and New}, volume 338.
\newblock Springer Science \& Business Media, 2008.

\bibitem[Vovk(2012)]{vovk2012conditional}
Vladimir Vovk.
\newblock Conditional validity of inductive conformal predictors.
\newblock \emph{Proceedings of Machine Learning Research}, 25:\penalty0
  475--490, 2012.

\bibitem[Vovk et~al.(2005)Vovk, Gammerman, and Shafer]{vovk2005springer}
Vladimir Vovk, Alex Gammerman, and Glenn Shafer.
\newblock \emph{Algorithmic Learning in a Random World}.
\newblock Springer, 2005.

\bibitem[Wainwright(2019)]{wainwright2019high}
Martin~J. Wainwright.
\newblock \emph{High-Dimensional Statistics: A Non-Asymptotic Viewpoint}.
\newblock Cambridge Series in Statistical and Probabilistic Mathematics, 2019.
\newblock URL
  \url{https://www.stat.berkeley.edu/~mjwain/stat210b/Chap2_TailBounds_Jan22_2015.pdf}.

\bibitem[Wang et~al.(2018)Wang, Wang, and Shi]{wang2018fast}
Di~Wang, Ping Wang, and Junzhi Shi.
\newblock A fast and efficient conformal regressor with regularized extreme
  learning machine.
\newblock \emph{Neurocomputing}, 304:\penalty0 1--11, 2018.

\bibitem[Willems et~al.(1997)Willems, Saunders, Hunt, and
  Schorling]{willems1997prevalence}
James~P Willems, J~Terry Saunders, Dawn~E Hunt, and John~B Schorling.
\newblock Prevalence of coronary heart disease risk factors among rural blacks:
  a community-based study.
\newblock \emph{Southern Medical Journal}, 90\penalty0 (8):\penalty0 814--820,
  1997.

\bibitem[{World Health Organization}(2011)]{world2011use}
{World Health Organization}.
\newblock Use of glycated haemoglobin ({HbA1c}) in diagnosis of diabetes
  mellitus: abbreviated report of a {WHO} consultation.
\newblock Technical report, Geneva: World Health Organization, 2011.

\end{thebibliography}

\end{document}